%% file: 18Oct2026Exact_probability_laws_of_random_inner_products.tex
\documentclass[11pt]{article}
\input{qudit_preamble.tex}

\newcommand{\Dir}{\mathrm{Dir}}
\newcommand{\Beta}{\mathrm{Beta}}
\newcommand{\Var}{\operatorname{Var}}
\newcommand{\Cov}{\operatorname{Cov}}

\begin{document}

%====================================================================================================%
\title{\Large \bf Exact probability laws of random inner products:\\ Dirichlet splines, overlaps, uncertainty, and Bargmann invariants}
%====================================================================================================%

\author{\blue{Lin Zhang}\footnote{E-mail: godyalin@163.com}\\
  {\it\small School of Mathematical Sciences, Hangzhou Dianzi University, Hangzhou 310018, PR~China}}
\date{}%October 09, 2026
\maketitle

\begin{abstract}
Inner products of random matrices and random quantum states---such as orbital inner products, expectation values, overlaps, and cyclic Bargmann invariants---are shown to follow from two underlying mechanisms: the diagonal of an induced random state is Dirichlet distributed, and the expectation value of a random pure state in a rank-one or two-by-two operator is a numerical shadow on an elliptical disk. For states induced by an environment of dimension $K$, the expectation value of any observable is a Curry--Schoenberg spline with each eigenvalue repeated $K$ times, and the vector of moments has an explicit density on a simplex; for a qutrit, the variance law is obtained in finite form. The overlap of two random qubit states attains its mode at its mean for every rotation-invariant ensemble, but for Hilbert--Schmidt qutrits the density at the mean has a strictly negative derivative, so mode and mean separate; for large environments, the centred overlap converges to a variance-gamma law rather than a Gaussian. For two noncommuting qubit observables, the joint law of the two standard deviations is derived for all $K$, and the probability of coming within a small distance of the optimal sum of variances scales as a power of that distance, with a different exponent when the Bloch axes are orthogonal. For three Haar-random qubit states, the exact density of the Bargmann invariant is given in terms of a complete elliptic integral, together with its local behaviour and elementary marginals. In every dimension, the probability that its real part is negative---certifying basis-independent coherence---is computed, along with all mixed moments. For cycles of arbitrary length, the invariant is shown to be driven by an exact three-parameter Markov chain, and the logarithm of its modulus obeys a law of large numbers and a central limit theorem.
\\~\\
\textbf{Keywords:} Random quantum state; Dirichlet distribution; B-spline; HCIZ integral;
numerical shadow; uncertainty relation; Bargmann invariant; elliptic integral; limit theorem.
\end{abstract}

\newpage\setcounter{tocdepth}{2}{\small\tableofcontents}\newpage

%===================================================%
\section{Introduction}\label{sect:intro}
%===================================================%

A random inner product is a scalar of the form $\Inner{\bsX}{\bsY}=\Tr{\bsX\bsY}$ in which one or both
arguments are random. Four instances occur throughout matrix analysis and quantum information:
\begin{eqnarray}\label{eq:four}
\Tr{\bsA\bsU \bsB\bsU^\dagger},\qquad \Tr{\bsA\rho},\qquad \Tr{\rho\tau},\qquad
\Delta_m=\Tr{\psi_1\psi_2\cdots \psi_m},
\end{eqnarray}
where $\bsU$ is Haar distributed on the unitary group $\rU(d)$; $\rho$ and $\tau$ are random density matrices and
$\psi_j\equiv\proj{\psi_j}$ are projections onto independent Haar-random unit vectors on $\bbC^d$. The first is the
orbital inner product whose Laplace transform is the Harish-Chandra--Itzykson--Zuber (HCIZ)
integral \cite{HarishChandra1957,Itzykson1980,McSwiggen2021}. The second is a random expectation
value, studied for random pure states in \cite{CamposVenuti2013,Gallay2012,Dunkl2011}. The third is
the Hilbert--Schmidt overlap of two random states, whose moments enter the analysis of fidelities
and distances between random states \cite{Zyczkowski2005,Puchala2016,Kumar2020,Laha2021,Laha2023}.
The fourth is the cyclic Bargmann invariant \cite{Bargmann1964,Mukunda1993,Simon1993}, now used
as a basis-independent witness of coherence and imaginarity
\cite{Galvao2020,Oszmaniec2024,Wagner2024,Fernandes2024,Li2025,Li2026}.

For each of these quantities the set of attainable values is known or easy to describe: von
Neumann's trace inequality for $\Tr{\bsA\bsU \bsB\bsU^\dagger}$, the numerical range for expectation
values, the uncertainty region for pairs of variances \cite{Zhang2021,Zhang2022}, and the sets of
Bargmann invariants \cite{Zhang2025,Pratapsi2025,Xu2026,Fernandes2024}. A range does not tell
which values are typical, how often a random state comes close to an extreme value, or how either
answer depends on the dimension; for that one needs the probability law. We compute these laws in
closed form, and almost all of them follow from one of two mechanisms (Table~\ref{tab:mechanisms}).

For random density matrices and their moments we rely on
\cite{Hall1998,Zyczkowski2001,Sommers2004,Zyczkowski2011,Collins2006,Collins2016}. Spline formulas for spectral
distributions, obtained from HCIZ-type integrals and the derivative principle, were derived in
\cite{Zhang2019DH,Zhang2020Horn,Zhang2026Marginal}, and densities of other quantities of random states in
\cite{Zhang2017,Zhang2018Spectral,Zhang2019Incompat}. Numerical shadows and restricted numerical ranges were introduced
in \cite{Dunkl2011,Gawron2010,Puchala2012}. Variance-based uncertainty relations go back to
\cite{Robertson1929,Schrodinger1930}, and state-independent bounds and uncertainty regions have been studied in
\cite{Huang2012,Maccone2014,Dammeier2015,Schwonnek2017,Zhang2018,Zhang2021,Zhang2022,Zhang2024}. Reference
\cite{Zhang2026Survey} surveys Bargmann invariants.

\begin{table}[h]
\centering\small
\renewcommand{\arraystretch}{1.25}
\begin{tabular}{p{3.3cm}p{5.0cm}p{6.3cm}}
\hline
Mechanism & Statement & Consequences in this paper\\
\hline
Dirichlet & the diagonal of $\rho\sim\mu_{d,K}$ in any basis is $\op{Dirichlet}(K,\ldots,K)$ (Lemma~\ref{lem:dirichlet}) &
expectation values as splines (Thm~\ref{th:spline}); moment simplex (Thm~\ref{th:simplex}); variance law
(Thm~\ref{th:variance}); overlap of qutrits (Thm~\ref{th:qutritmode})\\
numerical shadow & $\Innerm{\psi}{\bsM}{\psi}$ for Haar $\psi$ and $2\times2$ (or rank-one) $\bsM$ is the
projection of a uniform sphere onto an ellipse (Lemma~\ref{lem:archimedes}) &
qubit overlaps (Thm~\ref{th:qubitmode}); qubit uncertainty laws (Thms~\ref{th:ellipse}--\ref{th:nearopt});
Bargmann invariants (Section~\ref{sect:bargmann})\\
\hline
\end{tabular}
\caption{The two mechanisms behind the exact laws of this paper.}\label{tab:mechanisms}
\end{table}

\subsection{Main results}

Throughout, $\mu_{d,K}$ denotes the law of the reduced state of a Haar-random pure state on
$\bbC^d\ot\bbC^K$; $K=1$ gives pure states and $K=d$ the Hilbert--Schmidt measure
\cite{Zyczkowski2001,Sommers2004}.
\begin{enumerate}
\item \emph{Expectation values and moments} (Section~\ref{sect:expect}). For every Hermitian $\bsA\in\Herm(\bbC^d)$
and every $K$, the density of $\Tr{\bsA\rho}$ is the Curry--Schoenberg B-spline whose knots are the
eigenvalues of $\bsA$, each repeated $K$ times (Theorem~\ref{th:spline}). This covers degenerate spectra
without confluent limits and recovers the pure-state results of
\cite{CamposVenuti2013,Gallay2012,Dunkl2011} at $K=1$. The moment vector
$(\langle \bsA\rangle_\rho,\ldots,\langle \bsA^{d-1}\rangle_\rho)$ has an explicit density on a simplex for
every $K$ (Theorem~\ref{th:simplex}), unifying the pure-state and Hilbert--Schmidt cases of
\cite{Zhang2022,Zhang2021}. For a qutrit we derive the law of the variance $\Var_\rho(\bsA)$ in finite
terms (Theorem~\ref{th:variance}).
\item \emph{Overlaps} (Section~\ref{sect:overlap}). The overlap of two random qubit states has its
mode at its mean $1/2$ for every rotation-invariant ensemble (Theorem~\ref{th:qubitmode}). For
Hilbert--Schmidt qutrits we compute the density and its derivative at the mean $1/3$ exactly; the
derivative is negative, so the mode lies strictly below the mean (Theorem~\ref{th:qutritmode}).
For large environments the centred overlap converges to a variance-gamma law
(Theorem~\ref{th:wishart}).
\item \emph{Uncertainty} (Section~\ref{sect:uncertainty}). For two qubit observables with Bloch
axes at angle $\arccos c$ we give the joint law of $(\Delta_\rho(\bsA),\Delta_\rho(\bsB))$ for all $K$
(Theorem~\ref{th:twostd}), where $\Delta_\rho(*)=\sqrt{\Var_\rho(*)}$ is the standard deviation, and the probability of nearly saturating the optimal bound
$\Var_\rho(\bsA)+\Var_\rho(\bsB)\geqslant 1-|c|$. This probability is $\varepsilon^{K-1/2}$ when $c=0$ and
of exact order $\varepsilon^{K}$ when $c\neq0$ (Theorem~\ref{th:nearopt}): the exponent records whether the
optimal states form a circle or two points.
\item \emph{Bargmann invariants} (Section~\ref{sect:bargmann}). Conditional on all but one state,
$\Delta_m$ is a numerical shadow on an explicit ellipse (Proposition~\ref{prop:conditional}). For
three qubit states, we obtain the joint density of $\Delta_3$ via a complete elliptic integral, its
logarithmic singularity $\frac6\pi\log\frac1{|z|}$ at the origin and its boundary value
(Theorem~\ref{th:b3density}, Proposition~\ref{prop:local}), and the elementary marginals of
the real part $\Re(\Delta_3)$ and the imaginary part $\Im(\Delta_3)$ (Theorem~\ref{th:marginals}). In every dimension $d$, we compute
$$
\Pr\set{\Re(\Delta_3)<0}=\frac12-\frac{\sqrt\pi\Gamma(d)}{4\Gamma(d+\frac12)}
$$
(Theorem~\ref{th:negative}) and all mixed moments of $\Delta_3$ (Theorem~\ref{th:moments}). For
cycles of any length $m$, $\Delta_m$ is the second coordinate of a three-real-parameter Markov chain
(Theorem~\ref{th:markov}), and $\log|\Delta_m|$ satisfies a law of large numbers and a central limit
theorem with explicit constants (Theorem~\ref{th:longcycle}).
\end{enumerate}
The figures compare each closed form with direct Monte Carlo sampling, and Table~\ref{tab:moments}
lists exact and sampled moments of $\Delta_3$.

\subsection{Relation to previous work}

Several results in this paper generalize, unify, or reprove known statements; 
others are new. To avoid overclaiming, we summarize the status of each main 
result here.

The \emph{Dirichlet mechanism} (Lemma~\ref{lem:dirichlet}) is classical: the 
diagonal of an induced random state being Dirichlet distributed is implicit in 
the Ginibre representation of \(\mu_{d,K}\) and is used throughout the 
random-matrix literature 
\cite{Zyczkowski2001,Sommers2004,Zyczkowski2011}. Its aggregation property is 
standard. What is new is not the mechanism itself but its systematic 
exploitation: the observation that it converts \emph{every} expectation value 
into a Curry--Schoenberg spline with repeated knots (Theorem~\ref{th:spline}) 
and \emph{every} moment vector into a Dirichlet law on a simplex 
(Theorem~\ref{th:simplex}). The spline representation for $K=1$ is the 
numerical measure of a Hermitian matrix 
\cite{Gallay2012,Dunkl2011,CamposVenuti2013}; the general-$K$ statement and 
the moment-simplex density are new, but the method is a direct reduction to 
Lemma~\ref{lem:curry}.

The \emph{numerical-shadow mechanism} (Lemma~\ref{lem:archimedes}) is also known: 
the projection of a Haar-random vector onto a $2$D subspace being 
$\Beta(2,d-2)$, and the numerical shadow of a $2\times 2$ 
matrix being the elliptical density \eqref{eq:shadow2}, are due to 
\cite{Dunkl2011} and the references therein. The new content is the 
\emph{assembly}: the qubit uncertainty laws for all $K$ 
(Theorem~\ref{th:twostd}), the exponent dichotomy for near-optimal 
variances (Theorem~\ref{th:nearopt}), the conditional Bargmann laws 
(Proposition~\ref{prop:conditional}), and the exact densities and limit 
theorems for Bargmann invariants 
(Theorems~\ref{th:b3density}--\ref{th:recursion}).

Several results in Sections~\ref{sect:overlap} and~\ref{sect:uncertainty} 
recover or extend the author's earlier work. The qubit overlap law 
(Theorem~\ref{th:qubitmode}) and its corollary \eqref{eq:B1}
are consistent with 
\cite{Zyczkowski2005,Puchala2016,Kumar2020,Laha2021,Laha2023}; the qutrit 
mode separation (Theorem~\ref{th:qutritmode}) and the variance-gamma limit 
(Theorem~\ref{th:wishart}) are new. The joint uncertainty density 
\eqref{eq:U4} recovers the pure-state ($K=1$) and 
Hilbert--Schmidt ($K=2$) cases of \cite{Zhang2021,Zhang2022} and extends 
them to all $K$; the near-optimal probability \eqref{eq:nearopt} is new. 
The Bargmann section is the most novel: Proposition~\ref{prop:conditional} is 
a restatement of known facts about numerical ranges of rank-one matrices 
combined with Lemma~\ref{lem:archimedes} (ii), but 
Theorems~\ref{th:b3density}, \ref{th:marginals}, \ref{th:negative}, 
\ref{th:markov}, and \ref{th:longcycle} are, to our knowledge, new.

Table~\ref{tab:status} records this assessment. "New" means we are not aware 
of a prior statement in this generality; "generalizes" means the result 
extends a known statement to a larger parameter range; "unifies" means the 
result collects two or more known cases under one formula; "reproves" means 
the result is essentially known and the contribution is expository.

\begin{table}[h]
\centering
\scriptsize
\begin{tabular}{p{5.2cm}p{2.2cm}p{3.2cm}p{4.5cm}}
\hline
Result & Status & Prior work & Nature of contribution \\
\hline
Lemma~\ref{lem:dirichlet} (Dirichlet diagonal) & Known & 
\cite{Zyczkowski2001,Sommers2004,Zyczkowski2011} & Standard; aggregation 
property is classical \\
Lemma~\ref{lem:curry} (B-spline) & Known & 
\cite{Curry1966,Micchelli1980,Karlin1986,Carlson1991} & Standard; proof 
included for completeness \\
Lemma~\ref{lem:archimedes} (numerical shadow) & Known & 
\cite{Dunkl2011,Gawron2010,Puchala2012} & Standard; proof included for 
completeness \\
Theorem~\ref{th:orbital} (orbital inner product) & Known & 
\cite{HarishChandra1957,Itzykson1980,McSwiggen2021} & HCIZ inversion; proof 
included for completeness \\
Theorem~\ref{th:spline} (expectation values are splines) & Generalizes & 
\cite{Gallay2012,Dunkl2011,CamposVenuti2013} ($K=1$) & Extends numerical 
measure to arbitrary \(K\); method is reduction to Lemma~\ref{lem:curry} \\
Theorem~\ref{th:simplex} (moment simplex) & Unifies & 
\cite{Zhang2021,Zhang2022} & Collects pure-state and Hilbert--Schmidt cases 
under one Dirichlet density \\
Theorem~\ref{th:variance} (qutrit variance law) & New & --- & Finite-form 
density; no prior statement \\
Theorem~\ref{th:qubitmode} (qubit overlap: mode at mean) & Generalizes & 
\cite{Zyczkowski2005,Puchala2016} & Extends to every rotation-invariant 
ensemble \\
Theorem~\ref{th:qutritmode} (qutrit overlap: mode separates) & New & --- & 
Exact density and derivative at the mean \\
Theorem~\ref{th:wishart} (variance-gamma limit) & New & --- & Non-Gaussian 
limit for large environments \\
Theorem~\ref{th:twostd} (joint uncertainty law) & Generalizes & 
\cite{Zhang2021,Zhang2022} ($K=1,2$) & One formula for all $K$ \\
Theorem~\ref{th:nearopt} (near-optimal exponent) & New & --- & Exponent 
dichotomy $K-\frac12$ vs.\ $K$ \\
Proposition~\ref{prop:conditional} (conditional Bargmann laws) & Reproves & 
\cite{Dunkl2011,Zhang2025} & Numerical-shadow restatement; assembly is new \\
Theorem~\ref{th:b3density} (Bargmann density, $d=2$) & New & --- & Complete 
elliptic integral; local behaviour \\
Theorem~\ref{th:marginals} (marginals) & New & --- & Elementary marginals of 
$\Re(\Delta_3),\Im(\Delta_3)$ \\
Theorem~\ref{th:negative} (negative real part) & New & --- & Exact probability 
in every dimension \\
Theorem~\ref{th:moments} (moments) & New & --- & All mixed moments \\
Theorem~\ref{th:markov} (Markov chain) & New & --- & Three-parameter exact 
sampler \\
Theorem~\ref{th:longcycle} (long cycles) & New & --- & LLN and CLT for 
$\log|\Delta_m|$ \\
Theorem~\ref{th:recursion} (density recursion) & New & --- & Exact integral 
representation for all $n$ \\
\hline
\end{tabular}
\caption{\textbf{Status of the main results.} "New" means we are not aware of a prior 
statement in this generality; "generalizes" means the result extends a known 
statement to a larger parameter range; "unifies" means the result collects 
two or more known cases under one formula; "reproves" means the result is 
essentially known and the contribution is expository.}
\label{tab:status}
\end{table}

The table is deliberately conservative. In several cases 
(e.g., Theorem~\ref{th:spline}, Theorem~\ref{th:twostd}) the 
\emph{statement} is new but the \emph{method} is a direct application of a 
known mechanism; we have marked these as "generalizes" rather than "new" 
to reflect that the conceptual advance is modest. Conversely, in the Bargmann 
section the methods are also new, and those results are marked "new" without 
qualification.

The paper is organized as follows. Section~\ref{sect:prelim} fixes the ensembles and proves the two
mechanisms. Sections~\ref{sect:expect}--\ref{sect:bargmann} treat the four classes of random inner
products in turn. Section~\ref{sect:conclusion} summarizes the results and lists open problems. The
appendices contain computations used in the proofs.

\begin{figure}[t]
\centering
\includegraphics{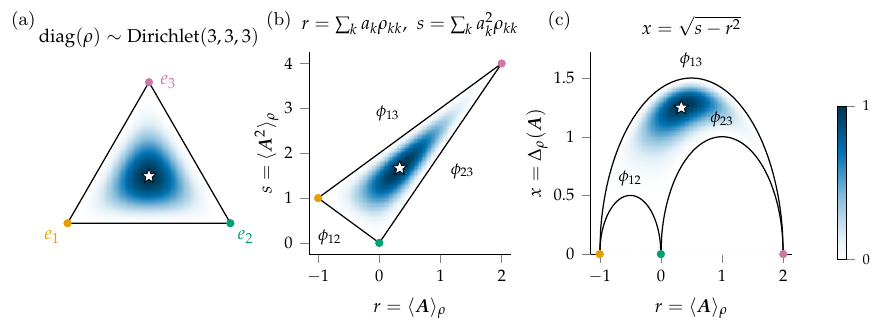}
\caption{The Dirichlet mechanism for $\bsA=\diag(-1,0,2)$ and a Hilbert--Schmidt qutrit
$\rho\sim\mu_{3,3}$. (a) The diagonal of $\rho$ is $\op{Dirichlet}(3,3,3)$ on the simplex. (b) The affine map
$(\rho_{kk})\mapsto(\langle \bsA\rangle_\rho,\langle \bsA^2\rangle_\rho)$ carries it onto the triangle with
vertices $(a_k,a_k^2)$; the density is the same function up to the constant Jacobian
$|V_3(\bsa)|$ (Theorem~\ref{th:simplex}). (c) The substitution $s=r^2+x^2$ turns the triangle into the
curved region of attainable (mean, standard deviation) pairs; each edge $\phi_{ij}$ of (b) becomes the
arc with the same label. Colour: density divided by its maximum. Coloured dots: eigenstates
$\bse_k$; star: the maximally mixed state, tracked through the three panels.}\label{fig:mechanism}
\end{figure}

%===================================================%
\section{Ensembles and the two mechanisms}\label{sect:prelim}
%===================================================%

\subsection{Induced random states}

Let $\Herm(\bbC^d)$ be the real space of $d\times d$ Hermitian matrices with
$\Inner{\bsX}{\bsY}=\Tr{\bsX\bsY}$, and let $\rD(\bbC^d)$ be the set of density matrices. For integers
$d\geqslant1$ and $K\geqslant1$, let $\bsG$ be a $d\times K$ matrix with independent standard complex
Gaussian entries and set
\begin{eqnarray}\label{eq:induced}
\rho=\frac{\bsG\bsG^\dagger}{\Tr{\bsG\bsG^\dagger}} \in\Herm(\bbC^d).
\end{eqnarray}
Its law $\mu_{d,K}$ is the law of $\ptr{\bbC^K}{\proj{\Psi}}$ for a Haar-random unit vector
$\ket{\Psi}\in\bbC^d\ot\bbC^K$ \cite{Zyczkowski2001,Zyczkowski2011}. It is invariant under
$\rho\mapsto \bsV\rho\bsV^\dagger$ for every unitary $\bsV$. For $K\geqslant d$, its eigenvalues
$\lambda_1<\cdots<\lambda_d$ have density proportional to
$[V_d(\bdlambda)]^2\prod^d_{j=1}\lambda_j^{K-d}$ on the ordered simplex, where
$$
V_d(\bsx)=\prod_{1\leqslant i<j\leqslant d}(x_j-x_i).
$$
For $K=d$ (the Hilbert--Schmidt measure), the normalized density on
$\Omega_d=\set{\bdlambda=(\lambda_1,\ldots,\lambda_d):0<\lambda_1<\cdots<\lambda_d,\ \sum^d_{j=1}\lambda_j=1}$ is
\begin{eqnarray}\label{eq:hsorder}
\dif\nu_d(\bdlambda)=\gamma_d[V_d(\bdlambda)]^2\delta\Pa{1-\sum^d_{j=1}\lambda_j}\prod^d_{j=1}\dif\lambda_j,\quad
\gamma_d=\frac{d!\,\Gamma(d^2)}{\prod_{j=1}^{d}\Gamma(j)\Gamma(j+1)},
\end{eqnarray}
so that $\gamma_2=6$ and $\gamma_3=10080$. Integrating over the unordered simplex requires $\frac{\gamma_d}{d!}$ instead.
For $K<d$, the state has rank $K$ almost surely.

We use the following first and second moments. By invariance,
$\bbE[\rho]=\frac{\I_d}d$ and $\bbE[\rho\ot\rho]=\alpha\I_d\ot\I_d+\beta\bsF$ with $\bsF$ the swap; the two constants follow
from $\bbE[\Tr{\rho}^2]=1$ and $\bbE[\Tr{\rho^2}]=\frac{d+K}{dK+1}$ \cite{Zyczkowski2001,Sommers2004}:
\begin{eqnarray}\label{eq:secondmoment}
\bbE[\rho\ot\rho]=\frac{K\I_d\ot\I_d+\bsF}{d(dK+1)}.
\end{eqnarray}

\subsection{The Dirichlet mechanism}

\begin{lem}\label{lem:dirichlet}
Let $\rho\sim\mu_{d,K}$ and let $\set{\ket{\bse_i}:i=1,\ldots,d}$ be any orthonormal basis. Then
$(\rho_{11},\ldots,\rho_{dd})\sim\op{Dirichlet}(K,\ldots,K)$, with density
$$
\frac{\Gamma(dK)}{\Gamma(K)^d}\prod_{i=1}^d x_i^{K-1}
$$
on the simplex $\bbP_{d-1}=\set{\bsx=(x_1,\ldots,x_d)\in\bbR^d_{\geqslant0}:\sum^d_{i=1}x_i=1}$. If $I_1,\ldots,I_\ell$ partition\
$\set{1,\ldots,d}$ with $|I_j|=m_j$, then $\Pa{\sum_{i\in I_1}\rho_{ii},\dots,\sum_{i\in I_\ell}\rho_{ii}}\sim\op{Dirichlet}(Km_1,\ldots,Km_\ell)$.
\end{lem}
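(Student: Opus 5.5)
We work with the Ginibre representation \eqref{eq:induced} and first reduce to the standard basis. Let $\bsW$ be the unitary whose columns are the vectors $\ket{\bse_i}$. Then $\bsW^\dagger\rho\bsW$ has the same law as $\rho$, since $\mu_{d,K}$ is unitarily invariant. Equivalently, $\bsW^\dagger\bsG$ is again a $d\times K$ matrix of i.i.d.\ standard complex Gaussians, because the Gaussian law is invariant under left multiplication by a unitary. So it suffices to treat $\bse_i$ equal to the standard basis vectors. In that case
$$
\rho_{ii}=\frac{\sum_{k=1}^K|G_{ik}|^2}{\sum_{j=1}^d\sum_{k=1}^K|G_{jk}|^2}.
$$

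Write $g_i=\sum_{k=1}^K|G_{ik}|^2$. Each $|G_{ik}|^2$ is exponential with mean $1$, under the normalization $\bbE|G_{ik}|^2=1$; any other common scale cancels in the ratio. The entries are independent, so $g_1,\ldots,g_d$ are independent $\mathrm{Gamma}(K,1)$ variables. By the classical gamma-to-Dirichlet representation, $(g_1,\ldots,g_d)/\sum_j g_j\sim\op{Dirichlet}(K,\ldots,K)$, independently of $\sum_j g_j$.

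For completeness we verify this representation directly. Change variables from $(g_1,\ldots,g_d)$ to $(s,x_1,\ldots,x_{d-1})$ with $s=\sum_j g_j$ and $x_i=g_i/s$; the Jacobian is $s^{d-1}$. The joint density $\prod_i g_i^{K-1}e^{-g_i}/\Gamma(K)$ then factors as
$$
s^{dK-1}e^{-s}\cdot\frac{\prod_i x_i^{K-1}}{\Gamma(K)^d}.
$$
Integrating out $s$ gives the normalizing constant $\Gamma(dK)/\Gamma(K)^d$ stated in the lemma.

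For the aggregation statement, note that
$$
\sum_{i\in I_j}\rho_{ii}=\frac{\sum_{i\in I_j}g_i}{\sum_i g_i}.
$$
Sums of independent gammas with a common scale are gamma with added shape, so $h_j=\sum_{i\in I_j}g_i\sim\mathrm{Gamma}(Km_j,1)$. The $h_j$ are independent because the blocks $I_j$ are disjoint. Applying the same representation to $(h_1,\ldots,h_\ell)$ gives $\op{Dirichlet}(Km_1,\ldots,Km_\ell)$.

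There is no real obstacle. The only points needing care are the basis reduction, which relies on unitary invariance of the Gaussian matrix rather than merely of $\rho$, and the Jacobian in the gamma-to-Dirichlet change of variables. For $K<d$ nothing changes, since the diagonal argument never uses the spectrum.
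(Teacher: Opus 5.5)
Your proof is correct and follows the same route as the paper: the Ginibre representation, independent $\mathrm{Gamma}(K,1)$ row sums normalized to a Dirichlet vector, and aggregation by adding gamma shapes over disjoint blocks. You also justify the reduction from an arbitrary orthonormal basis to the standard one, via unitary invariance of $\bsG$, and you verify the gamma-to-Dirichlet Jacobian; the paper leaves both of these steps implicit.
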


\begin{proof}
\begin{enumerate}[(1)]
\item Let $\ket{\Psi}\in\bbC^d\ot\bbC^K$ be Haar-distributed, and define
the reduced state $\rho=\ptr{\bbC^K}{\proj{\Psi}}$. Equivalently,
$\rho=\frac{\bsG\bsG^\dagger}{\Tr{\bsG\bsG^\dagger}}$, where $\bsG$
is an $d\times K$ complex Ginibre matrix. The diagonal entries are
$$
\rho_{ii}=\frac{\sum^K_{j=1}\abs{G_{ij}}^2}{\sum^n_{\ell=1}\sum^k_{j=1}\abs{G_{\ell
j}}^2}.
$$
Define $Y_i=\sum^K_{j=1}\abs{G_{ij}}^2$. The random variables
$Y_1,\ldots,Y_d$ are independent and satisfy $Y_i\sim
\op{Gamma}(K,1)$, whose density function is $\frac1{\Gamma(K)}z^{K-1}e^{-z}\chi_{\set{z>0}}$. The normalized vector
$$
\Pa{\frac{Y_1}{Y_1+\cdots+Y_d},\ldots,\frac{Y_n}{Y_1+\cdots+Y_d}}\sim\op{Dirichlet}(K,\ldots,K).
$$
Thus the density on the probability simplex $\bbP_{K-1}$ is
$$
\frac{\Gamma(dK)}{\Gamma(K)^d}\prod^d_{j=1}x^{K-1}_j.
$$
For the Hilbert-Schmidt ensemble, $K=d$, so
$$
(\rho_{11},\ldots,\rho_{dd})\sim\op{Dirichlet}(K,\ldots,K)=\op{Dirichlet}(d,\ldots,d),
$$
which completes the proof.
\item Now let $I_1,\dots,I_\ell$ be a partition of $\set{1,\dots,d}$ with $|I_j|=m_j$. Define
$Z_j:=\sum_{i\in I_j} Y_i$. Since the $Y_i$ are independent with $Y_i\sim \op{Gamma}(K,1)$, each $Z_j$ is a sum of $m_j$ independent $\op{Gamma}(K,1)$ variables, hence
$Z_j\sim \operatorname{Gamma}(Km_j,1)$. Moreover, $Z_1,\dots,Z_\ell$ are independent because the index sets $I_j$ are disjoint. Also,
$\sum_{j=1}^\ell Z_j=\sum_{i=1}^d Y_i=T$. Therefore
$$
\Pa{\sum_{i\in I_1}\rho_{ii},\dots,\sum_{i\in I_\ell}\rho_{ii}}
=
\Pa{\frac{Z_1}{T},\dots,\frac{Z_\ell}{T}}.
$$
By the same Dirichlet fact, applied to the independent gamma variables $Z_1,\dots,Z_\ell$ with parameters $Km_1,\dots,Km_\ell$,
$$
\Pa{\sum_{i\in I_1}\rho_{ii},\dots,\sum_{i\in I_\ell}\rho_{ii}}
\sim
\op{Dirichlet}(Km_1,\dots,Km_\ell).
$$
\end{enumerate}
This proves the aggregation property. 
\end{proof}

A different Dirichlet structure was recently identified in \cite{Kirby2026}: $\mu_{d,K}$ is the law of a
$\op{Dirichlet}(d,\ldots,d)$-weighted mixture of $K$ independent Haar-random pure states. Lemma~\ref{lem:dirichlet}
concerns the diagonal of $\rho$ rather than its decomposition into pure states, and it is the statement
used below.

\paragraph{Splines.} For $q\geqslant0$, write $x_+^q=x^q\chi_{\set{x>0}}$, where $\chi_S$ is the indicator function of the set $S$ throughout the paper. For a function $f$ and
real numbers $t_1,\ldots,t_N$ (repetitions allowed), $[t_1,\ldots,t_N]f$ denotes the \emph{divided
difference} (see Appendix~\ref{app:divdiff}), with repeated arguments interpreted through derivatives. The following identity is the
probabilistic form of the Curry--Schoenberg B-spline \cite{Curry1966,Micchelli1980,Karlin1986,Carlson1991}.

\begin{lem}\label{lem:curry}
Let $\bsw\sim\op{Dirichlet}(1,\ldots,1)$ on $\bbP_{N-1}$ with $N\geqslant2$, and let $t_1,\ldots,t_N$ be
real numbers that are not all equal. Then $\sum^N_{j=1}w_jt_j$ has the density
$$
M(r|t_1,\ldots,t_N)=(N-1)[t_1,\ldots,t_N](x-r)^{N-2}_+.
$$
It is a piecewise polynomial of degree $N-2$ supported on $[\min_jt_j,\max_jt_j]$, and it is
$C^{N-2-m}$ at a knot of multiplicity $m$ (discontinuous if $m=N-1$).
\end{lem}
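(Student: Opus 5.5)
The plan is to pass to the Laplace (or Fourier) transform and compare both sides; the divided-difference formula will drop out of the transform of the uniform law on the simplex. We first treat distinct knots $t_1,\ldots,t_N$, then extend to repeated knots by continuity.

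\textbf{Distinct knots.} Since $\bsw\sim\op{Dirichlet}(1,\ldots,1)$ is uniform on $\bbP_{N-1}$ with density $(N-1)!$, the variable $X=\sum_j w_jt_j$ has
$$
\bbE[e^{sX}]=(N-1)!\int_{\bbP_{N-1}}e^{s\sum_j w_jt_j}\,\dif\bsw .
$$
The Hermite--Genocchi formula from Appendix~\ref{app:divdiff} gives
$$
\int_{\bbP_{N-1}}f^{(N-1)}\Pa{\sum_j w_jt_j}\dif\bsw=[t_1,\ldots,t_N]f .
$$
With $f(x)=e^{sx}$ this yields
$$
\bbE[e^{sX}]=(N-1)!\,s^{-(N-1)}[t_1,\ldots,t_N]e^{sx}.
$$
On the other side, for fixed $x$ we have
$$
\int_\bbR e^{sr}(x-r)_+^{N-2}\,\dif r=e^{sx}\,\frac{(N-2)!}{s^{N-1}},\qquad s>0 .
$$
We apply $[t_1,\ldots,t_N]$ in the variable $x$ and interchange it with the $r$-integral, which is legitimate because the divided difference is a finite linear combination. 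This gives
$$
\int e^{sr}\,(N-1)[t_1,\ldots,t_N](x-r)_+^{N-2}\,\dif r=(N-1)!\,s^{-(N-1)}[t_1,\ldots,t_N]e^{sx}.
$$
This matches the transform of $X$. The support of $X$ is bounded, so its transform is entire and determines the law, and the density claim follows.

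An alternative route avoids transforms entirely. Apply Hermite--Genocchi directly to $f(x)=(x-r)_+^{N-1}/(N-1)!$, giving
$$
\Pr\set{X>r}=\int_{\bbP_{N-1}}(N-1)!\,\chi\set{\textstyle\sum_j w_jt_j>r}\,\dif\bsw ,
$$
and then differentiate in $r$. Either way, one must check that the unbounded function $(x-r)_+^{N-2}$ is handled correctly, which is fine for $N\geqslant2$.

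\textbf{Repeated knots.} Here $[t_1,\ldots,t_N]$ involves derivatives in $x$ of $(x-r)_+^{N-2}$. Both sides are continuous in the knot vector: the left side in distribution, and the right side for a.e.\ $r$ as long as the knots are not all equal. We perturb to distinct knots and pass to the limit in the distributional identity. The hypothesis that the knots are not all equal is what keeps the limit an honest function rather than a Dirac mass.

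\textbf{Structural claims.} These are read off from the formula. Each $(x-r)_+^{N-2}$ is piecewise polynomial in $r$ of degree $N-2$, so the density is too. Outside $[\min t_j,\max t_j]$ the density vanishes: to the right, every term is zero; to the left, every term is a polynomial of degree $N-2<N-1$ in $x$, and the $(N-1)$-st divided difference of such a polynomial is zero.

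At a knot $\tau$ of multiplicity $m$, the divided difference uses $\partial_x^{j}(x-r)_+^{N-2}$ at $x=\tau$ for $j\leqslant m-1$. This is a multiple of $(\tau-r)_+^{N-2-j}$, whose least smooth term, $j=m-1$, is $C^{N-2-m}$ in $r$. When $m=N-1$ that term is a jump, so the density is discontinuous there.

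\textbf{Main obstacle.} The step I expect to be delicate is the exact smoothness order. It needs the coefficient of the top derivative term to be nonzero. I would verify this from the Leibniz/Newton expansion of divided differences with a repeated node: the coefficient of $f^{(m-1)}(\tau)$ is $1/(m-1)!$ times a product of nonzero factors $(\tau-t_i)^{-1}$ over the other knots $t_i\neq\tau$. This product is nonzero precisely because the knots are not all equal, so at least one other knot exists.
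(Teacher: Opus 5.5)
Your proof is correct, but it runs in the opposite direction from the paper's. The paper computes the density directly. Via the coarea formula the density becomes a slice volume of the simplex. The slice formula $\frac{1}{(N-2)!}\sum_i (t_i-s)_+^{N-2}/\prod_{j\neq i}(t_i-t_j)$ is then proved by induction on $N$: peel off $w_N=u$, $w_i=(1-u)v_i$, and integrate out $u$. This is a distinct-knot computation, and repeated knots enter only through the derivative convention in the regularity discussion.

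You instead take the Hermite--Genocchi formula as the key input and read it as $\bbE[f^{(N-1)}(X)]=(N-1)!\,[t_1,\ldots,t_N]f$. You then identify the law by testing against exponentials. In your alternative you test against $f=(x-r)_+^{N-1}/(N-1)!$, which gives $\Pr\set{X>r}=[t_1,\ldots,t_N](x-r)_+^{N-1}$ and the density after differentiating in $r$. This is shorter and conceptually transparent: granting Hermite--Genocchi, the lemma is just the Peano-kernel form of the divided difference. The price is that the real content is outsourced to Hermite--Genocchi, whereas the paper's induction is an explicit computation (though its last step also appeals to the integral representation of divided differences).

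Your route also handles confluent knots more cleanly. Hermite--Genocchi and the identity $\int e^{sr}\partial_x^j(x-r)_+^{N-2}\dif r=\partial_x^j\big(e^{sx}(N-2)!\,s^{1-N}\big)$ for $j\leqslant m-1\leqslant N-2$ hold verbatim with repeated knots, so you can drop the perturbation step entirely. If you keep the perturbation, a.e. convergence of densities together with convergence in law does not by itself identify the limit. You need a domination, for instance the uniform bound $0\leqslant M\leqslant (N-1)/(\max_jt_j-\min_jt_j)$. This domination is what the hypothesis that the knots are not all equal secures, not the nonvanishing of a coefficient.

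On regularity you match the paper and go slightly further. The paper asserts discontinuity at a knot of multiplicity $N-1$ without checking that the top-derivative term survives. You identify its coefficient as $\frac{1}{(m-1)!}\prod_{t_i\neq\tau}(\tau-t_i)^{-1}$, which is never zero (it would be $\frac1{(N-1)!}$ even if all knots coincided).
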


\begin{proof}
Let $\bsw=(w_1,\ldots,w_N)\sim \op{Dirichlet}(1,\ldots,1)$ on the simplex
$$
\bbP_{N-1}=\Set{\bsw\in\bbR_{\geqslant0}^N:\sum_{i=1}^N w_i=1}.
$$
Its density with respect to the $(N-1)$-dimensional Lebesgue measure is the constant
$\frac{\Gamma(N)}{\Gamma(1)^N}=(N-1)!$. Define the linear statistic
$S:=\sum_{j=1}^N w_j t_j$. We want the density of $S$.
\begin{enumerate}[(1)]
\item \textbf{Coarea representation.} Let $L(\bsw)=\sum_{j=1}^N w_j t_j$. By the coarea formula, the density of $S$ under the uniform measure on $\bbP_{N-1}$ (with total mass $(N-1)!$) is
$$
f_S(s)=(N-1)!\int_{L^{-1}(s)}\frac{1}{|\nabla L|}\dif H^{N-2},
$$
where $\dif H^{N-2}$ is the $(N-2)$-dimensional Hausdorff measure on the slice $L^{-1}(s)$. A standard formula for the volume of a slice of a simplex gives
$$
\int_{L^{-1}(s)}\frac{1}{|\nabla L|}\dif H^{N-2}=\frac{1}{(N-2)!}\sum_{i=1}^N\frac{(t_i-s)_+^{N-2}}{\prod_{j\neq i}(t_i-t_j)}.
$$
Here $(x)_+=\max(x,0)$. Therefore
$$
f_S(s)=(N-1)\sum_{i=1}^N\frac{(t_i-s)_+^{N-2}}{\prod_{j\ne i}(t_i-t_j)}.
$$
By the definition of the divided difference,
$$
[t_1,\ldots,t_N]g=\sum_{i=1}^N\frac{g(t_i)}{\prod_{j\ne i}(t_i-t_j)},
$$
with $g(x)=(x-s)_+^{N-2}$, we obtain $f_S(s)=(N-1)[t_1,\ldots,t_N](x-s)_+^{N-2}$. This is exactly the stated density $M(s|t_1,\ldots,t_N)=(N-1)[t_1,\ldots,t_N](x-s)_+^{N-2}$.
\item \textbf{Proof of the slice formula.} For completeness, we sketch the induction proof of the slice formula. Let
$$
V_N(s):=\int_{\bbP_{N-1}}
\delta\Pa{s-\sum_{j=1}^N w_jt_j}\delta\Pa{1-\sum^N_{j=1}w_j}[\dif \bsw].
$$
We claim
$$
V_N(s)
=
\frac{1}{(N-2)!}
\sum_{i=1}^N
\frac{(t_i-s)_+^{N-2}}
{\prod_{j\ne i}(t_i-t_j)}.
$$
\begin{itemize}
\item[(a)] \emph{Base $N=2$.}
The simplex is the segment $\Set{\bsw=(w_1,w_2)\in\bbR^2_{\geqslant0}:w_1+w_2=1}$. Then
$S=t_1w_1+t_2(1-w_1)=t_2+(t_1-t_2)w_1$. Thus $S$ is uniform on the interval between $t_1$ and $t_2$, so its density is
$\frac{1}{|t_1-t_2|}$ on $[\min(t_1,t_2),\max(t_1,t_2)]$. The right-hand side for $N=2$ is
$$
\frac{(t_1-s)_+^0}{t_1-t_2} + \frac{(t_2-s)_+^0}{t_2-t_1},
$$
which equals $\frac1{|t_1-t_2|}$ on the interval and $0$ outside. Hence the formula holds.
\item[(b)] \emph{Induction step.}  
Assume the formula holds for $N-1$. Write $w_N=u$ and $w_i=(1-u)v_i$ for $i=1,\ldots,N-1$, where $\bsv=(v_1,\ldots,v_{N-1})\in\bbP_{N-2}$. The Lebesgue measure transforms as
\begin{eqnarray}\label{eq:cvf}
\delta\Pa{1-\sum^N_{j=1} w_j}[\dif \bsw]=(1-u)^{N-2}\dif u\times \delta\Pa{1-\sum^{N-1}_{j=1} v_j}[\dif \bsv].
\end{eqnarray}
Indeed, the Jacobian matrix of the map $(v_1,\ldots,v_{N-1},u)\mapsto (w_1,\ldots,w_N)$ is $J=(1-u)^{N-1}$. Therefore $[\dif \bsw]=\dif w_1\cdots\dif w_N=(1-u)^{N-1}\dif u \dif v_1\cdots\dif v_{N-1}=(1-u)^{N-1}\dif u[\dif\bsv]$ for $\bsv=(v_1,\ldots,v_{N-1})$. Now compute the argument of the delta: 
\begin{eqnarray*}
1-\sum^N_{j=1}w_j &=& 1-\Pa{u + \sum^{N-1}_{j=1}(1-u)v_j} =(1-u)\Pa{1-\sum^{N-1}_{j=1}v_j}\implies\\
\delta\Pa{1-\sum^N_{j=1}w_j} &=& \frac1{1-u}\delta\Pa{1-\sum^{N-1}_{j=1}v_j},
\end{eqnarray*}
hence the desired reuslt \eqref{eq:cvf}. Then
$S=ut_N+(1-u)\sum_{i=1}^{N-1}v_it_i$. Therefore
$$
V_N(s)=\int^1_0\dif u(1-u)^{N-2}\int_{\bbP_{N-2}}\delta\Pa{s-ut_N-(1-u)\sum_{i=1}^{N-1}v_it_i}
\delta\Pa{1-\sum^{N-1}_{j=1} v_j}[\dif\bsv].
$$
Let
$s'=\frac{s-ut_N}{1-u}$. By the induction hypothesis,
$$
\int_{\bbP_{N-2}}
\delta\!\left(s'-\sum_{i=1}^{N-1}v_it_i\right)\delta\Pa{1-\sum^{N-1}_{j=1} v_j}[\dif \bsv]=\frac{1}{(N-3)!}
\sum_{i=1}^{N-1}
\frac{(t_i-s')_+^{N-3}}
{\prod_{j\neq i,\ j\leqslant N-1}(t_i-t_j)}.
$$
Substituting $s'$ gives
$$
t_i-s'=\frac{t_i-s+u(t_N-t_i)}{1-u}.
$$
Hence
$$
V_N(s)=\frac{1}{(N-3)!}\sum_{i=1}^{N-1}\frac{1}{\prod_{j\neq i,\ j\leqslant N-1}(t_i-t_j)}\int_0^1\Pa{t_i-s+u(t_N-t_i)}_+^{N-3}\dif u.
$$
The integral is the standard integral representation of the divided difference of order $N-1$. Evaluating it yields
$$
V_N(s)
=
\frac{1}{(N-2)!}
\sum_{i=1}^N
\frac{(t_i-s)_+^{N-2}}
{\prod_{j\ne i}(t_i-t_j)}.
$$
This completes the induction.
\end{itemize}
\item \textbf{Support and regularity.} Since $S=\sum^N_{j=1} w_jt_j$ is a convex combination of the numbers $t_1,\ldots,t_N$, its support is contained in
$[\min_j t_j,\ \max_j t_j]$. The density is a linear combination of the functions $(t_i-s)_+^{N-2}$, which are polynomials of degree $N-2$ on each interval between consecutive knots $t_i$. Hence it is piecewise polynomial of degree $N-2$.

At a knot of multiplicity $m$, i.e. when $m$ of the $t_i$ coincide, the divided difference involves derivatives up to order $m-1$ of the function $(x-s)_+^{N-2}$ at that knot. Since $(x-s)_+^{N-2}$ is $C^{N-3}$ at $x=s$, the resulting density is $C^{N-2-m}$ at that knot. In particular, if $m=N-1$, then $N-2-m=-1$, so the density is discontinuous. This proves the regularity statement.
\end{enumerate}
Thus the lemma is established. 
\end{proof}

\subsection{The numerical-shadow mechanism}

For a matrix $\bsZ\in M_d(\bbC)$ and Haar-random $\ket{\psi}\in\bbC^d$, the law of
$\Innerm{\psi}{\bsZ}{\psi}$ is the numerical shadow of $\bsM$ \cite{Dunkl2011}; it is supported on the
numerical range $W(\bsZ)$ \cite{Kippenhahn2008,Gustafson1997}. For $d=2$, it reduces to Archimedes'
theorem on the sphere. We state the version used for qubits with an arbitrary environment, together
with the reduction from $d$ to $2$ dimensions.

\begin{lem}\label{lem:archimedes}
\begin{enumerate}[(i)]
\item Let $\rho\sim\mu_{2,K}$ have Bloch vector $\bsr$, i.e.\ $\rho=\frac12(\I_2+\bsr\cdot\bdsigma)$.
Then $\bsr$ is rotation invariant; it is uniform on the unit sphere if $K=1$, and has density
proportional to $(1-|\bsr|^2)^{K-2}$ on the unit ball if $K\geqslant2$. Its orthogonal projection
onto any plane has the density
\begin{eqnarray}\label{eq:gK}
g_K(\xi)=\frac{K-\frac12}{\pi}(1-|\xi|^2)^{K-\frac32},\qquad |\xi|<1 .
\end{eqnarray}
\item Let $\ket{\psi}$ be Haar distributed in $\bbC^d$ for $d\geqslant3$, and let $\Pi$ be the
orthogonal projection onto a two-dimensional subspace $S$. Then $T=\norm{\Pi\psi}^2\sim\Beta(2,d-2)$,
and $\Pi\psi/\sqrt T$ is Haar distributed in $S$ and independent of $T$.
\end{enumerate}
\end{lem}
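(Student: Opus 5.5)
For part (i) I would work from the Ginibre representation \eqref{eq:induced}. Rotation invariance of $\bsr$ is immediate: $\rho\mapsto\bsV\rho\bsV^\dagger$ acts on the Bloch vector by the $SO(3)$ rotation associated with $\bsV$, and $\mu_{2,K}$ is unitarily invariant. Hence the law of $\bsr$ is determined by the law of $|\bsr|$.

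For $K=1$, $\rho$ is pure and $|\bsr|=1$, so $\bsr$ is uniform on $S^2$. For $K\geqslant2$ I would use the eigenvalue density proportional to $(\lambda_2-\lambda_1)^2(\lambda_1\lambda_2)^{K-2}$. With $\lambda_{1,2}=\frac{1\mp r}{2}$ this becomes proportional to $r^2(1-r^2)^{K-2}$ in $r=|\bsr|$. Dividing by the spherical area factor $r^2$ gives the density $\propto(1-|\bsr|^2)^{K-2}$ on the ball. The case $K=1$ can be seen as the degenerate limit of this density.

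For the projection onto a plane, take coordinates $\bsr=(\xi,z)$ with $\xi\in\bbR^2$. For $K\geqslant2$, integrate out $z$ over $|z|<\sqrt{1-|\xi|^2}$. Substituting $z=\sqrt{1-|\xi|^2}\,t$ gives
$$
\int(1-|\xi|^2-z^2)^{K-2}\dif z=(1-|\xi|^2)^{K-3/2}\int_{-1}^1(1-t^2)^{K-2}\dif t,
$$
which produces the profile $(1-|\xi|^2)^{K-3/2}$. For $K=1$, project the uniform measure on the sphere: the two hemispheres over the disk each have area element $\dif\xi/\sqrt{1-|\xi|^2}$, so the density is $\frac{1}{2\pi}(1-|\xi|^2)^{-1/2}$. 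This matches \eqref{eq:gK} at $K=1$. The normalization in general follows from
$$
\int_{|\xi|<1}(1-|\xi|^2)^{a}\dif\xi=\frac{\pi}{a+1}, \qquad a=K-\tfrac32,
$$
which gives the constant $\frac{K-1/2}{\pi}$. This constant is the main thing to check carefully.

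For part (ii) I would write a Haar vector as $\psi=\bsg/\norm{\bsg}$ with $\bsg$ a standard complex Gaussian in $\bbC^d$, and split $\bsg=\bsg_S+\bsg_{S^\perp}$. These two pieces are independent and rotation invariant in their respective subspaces. Then $\norm{\bsg_S}^2\sim\op{Gamma}(2,1)$ and $\norm{\bsg_{S^\perp}}^2\sim\op{Gamma}(d-2,1)$, independent. This yields $T=\norm{\bsg_S}^2/\norm{\bsg}^2\sim\Beta(2,d-2)$, as in the aggregation part of Lemma~\ref{lem:dirichlet}.

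Moreover, $\Pi\psi/\sqrt T=\bsg_S/\norm{\bsg_S}$ is Haar on the unit sphere of $S$. By the polar decomposition of the Gaussian it is independent of $\norm{\bsg_S}$, and it is also independent of $\bsg_{S^\perp}$, hence independent of $T$. Neither part presents a real obstacle; the only care needed is the $K=1$ limit and the normalization constant.
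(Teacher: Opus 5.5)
Your proposal is correct and follows essentially the same route as the paper. For (i) you go from the induced eigenvalue density $(\lambda_2-\lambda_1)^2(\lambda_1\lambda_2)^{K-2}$ to the radial law $r^2(1-r^2)^{K-2}$, divide by the area factor, integrate out the third coordinate, and normalize to $\frac{K-1/2}{\pi}$, treating $K=1$ directly on the sphere; for (ii) you split the Gaussian as $\bsg=\bsg_S+\bsg_{S^\perp}$ with independent $\op{Gamma}(2,1)$ and $\op{Gamma}(d-2,1)$ squared norms, exactly as the paper does (your hemisphere computation of $\frac{1}{2\pi}(1-|\xi|^2)^{-1/2}$ for $K=1$ is, if anything, cleaner than the paper's appeal to Archimedes' theorem).
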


\begin{proof}
Let us prove the item (i) at first.
\begin{itemize}
\item \textbf{Step 1: Joint eigenvalue density of $\rho\sim \mu_{2,K}$.} For $\rho\sim\mu_{2,K}$, the eigenvalues $\lambda_1\leqslant\lambda_2$ have joint density proportional to $(\lambda_1\lambda_2)^{K-2}(\lambda_2-\lambda_1)^2$ on the simplex $\lambda_1+\lambda_2=1$, where $\lambda_1,\lambda_2\geqslant0$. This is the standard induced-measure result (\cite{Zyczkowski2001,Sommers2004}); it follows from the $d\times K$ Ginibre representation $\rho=\frac{\bsG\bsG^\dagger}{\Tr{\bsG\bsG^\dagger}}$ and the Jacobian of the diagonalization.
\item \textbf{Step 2: Change of variables to the Bloch radius.} Write the eigenvalues as
$\lambda_1=\frac{1-r}2,\lambda_2=\frac{1+r}2$, where $r\in[0,1]$. Therefore the density of $r=\abs{\bsr}$ is propportional to $(\lambda_1\lambda_2)^{K-2}(\lambda_2-\lambda_1)^2=\Pa{\frac{1-r^2}4}^{K-2}r^2$. Collecting powers:
$$
p(r)\propto r^2(1-r^2)^{K-2}.
$$
Now, the Bloch vector $\bsr$ has radial density $p(r)$ and is rotationally invariant (by unitary invariance of $\mu_{2,K}$). The volume element in $\bbR^3$ is $4\pi r^2\dif r$. Therefore the density of $\bsr$ with respect to Lebesgue meansre on $\bbR^3$ is 
$$
f(\bsr)\propto \frac{p(r)}{4\pi r^2}\propto(1-r^2)^{K-2}.
$$
So the Bloch vector has density proportional to $(1-\abs{\bsr}^2)^{K-2}$ on the unit ball. This proves the $K\geqslant2$ statement.
\item \textbf{Step 3: The case $K=1$.} For $K=1,\rho=\proj{\psi}$ is a pure state, and the Bloch vector $\bsr$ lies on the unit sphere. By unitary invariance, $\bsr$ is uniform on the sphere $S^2$ (This can also be seen as the limit of the above density as $K\to1^+$ after normalization, or directly from the fact that a Haar-random pure qubit state corresponds to a uniform point on the Bloch sphere---Archimedes' theorem.)
\item \textbf{Orthogonal projection onto a plane.} Let $\bsr=(r_1,r_2,r_3)$ have density $f(\bsr)\propto (1-\abs{\bsr}^2)^{K-2}$ on the unit ball (for $K\geqslant2$). Project onto the $(r_1,r_2)$-plane: $\xi=(r_1,r_2)$. Then
$$
g_K(\xi) = \int^{\sqrt{1-\abs{\xi}^2}}_{-\sqrt{1-\abs{\xi}^2}}f(\xi,r_3)\dif r_3\propto  \int^{\sqrt{1-\abs{\xi}^2}}_{-\sqrt{1-\abs{\xi}^2}}(1-\abs{\xi}^2-z^2)^{K-2}\dif z.
$$
Let $q=\abs{\xi}^2$ and $R=\sqrt{1-q}$. Substitute $z=R\sin\theta,\dif z=R\cos\theta\dif\theta$:
\begin{eqnarray*}
\int^R_{-R}(R^2-z^2)^{K-2}\dif z=R^{2K-3}\int^{\frac\pi2}_{-\frac\pi2}\cos^{2K-3}\theta\dif \theta=R^{2K-3} \Beta(\tfrac12,K-1)=R^{2K-3}\frac{\Gamma(\tfrac12)\Gamma(K-1)}{\Gamma(K-\tfrac12)},
\end{eqnarray*}
implying that $g_K(\xi)\propto (1-q)^{K-\frac32}=(1-\abs{\xi}^2)^{K-\frac32}$. The normalization constant is fixed by
$$
\int_{\abs{\xi}<1} (1-\abs{\xi}^2)^{K-\frac32}[\dif \xi] = \pi\int^1_0(1-q)^{K-\frac32}\dif q = \frac{\pi}{K-\tfrac12}.
$$
Thus we have obtained Eq.~\eqref{eq:gK}.
\item \textbf{The case $K=1$ for the projection.} For $K=1,\bsr$ is uniform on the unit sphere $S^2$. The projection onto a plane is the classical Archimedes result: the density of the projection of a uniform point on $S^2$ onto a diameter is uniform on $[-1,1]$. More precisely, if $\xi$ is the projection onto a plane, then 
$$
g_1(\xi) = \frac1{2\pi\sqrt{1-\abs{\xi}^2}},\quad \abs{\xi}<1.
$$
So \eqref{eq:gK} holds for all $K\geqslant1$. This completes the proof of Part (i). 
\end{itemize}
Next, we show the part (ii).
\begin{itemize}
\item \textbf{Step 1: Gaussian representation.} Let $\bsz=(z_1,\ldots,z_d)^\t$ be a standard complex Gaussian vector in $\bbC^d$, i.e. $z_j = x_j + iy_j$ with $x_j,y_j\sim\cN(0,\frac12)$ i.i.d. (or equivalently, each $z_j\sim\mathcal{CN}(0,1)$). Then $\ket{\psi}= \frac{\bsz}{\norm{\bsz}}$ is Haar distributed on the unit sphere of $\bbC^d$.
\item \textbf{Step 2: Decomposition into the subspace and its complement.} Let $S$ be a fixed two-dimensional subspace, and let $\Pi$ be the orthogonal projection onto$S$. Write $\bsz = \Pi \bsz + (\I-\Pi)\bsz =: \bsu + \bsv$, where $\bsu\in S\cong\bbC^2$ and $\bsv\in S^\perp\cong\bbC^{d-2}$.
Since $\bsz$ is standard complex Gaussian and $\Pi$ is an orthogonal projection, $\bsu$ and $\bsv$ are independent complex Gaussian vectors in $S$ and $S^\perp$ respectively, with: (1) $\bsu$ having $2$ complex components (i.e. $4$ real components), each standard complex Gaussian;
(2) $\bsv$ having $d-2$ complex components (i.e. $2(d-2)$ real components), each standard complex Gaussian.
\item \textbf{Step 3: Distribution of $\norm{\bsu}^2$ and $\norm{\bsv}^2$.} For a standard complex Gaussian vector with $m$ complex components, the squared norm is a sum of $2m$ independent $\chi^2_1$ variables (each real component squared), which is $\mathrm{Gamma}(m,1)$ (shape $m$, rate $1$). More precisely:
$\norm{\bsu}^2 \sim \mathrm{Gamma}(2,1)$ and $\norm{\bsv}^2\sim \mathrm{Gamma}(d-2,1)$. These are independent because $\bsu$ and $\bsv$ are independent.
\item \textbf{Step 4: Distribution of $T=\norm{\Pi\psi}^2$.} We have
$\psi = \frac{\bsz}{\norm{\bsz}} = \frac{\bsu+\bsv}{\sqrt{\norm{\bsu}^2+\norm{\bsv}^2}}$, so
$T = \norm{\Pi\psi}^2 = \frac{\norm{\bsu}^2}{\norm{\bsu}^2+\norm{\bsv}^2}$. If $X\sim\mathrm{Gamma}(2,1)$ and $Y\sim\mathrm{Gamma}(d-2,1)$ are independent, then $\frac{X}{X+Y}\sim\Beta(2,d-2)$. Therefore $T \sim \Beta(2,d-2)$. This proves the first assertion of (ii).
\item \textbf{Step 5: Conditional distribution of $\Pi\psi/\sqrt{T}$.} Given $\bsz$, the direction of $\bsu=\Pi\bsz$ is independent of $\norm{\bsu}$. Since $\bsu$ is a standard complex Gaussian vector in $S\cong\bbC^2\), its direction$\frac{\bsu}{\norm{\bsu}}$ is Haar distributed on the unit sphere of $S$, and is independent of $\norm{\bsu}$. Moreover, $\bsu$ is independent of $\bsv$, so the direction of $\bsu$ is independent of both $\norm{\bsu}$ and $\norm{\bsv}$. Now,
$$
\frac{\Pi\psi}{\sqrt{T}}
= \frac{\bsu/\norm{\bsz}}{\norm{\bsu}/\norm{\bsz}}
= \frac{\bsu}{\norm{\bsu}}.
$$
Therefore $\Pi\psi/\sqrt{T} = \bsu/\norm{\bsu}$ is Haar distributed on the unit sphere of $S$, and is independent of $T=\norm{\bsu}^2/(\norm{\bsu}^2+\norm{\bsv}^2)$ (since the direction of $\bsu$ is independent of $\norm{\bsu}$ and $\norm{\bsv}$, hence of $T$). This completes the proof of Part (ii). \qedhere
\end{itemize}
\end{proof}

For $K\geqslant2$, (i) applied to a line instead of a plane gives the density of $\frac12(1+t)$ with
$t$ the projection; it is $\Beta(K,K)$, in agreement with Lemma~\ref{lem:dirichlet} for $d=2$. When
$\bsH_1,\bsH_2\in\Herm(\bbC^2)$, the pair $(\Innerm{\psi}{\bsH_1}{\psi},\Innerm{\psi}{\bsH_2}{\psi})$ is an affine image of the
projection of the Bloch vector of $\proj\psi$. Hence its law is supported on an ellipse, the image of
the unit disk, and has the density $\Pa{2\pi ab\sqrt{1-R^2}}^{-1}$, where $a,b$ are the semi-axes and $R$ is
the affine elliptical radius. This is Proposition~2.1 of \cite{Dunkl2011}.

%===================================================%
\section{Expectation values and moment vectors}\label{sect:expect}
%===================================================%

\subsection{Orbital inner products}

Let $\bsA,\bsB\in\Herm(\bbC^d)$ have simple spectra $a_1<\cdots<a_d$ and $b_1<\cdots<b_d$, and let
$\bsU$ be Haar distributed on $\rU(d)$. Write $N_d=\frac{d(d-1)}2$ and
$\Inner{\bsa}{\bsb_\pi}=\sum^d_{k=1}a_kb_{\pi(k)}$ for $\pi\in S_d$, the symmetric group of the set $\set{1,\ldots,d}$. The HCIZ formula reads
\cite{HarishChandra1957,Itzykson1980,McSwiggen2021}
\begin{eqnarray}\label{eq:hciz}
\int_{\rU(d)}e^{z\Tr{\bsA\bsU \bsB\bsU^\dagger}}\dif\mu_{\haar}(\bsU)
=\Pa{\prod_{k=1}^{d-1}k!}\frac{\det\Br{e^{za_ib_j}}_{i,j=1}^d}{z^{N_d}V_d(\bsa)V_d(\bsb)},\qquad z\in\bbC,
\end{eqnarray}
with the removable singularity at $z=0$ filled in by continuity.

\begin{thrm}\label{th:orbital}
For $d\geqslant2$, the random variable $R=\Tr{\bsA\bsU\bsB\bsU^\dagger}$ has the density
\begin{eqnarray}\label{eq:orbital}
p_d(r|\bsa,\bsb)=\frac{(-1)^{N_d}\prod_{k=1}^{d-1}k!}{(N_d-1)!V_d(\bsa)V_d(\bsb)}\sum_{\tau\in S_d}\sign(\tau)
\pa{r-\Inner{\bsa}{\bsb_\tau}}_+^{N_d-1},
\end{eqnarray}
supported on $[\Inner{\bsa}{\bsb_{\mathrm{rev}}},\Inner{\bsa}{\bsb}]$, where $\bsb_{\mathrm{rev}}$ lists $\bsb$
in decreasing order.
\end{thrm}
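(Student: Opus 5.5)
The plan is to invert the HCIZ formula \eqref{eq:hciz}, which we read as the moment generating function of $R$. Expanding the determinant gives
$$
\bbE\big[e^{zR}\big]=\frac{c_d}{V_d(\bsa)V_d(\bsb)}\sum_{\tau\in S_d}\sign(\tau)\,\frac{e^{z\Inner{\bsa}{\bsb_\tau}}}{z^{N_d}},\qquad c_d=\prod_{k=1}^{d-1}k!,
$$
an exponential sum divided by $z^{N_d}$. The candidate density $p_d$ in \eqref{eq:orbital} is a signed combination of truncated powers $(r-\Inner{\bsa}{\bsb_\tau})_+^{N_d-1}$. We will compute its Laplace transform and match the two sides.

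The first ingredient is the elementary identity $\int_{\bbR}e^{zr}(r-c)_+^{n-1}\,\dif r=(n-1)!\,e^{zc}(-z)^{-n}$ for $\Re z<0$, with $n=N_d$. Multiplying by the prefactor $(-1)^{N_d}c_d/((N_d-1)!V_d(\bsa)V_d(\bsb))$ shows that the transform of $p_d$ equals the right-hand side of \eqref{eq:hciz} on the half-plane $\Re z<0$. The sign $(-1)^{N_d}$ exactly cancels the one coming from $(-z)^{-N_d}$.

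The main obstacle is that the individual truncated powers do not decay. Each is supported on a half-line, so we must show that $p_d$ vanishes for $r>\Inner{\bsa}{\bsb}$. On that range every indicator equals $1$, and $p_d(r)$ is proportional to the polynomial
$$
\sum_\tau\sign(\tau)\,(r-\Inner{\bsa}{\bsb_\tau})^{N_d-1}.
$$
Expanding binomially, its vanishing reduces to
$$
\sum_\tau\sign(\tau)\,\Inner{\bsa}{\bsb_\tau}^{j}=0\qquad\text{for } j<N_d.
$$
This holds because $j!$ times this sum is the coefficient of $z^j$ in $\det[e^{za_ib_j}]$. That determinant is antisymmetric in $\bsa$, so it is divisible by $V_d(\bsa)$, and likewise by $V_d(\bsb)$. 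Hence its lowest possible $z$-order is $z^{N_d}$; equivalently, the analyticity of the left side of \eqref{eq:hciz} at $z=0$ forces these coefficients to vanish. The same argument kills $p_d$ below $\Inner{\bsa}{\bsb_{\mathrm{rev}}}$ trivially, since there all truncated powers are zero. So $p_d$ is compactly supported on $[\Inner{\bsa}{\bsb_{\mathrm{rev}}},\Inner{\bsa}{\bsb}]$, where these endpoints are the extreme values of $\Inner{\bsa}{\bsb_\tau}$ by the rearrangement inequality.

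Once compact support is established, both sides are entire functions of $z$ that agree on $\Re z<0$, hence everywhere. Since $R$ is bounded, its law is determined by the Laplace transform, so $p_d$ is the density of $R$. It remains to check that $R$ actually has a density, i.e. is absolutely continuous. This follows because the transform of the compactly supported function $p_d$ coincides with that of the law of $R$, and uniqueness of the Laplace transform for compactly supported measures identifies the law of $R$ with $p_d(r)\,\dif r$. Consequently $p_d\geqslant0$ and $\int p_d=1$ come for free, and one can sanity-check the latter at $z=0$.

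An alternative route would mirror Lemma~\ref{lem:curry}: read $p_d$ as $(N_d-1)$ times a divided difference in the variable $r$, and use the Hermite–Genocchi formula. We avoid it because the Laplace-transform argument is shorter.
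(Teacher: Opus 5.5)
Your proposal is correct and follows the paper's route. Both expand $\det[e^{za_ib_j}]$ as a signed sum of exponentials, match the two-sided Laplace transform of each truncated power $(r-\Inner{\bsa}{\bsb_\tau})_+^{N_d-1}$ against the HCIZ formula on the half-plane $\Re z<0$ (the paper's $\Re s>0$ with $s=-z$), conclude by uniqueness, and take the support from the rearrangement inequality. Your extra algebraic step, showing $\sum_\tau\sign(\tau)\Inner{\bsa}{\bsb_\tau}^j=0$ for $j<N_d$, is valid but not needed: every term's transform converges absolutely on that half-plane, so uniqueness of the two-sided Laplace transform there already identifies $p_d(r)\,\dif r$ with the law of $R$, as the paper argues. (One small slip: the sum is $j!$ times the coefficient of $z^j$, not the reverse.)
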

Put $z=-s$ in \eqref{eq:hciz}. The left-hand side becomes the two-sided Laplace transform of the law of $R$,
which has compact support. For $\Re(s)>0$ and every $\kappa\in\bbR$,
$$
\int_\bbR e^{-sr}(r-\kappa)_+^{N-1}\dif r=(N-1)!e^{-s\kappa}s^{-N}.
$$
Expanding the determinant
gives $\det[e^{-sa_ib_j}]=\sum_{\pi\in S_n}\sign(\pi)e^{-s\Inner{\bsa}{\bsb_\pi}}$, so both sides of \eqref{eq:orbital} have the
same two-sided Laplace transform on $\Re(s)>0$, where both transforms converge absolutely. By uniqueness
they agree almost everywhere. The support is given by the rearrangement inequality.
\begin{proof}
Basically, the density of $R=\Tr{\bsA\bsU\bsB\bsU^\dagger}$ is given by 
\begin{eqnarray}\label{eq:pdf1}
p_d(r|\bsa,\bsb) =
\int_{\rU(d)}\delta\Pa{r-\Tr{\bsA\bsU\bsB\bsU^\dagger}}\dif\mu_{\haar}(\bsU).
\end{eqnarray}
The Laplace transformation of $p_d(r|\bsa,\bsb)$ is given by
\begin{eqnarray*}
\sL(p_d(\cdot|\bsa,\bsb))(s) = \int_{\rU(d)}
e^{-s\Tr{\bsA\bsU\bsB\bsU^\dagger}}\dif\mu(\bsU) =
C_d\cdot\frac{\det_d\Pa{e^{-sa_ib_j}}}{(-s)^{\binom{d}{2}}V_d(\bsa)V_d(\bsb)}\quad(s>0).
\end{eqnarray*}
Thus using inverse Laplace transformation, we get that
\begin{eqnarray*}
p_d(r|\bsa,\bsb)
=C_d\cdot\frac1{V_d(\bsa)V_d(\bsb)}\sL^{-1}\Pa{(-s)^{-\binom{d}{2}}\det_d\Pa{e^{-sa_ib_j}}}.
\end{eqnarray*}

Now
\begin{eqnarray*}
\sL^{-1}\Pa{(-s)^{-\binom{d}{2}}\det_d\Pa{e^{-sa_ib_j}}}
&=&\sL^{-1}\Pa{ (-s)^{-\binom{d}{2}}\sum_{\pi\in
S_d}\sign(\pi)\prod^d_{k=1}e^{-s a_kb_{\pi(k)}}} \\
&=&\sL^{-1}\Pa{ (-s)^{-\binom{d}{2}}\sum_{\pi\in
S_d}\sign(\pi)e^{-s \Inner{\bsa}{\bsb_\pi}}}\\
&=& \sum_{\pi\in
S_d}\sign(\pi)\sL^{-1}\Pa{(-s)^{-\binom{d}{2}}e^{-s
\Inner{\bsa}{\bsb_\pi}}}
\end{eqnarray*}
We have seen that
\begin{eqnarray*}
\sL^{-1}\Pa{(-s)^{-\binom{d}{2}}e^{-s \Inner{\bsa}{\bsb_\pi}}} =
\frac{(-1)^{\binom{d}{2}}}{\Gamma\Pa{\frac{d(d-1)}2}}\Pa{r-\Inner{\bsa}{\bsb_\pi}}^{\frac{(d+1)(d-2)}2}_+
\end{eqnarray*}
Therefore
\begin{eqnarray*}
p_d(r|\bsa,\bsb)
=\frac{(-1)^{\binom{d}{2}}\prod^d_{k=1}\Gamma(k)}{\Gamma\Pa{\frac{d(d-1)}2}V_d(\bsa)V_d(\bsb)}\sum_{\pi\in
S_d}\sign(\pi)\Pa{r-\Inner{\bsa}{\bsb_\pi}}^{\frac{(d+1)(d-2)}2}_+.
\end{eqnarray*}
The support follows from the von Neumann trace inequality, or
equivalently from the rearrangement inequality:
$$
\min_{\bsU\in\rU(d)}\Tr{\bsA\bsU\bsB\bsU^\dagger}
=\Inner{\bsa}{\bsb_{\mathrm{rev}}}\text{ and
}\max_{\bsU\in\rU(d)}\Tr{\bsA\bsU\bsB\bsU^\dagger}
=\inner{\bsa}{\bsb}.
$$
This completes the proof.
\end{proof}
For $d=2$, the density is uniform on $[a_1b_2+a_2b_1,\,a_1b_1+a_2b_2]$ with height
$[(a_2-a_1)(b_2-b_1)]^{-1}$; this also follows from Lemma~\ref{lem:archimedes}. 
\begin{exam}
For $d=3$, $S_3=\set{(1),(12),(13),(23),(123),(132)}$. Then
\begin{eqnarray}
p_3(r|\bsa,\bsb) =-\frac1{V_3(\bsa)V_3(\bsb)}\sum_{\tau\in
S_3}\sign(\tau)\Pa{r-\Inner{\bsa}{\bsb_\tau}}^2_+.
\end{eqnarray}
Let $\bsa=(1,2,4)$ and $\bsb=(3,6,11)$. We get that
$\inner{\bsa}{\bsb_{\mathrm{rev}}}=35$ and
$\inner{\bsa}{\bsb}=59$. Then
\begin{eqnarray*}
&&35=a_1b_3+a_2b_2+a_3b_1<40=a_1b_2+a_2b_3+a_3b_1<41=a_1b_3+a_2b_1+a_3b_2\\
&&<49=a_1b_1+a_2b_3+a_3b_2<56=a_1b_2+a_2b_1+a_3b_3<59=a_1b_1+a_2b_2+a_3b_3
\end{eqnarray*}
and thus
\begin{eqnarray}
p_3(r|\bsa,\bsb) = \frac1{720}
\begin{cases}
(r-35)^2,&\text{if }x\in[35,40];\\
5(2r-75),&\text{if }x\in[40,41];\\
-r^2+92 r-2056,&\text{if }x\in[41,49];\\
3(-2r + 115),&\text{if }x\in[49,56];\\
(r-59)^2,&\text{if }x\in[56,59].
\end{cases}
\end{eqnarray}
See the following Fig.~\ref{fig:splines}(a). For $d\geqslant3$ and
pairwise distinct knots, \eqref{eq:orbital} is a spline of degree $N_d-1$ that is $C^{N_d-2}$.
\end{exam}

\subsection{Expectation values are Dirichlet splines}

The relation between Theorem~\ref{th:orbital} and the following Theorem~\ref{th:spline} deserves emphasis. 
Theorem~\ref{th:orbital} concerns the orbital inner product 
$\Tr{\bsA\bsU\bsB\bsU^\dagger}$ with $\bsU$ Haar-distributed on $\rU(d)$, 
whereas Theorem~\ref{th:spline} concerns the expectation value 
$\Tr{\bsA\rho}$ for an induced random state $\rho\sim\mu_{d,K}$. 
The two are connected as follows. If $\bsB$ is a rank-one projection, say 
$\bsB=\proj{\psi}$ with $\ket{\psi}$ Haar-random in $\bbC^d$, then 
$\Tr{\bsA\bsU\bsB\bsU^\dagger}=\Innerm{\psi}{\bsU^\dagger \bsA \bsU}{\psi}$, 
which is the expectation value of $\bsA$ in a Haar-random pure state. 
Since a Haar-random pure state is exactly the induced state $\rho\sim\mu_{d,1}$ with 
$K=1$, Theorem~\ref{th:spline} with $K=1$ recovers the orbital inner product law 
for rank-one $\bsB$. In this case, \eqref{eq:spline} reduces to the numerical measure 
of a Hermitian matrix, a result established in 
\cite{Gallay2012,Dunkl2011,CamposVenuti2013}. More generally, for $\bsB$ of arbitrary 
rank, Theorem~\ref{th:orbital} and Theorem~\ref{th:spline} describe different 
randomizations---unitary conjugation versus partial trace---and the two laws coincide 
only in the rank-one, $K=1$ case.

\begin{thrm}\label{th:spline}
Let $\bsA\in\Herm(\bbC^d)$ have eigenvalues $a_1,\ldots,a_d$ (with multiplicity) and not be a multiple
of $\I$. Let $(t_1,\ldots,t_{dK})$ list each $a_i$ exactly $K$ times. For $\rho\sim\mu_{d,K}$, the
expectation value $\Tr{\bsA\rho}$ has the density
\begin{eqnarray}\label{eq:spline}
p_{d,K}(r|\bsa)=(dK-1)[t_1,\ldots,t_{dK}](\cdot-r)_+^{dK-2}.
\end{eqnarray}
If $\bsa$ has a distinct eigenvalue of multiplicity $m$, then $p_{d,K}$ is $C^{(d-m)K-2}$ at that
eigenvalue; for a simple spectrum it is $C^{(d-1)K-2}$ everywhere.
\end{thrm}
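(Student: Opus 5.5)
Diagonalize $\bsA=\bsV\diag(a_1,\ldots,a_d)\bsV^\dagger$. By unitary invariance of $\mu_{d,K}$, $\Tr{\bsA\rho}=\sum_{i=1}^d a_i\rho'_{ii}$, where $\rho'=\bsV^\dagger\rho\bsV\sim\mu_{d,K}$. Equivalently, $\rho_{ii}$ is taken in the eigenbasis $\set{\bsV\ket{\bse_i}}$, and Lemma~\ref{lem:dirichlet} applies in any basis. So $(\rho'_{11},\ldots,\rho'_{dd})\sim\op{Dirichlet}(K,\ldots,K)$.

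The key step is to realize each $\op{Dirichlet}(K)$ block as an aggregate of $K$ coordinates of a flat Dirichlet. Take $\bsw\sim\op{Dirichlet}(1,\ldots,1)$ on $\bbP_{dK-1}$ and partition $\set{1,\ldots,dK}$ into blocks $I_1,\ldots,I_d$ of size $K$. By the aggregation argument of Lemma~\ref{lem:dirichlet} (independent $\op{Gamma}(1,1)$ variables summed in blocks are $\op{Gamma}(K,1)$), $\Pa{\sum_{j\in I_i}w_j}_{i=1}^d\sim\op{Dirichlet}(K,\ldots,K)$. Hence
$$\Tr{\bsA\rho}\stackrel{d}{=}\sum_{i=1}^d a_i\sum_{j\in I_i}w_j=\sum_{j=1}^{dK}w_jt_j,$$
where $t_j=a_i$ for $j\in I_i$. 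Since $\bsA$ is not a multiple of $\I$, the $t_j$ are not all equal, and $N=dK\geqslant2$. Lemma~\ref{lem:curry} with $N=dK$ then gives \eqref{eq:spline} directly. Repeated eigenvalues of $\bsA$ need no separate treatment, because Lemma~\ref{lem:curry} already allows repeated knots, interpreted through derivatives in the divided difference.

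For regularity, note that a distinct eigenvalue value $a$ with multiplicity $m$ in $\bsa$ appears exactly $mK$ times among the knots. Lemma~\ref{lem:curry} gives smoothness $C^{N-2-mK}=C^{dK-2-mK}=C^{(d-m)K-2}$ there. Away from knots the density is polynomial, so for a simple spectrum ($m=1$ at every knot) it is $C^{(d-1)K-2}$ everywhere.

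The argument is essentially bookkeeping. The only point needing care is the regularity count when $(d-m)K-2<0$. This happens for instance when $d=2,K=1$, or in general when $mK=N-1$, which Lemma~\ref{lem:curry} flags as a discontinuity. In that case I would interpret $C^{-1}$ as "bounded, piecewise polynomial, with a jump," and $C^{-2}$ (possible only for $K=1$, $m=d-1$) as a blow-up of integrable type. I expect this interpretive edge case to be the main obstacle, rather than anything in the distributional identity.
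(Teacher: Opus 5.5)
Your proof is correct and matches the paper's: work in an eigenbasis of $\bsA$, apply Lemma~\ref{lem:dirichlet}, realize $\op{Dirichlet}(K,\ldots,K)$ as block sums of a flat $\op{Dirichlet}(1,\ldots,1)$ on $\bbP_{dK-1}$, and invoke Lemma~\ref{lem:curry}, where an eigenvalue of multiplicity $m$ becomes an $mK$-fold knot. The edge case you call the main obstacle is not one, and your $C^{-2}$ remark is a slip. Because $\bsA$ is not a multiple of $\I$, $m\leqslant d-1$, so $(d-m)K-2\geqslant-1$ always. Equality holds only for $K=1$, $m=d-1$, which gives a bounded jump (e.g.\ $\Beta(1,d-1)$ for a rank-one projection), and no blow-up can occur because the density is piecewise polynomial.
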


\begin{proof}
In an eigenbasis of $\bsA$, $\Tr{\bsA\rho}=\sum^d_{i=1}a_i\rho_{ii}$ with $(\rho_{11},\ldots,\rho_{dd})\sim\op{Dirichlet}(K,\ldots,K)$ by
Lemma~\ref{lem:dirichlet}. A $\op{Dirichlet}(K,\ldots,K)$ vector is obtained from $\bsw\sim\op{Dirichlet}(1,\ldots,1)$ on
$\bbP_{dK-1}$ by summing consecutive blocks of $K$ coordinates, so
$$
\Tr{\bsA\rho}=\sum^{dK}_{j=1}w_jt_j,
$$
and Lemma~\ref{lem:curry} applies.
\end{proof}

Two special cases are immediate. If $\bsP$ is a projection of
rank $m$ with $0<m<d$, then $\Tr{\bsP\rho}\sim\Beta(mK,(d-m)K)$.  This follows from Lemma~\ref{lem:dirichlet} and the aggregation property: the trace $\Tr{\bsP\rho}$ is the sum of $m$ diagonal entries of $\rho$ in an eigenbasis of $\bsP$, which is $\Beta(mK,(d-m)K)$ by the aggregation property of the Dirichlet distribution.

\begin{exam}[Qutrit]
For a given qutrit observable $\bsA$ with simple spectrum
$\lambda(\bsA)=(a_1,a_2,a_3) (a_1<a_2<a_3)$, the probability
distribution density of $\Tr{\bsA\rho}$, where $\rho$ is resulted
from partial-tracing a subsystem over a Haar-distributed random pure
state on $\complex^d\ot\complex^K$, where $d=K=3$, is given by the following:
\begin{eqnarray}
p_3(r|\bsa) =p_{3,3}(r|\bsa) =
\frac{4!}{[V_3(\bsa)]^5}\cdot\sum^3_{k=1}c_k(r|\bsa)\chi_{\set{r>a_k}},
\end{eqnarray}
where
\begin{eqnarray}
c_1(r|\bsa)=\Phi(r,a_1,a_2,a_3),\quad
c_2(r|\bsa)=\Phi(r,a_2,a_3,a_1),\quad
c_3(r|\bsa)=\Phi(r,a_3,a_1,a_2).
\end{eqnarray}
Here $\Phi(r,a_1,a_2,a_3)$ is defined as
\begin{eqnarray*}
\Phi(r,a_1,a_2,a_3) &=&
(a_3-a_2)^5\Br{7a^2_1-7(a_2+a_3)a_1+2a^2_2+2a^2_3+3a_2a_3}(r-a_1)^7\notag\\
&&-7(a_3-a_2)^4(a_2+a_3-2a_1)V_3(\bsa)(r-a_1)^6\\
&&+7(a_3-a_2)^3V_3(\bsa)^2(r-a_1)^5.
\end{eqnarray*}
Moreover the support of $p_3(r|\bsa)$ is the closed interval
$[a_1,a_3]$. Note that
$\Phi(r,a_1,a_2,a_3)+\Phi(r,a_2,a_3,a_1)+\Phi(r,a_3,a_1,a_2)=0$.
Thus we see that
\begin{eqnarray}
p_3(r|\bsa) = \frac{24}{[V_3(\bsa)]^5}
\begin{cases}
c_1(r|\bsa),&\text{if
}r\in[a_1,a_2]\\
-c_3(r|\bsa),&\text{if }r\in[a_2,a_3],
\end{cases}
\end{eqnarray}
which can be rewritten as
\begin{eqnarray}\label{eq:qutritlocal}
p_3(r|\bsa) = \frac{24}{[V_3(\bsa)]^5}
\Pa{c_1(r|\bsa)\chi_{[a_1,a_2]}(r) -c_3(r|\bsa)\chi_{[a_2,a_3]}(r)}.
\end{eqnarray}
Unlike the alternating sum \eqref{eq:spline}, formula \eqref{eq:qutritlocal} involves no cancellation
between large terms and is the form used in Section~\ref{sect:overlap}. As the environment grows,
$p_{d,K}$ becomes smoother at the eigenvalues and concentrates around $\Tr{\bsA}/d$;
Figure~\ref{fig:splines}(b) shows this for $\boldsymbol{a}=(0,0.3,1)$.
\end{exam}

\begin{figure}[t]
\centering
\includegraphics{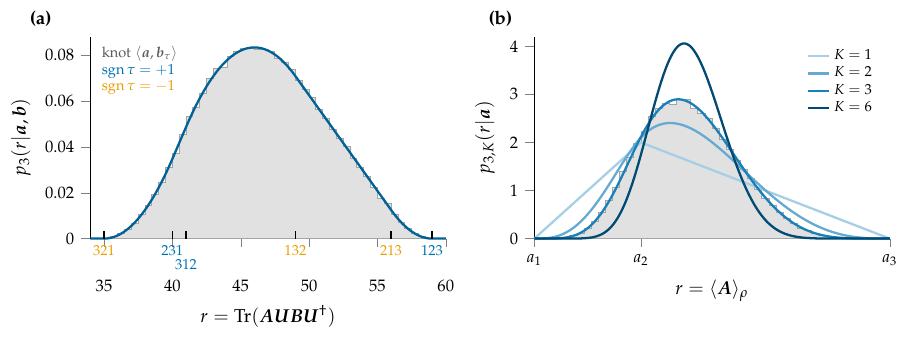}
\caption{Spline structure of random inner products. Lines: exact densities; grey histograms:
$2\times10^5$ samples. (a) Orbital inner product, Theorem~\ref{th:orbital}, for $\bsa=(1,2,4)$ and
$\bsb=(3,6,11)$. The six knots $\Inner{\bsa}{\bsb_\tau}$ are labelled by $\tau$ in one-line notation and
coloured by $\sign\tau$; the density is quadratic between knots and $C^1$ across them. (b) Expectation
values $\Tr{\bsA\rho}$, $\rho\sim\mu_{3,K}$, $\bsa=(0,0.3,1)$, Theorem~\ref{th:spline}; the histogram is for
$K=3$. The kink at $a_2$ for $K=1$ is smoothed to $C^{2K-2}$ as $K$ grows.}\label{fig:splines}
\end{figure}

\subsection{The moment simplex}

Let $\bsA$ have simple spectrum $a_1<\cdots<a_d$ and set $r_k=\langle \bsA^k\rangle_\rho$ for
$k=0,\ldots,d-1$ (so $r_0=1$). For a multi-index
$I=\set{i_1<\cdots<i_k}\subseteq \set{1,\dots,d}$, define the
elementary symmetric polynomial
$$
e_m(a_I)=\sum_{1\leqslant \ell_1<\cdots<\ell_m\leqslant k}
a_{i_{\ell_1}}\cdots a_{i_{\ell_m}},\quad e_0(a_I)=1.
$$
Define $\hat k:=\set{1,\ldots,d}\backslash\set{k}$, where
$k=1,\ldots,d$. Let
$e_j(a_{\hat k})$ denote the $j$-th elementary symmetric polynomial in $\set{a_l:l\neq k}$ for each $k=1,\ldots,d$, and define
\begin{eqnarray}\label{eq:bary}
\phi_k(\bsr)=\sum_{j=0}^{d-1}(-1)^{d-1-j}e_{d-1-j}(a_{\hat k})\,r_j,\qquad
x_k(\bsr)=\frac{\phi_k(\bsr)}{\prod_{l\neq k}(a_k-a_l)}.
\end{eqnarray}
The numbers $x_k(\bsr)$ are the
barycentric coordinates of $\bsr=(r_1,\ldots,r_{d-1})$ with respect to the points
$\bsv_k=(a_k,a_k^2,\ldots,a_k^{d-1})$ (Appendix~\ref{app:bary}).

\begin{thrm}\label{th:simplex}
Let $d\geqslant2$, $K\geqslant1$ and $\rho\sim\mu_{d,K}$. The moment vector
$\bsr=(\langle \bsA\rangle_\rho,\ldots,\langle \bsA^{d-1}\rangle_\rho)$ has the density
\begin{eqnarray}\label{eq:simplex}
f_{d,K}(\bsr)=\frac{\Gamma(dK)}{\Gamma(K)^d\,|V_d(\bsa)|}\prod_{i=1}^dx_i(\bsr)^{K-1}\chi_{S_a}(\bsr)
=\frac{\Gamma(dK)}{\Gamma(K)^d|V_d(\bsa)|^{2K-1}}\prod_{i=1}^d\abs{\phi_i(\bsr)}^{K-1}\chi_{S_a}(\bsr),
\end{eqnarray}
where $S_a=\conv\set{\bsv_1,\ldots,\bsv_d}$ is a $(d-1)$-simplex.
\end{thrm}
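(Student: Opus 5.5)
In an eigenbasis of $\bsA$ we have $r_k=\langle \bsA^k\rangle_\rho=\sum_{i=1}^d a_i^k\rho_{ii}$ for $k=0,\ldots,d-1$. So the moment vector is the image of the diagonal $\bsx=(\rho_{11},\ldots,\rho_{dd})$ under the affine map $T:\bbP_{d-1}\to\bbR^{d-1}$, $T(\bsx)=\sum_k x_k\bsv_k$, which sends the vertex $\bse_k$ of the simplex to $\bsv_k=(a_k,\ldots,a_k^{d-1})$. By Lemma~\ref{lem:dirichlet}, $\bsx\sim\op{Dirichlet}(K,\ldots,K)$ with density $\frac{\Gamma(dK)}{\Gamma(K)^d}\prod_i x_i^{K-1}$ with respect to Lebesgue measure in the chart $(x_1,\ldots,x_{d-1})$, where $x_d=1-\sum_{i<d}x_i$.

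The plan is therefore a plain change of variables. First we show that $T$ is a bijection of $\bbP_{d-1}$ onto $S_a$, with inverse $\bsr\mapsto(x_1(\bsr),\ldots,x_d(\bsr))$ given by \eqref{eq:bary}. For this we write the linear system $\sum_k x_k a_k^j=r_j$, $j=0,\ldots,d-1$ (the row $j=0$ encodes $\sum_k x_k=1$), as $\bsW\bsx=\bsr$ with $\bsW=[a_k^j]$ the Vandermonde matrix. Since $\det\bsW=V_d(\bsa)\neq0$ for a simple spectrum, the $\bsv_k$ are affinely independent and $S_a$ is a genuine $(d-1)$-simplex. Solving by Lagrange interpolation gives $x_k=\sum_j (\bsW^{-1})_{kj}r_j$. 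Here the $k$-th row of $\bsW^{-1}$ holds the coefficients of the Lagrange basis polynomial $\ell_k(t)=\prod_{l\neq k}\frac{t-a_l}{a_k-a_l}$, whose expansion is $\prod_{l\ne k}(t-a_l)=\sum_j(-1)^{d-1-j}e_{d-1-j}(a_{\hat k})t^j$. This is exactly $\phi_k(\bsr)/\prod_{l\ne k}(a_k-a_l)$, as asserted in Appendix~\ref{app:bary}.

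Second, we need the Jacobian of the chart map $(x_1,\ldots,x_{d-1})\mapsto(r_1,\ldots,r_{d-1})$. Substituting $x_d=1-\sum_{i<d}x_i$ gives $\partial r_j/\partial x_k=a_k^j-a_d^j$ for $j,k=1,\ldots,d-1$. This determinant equals $\det\bsW$ up to sign: subtract the last column from the others in $\bsW$, then expand along the row $j=0$, which becomes $(0,\ldots,0,1)$. Hence $|J|=|V_d(\bsa)|$. Since the map is affine the Jacobian is constant, and the density of $\bsr$ is $\frac{\Gamma(dK)}{\Gamma(K)^d|V_d(\bsa)|}\prod_i x_i(\bsr)^{K-1}$ on $S_a$ and zero outside, which is the first form in \eqref{eq:simplex}.

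Third, we convert to the second form. On $S_a$ each $x_i\geqslant0$, so $x_i=|\phi_i|/\prod_{l\ne i}|a_i-a_l|$. The product $\prod_i\prod_{l\neq i}|a_i-a_l|$ counts each unordered pair twice, so it equals $|V_d(\bsa)|^2$. This gives $\prod_i x_i^{K-1}=\prod_i|\phi_i|^{K-1}/|V_d(\bsa)|^{2(K-1)}$, and combined with the Jacobian factor we get the exponent $2K-1$. For $K=1$ the product is just the indicator $\chi_{S_a}$. The main obstacle is the sign and combinatorial bookkeeping in the Lagrange-inverse identification of $\phi_k$ and in the Jacobian determinant; everything else is routine. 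I would verify both steps at $d=2$ as a sanity check: there $x_1=(a_2-r_1)/(a_2-a_1)$ and the density is $\Beta(K,K)$ rescaled by $1/(a_2-a_1)$.
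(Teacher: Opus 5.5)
Your proposal is correct and follows the paper's proof essentially step for step. You push the $\op{Dirichlet}(K,\ldots,K)$ diagonal of Lemma~\ref{lem:dirichlet} forward under the affine map with Jacobian entries $a_j^k-a_d^k$, invert it with the Lagrange polynomials $\prod_{l\neq k}(t-a_l)$ exactly as in Appendix~\ref{app:bary}, and use $\prod_k\prod_{l\neq k}|a_k-a_l|=|V_d(\bsa)|^2$ to reach the second form. Your column-subtraction argument, which turns row $j=0$ into $(0,\ldots,0,1)$, also supplies the Jacobian evaluation that the paper only calls a standard computation.
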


\begin{proof}
Work in an eigenbasis of $\bsA$, so $\bsA=\diag(a_1,\dots,a_d)$. Let
$X_i=\rho_{ii}$ be the diagonal entries of $\rho$. For the induced
ensemble $\bbC^d\ot\bbC^K$, the vector $(X_1,\dots,X_d)$ follows the
Dirichlet distribution
$$
(X_1,\dots,X_d)\sim \op{Dirichlet}(\underbrace{K,\dots,K}_{d\text{
times}}),
$$
whose density on the simplex is
$$
g(x_1,\dots,x_d)=\frac{\Gamma(dK)}{\Gamma(K)^d} x_1^{K-1}\cdots
x_d^{K-1} \delta(1-x_1-\cdots-x_d).
$$
The moments are
$$
r_k=\langle\bsA^k\rangle_\rho = \sum_{j=1}^d a_j^k X_j,\quad
k=0,1,\dots,d-1.
$$
The affine map $T: \bbP_{d-1} \to \bbR^{d-1}$ given by
$(x_1,\dots,x_{d-1})\mapsto (r_1,\dots,r_{d-1})$ has Jacobian determinant equal to
the Vandermonde $V_d(\bsa)$ (up to sign). Indeed,
\begin{eqnarray*}
r_k = \sum^{d-1}_{j=1}a^k_jx_j +a^k_d\Pa{1-\sum^{d-1}_{j=1}x_j} = a^k_d + \sum^{d-1}_{j=1}(a^k_j-a^k_d)x_j.
\end{eqnarray*}
Thus $\frac{\partial r_k}{\partial x_j}=a^k_j-a^k_d$, where $j,k=1,\ldots,d-1$. Then the Jacobian is a Vandermonde-like determinant. A standard computation shows that its absolute value is $\abs{V_d(\bsa)}$. Solving for the
barycentric coordinates gives
$$
x_i = \frac{\phi_i(\bsr)}{\prod_{j\neq i}(a_i-a_j)}.
$$
Thus
$$
\prod_{i=1}^d x^{K-1}_i(\bsr) = \frac{\prod^d_{i=1}
\abs{\phi_i(\bsr)}^{K-1}}{\abs{V_d(\bsa)}^{2(K-1)}}.
$$
The density of $\bsr=(r_1,\ldots,r_{d-1})$ is obtained from the density of $\bsx=(x_1,\ldots,x_d)$ via the change-of-variable formula. Since $T$ maps the simplex $\bbP_{d-1}$ to $S_a=\op{conv}\Set{\bsv_1,\ldots,v_d}$, where $\bsv_k=(a_k,a^2_k,\ldots,a^{d-1}_k)$, we have
\begin{eqnarray*}
f_{d,K}(\bsr) &=& g(\bsx(\bsr))\cdot\Abs{\det\Pa{\frac{\partial(x_1,\ldots,x_{d-1})}{\partial(r_1,\ldots,r_{d-1})}}} =\Pa{\frac{\Gamma(dK)}{\Gamma(K)^d}\prod^d_{k=1}x^{K-1}_k(\bsr) }\frac1{\abs{V_d(\bsa)}}\\
&=& \frac{\Gamma(dK)}{\Gamma(K)^d} \frac{\prod^d_{i=1}
\abs{\phi_i(\bsr)}^{K-1}}{\abs{V_d(\bsa)}^{2(K-1)}} \frac1{\abs{V_d(\bsa)}}=\frac{\Gamma(dK)}{\Gamma(K)^d|V_d(\bsa)|^{2K-1}}\prod_{i=1}^d\abs{\phi_i(\bsr)}^{K-1}\chi_{S_a}(\bsr).
\end{eqnarray*}
The support of $g$ is the simplex $\bbP_{d-1}$. Under the affine map $T$, the image of $\bbP_{d-1}$ is the convex hull of the points
$$
\bsv_k=T(\bse_k)=(a_k,a^2_k,\ldots,a^{d-1}_k),
$$
where $\bse_k$ are the standard basis vectors. Therefore, $S_a=\op{conv}\set{\bsv_1,\ldots,\bsv_d}$ is a $(d-1)$-simplex in $\bbR^{d-1}$. Ths is the support of $f_{d,K}$.
\end{proof}

At $K=1$, the density is uniform on $S_a$, which is the pure-state result of \cite{Zhang2022}; at $K=d$ it
is the Hilbert--Schmidt law. Only the first expression in \eqref{eq:simplex} is valid without absolute
values: $\prod_k\prod_{l\neq k}(a_k-a_l)=(-1)^{\frac{d(d-1)}2}V_d(\bsa)^2$ is negative when $d\equiv2,3\pmod4$.
The first two moments follow from \eqref{eq:secondmoment} alone.

\begin{prop}\label{prop:moments}
Let $\rho\sim\mu_{d,K}$ and $\bsA,\bsB\in\Herm(\bbC^d)$. Then $\bbE[\Tr{\bsA\rho}]=\frac1d\Tr{\bsA}$,
$$
\Cov\pa{\Tr{\bsA\rho},\Tr{\bsB\rho}}=\frac{\Tr{\bsA\bsB}-\tfrac1d\Tr{\bsA}\Tr{\bsB}}{d(dK+1)},
\qquad
\bbE[\Var_\rho(\bsA)]=\frac{dK}{dK+1}\sigma^2(\bsA),
$$
where $\sigma^2(\bsA):=\frac1d\Tr{\bsA^2}-(\frac1d\Tr{\bsA})^2$ is the variance of $\bsA$ in the maximally mixed state.
\end{prop}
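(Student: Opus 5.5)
The whole proposition follows from the two moment identities already recorded: $\bbE[\rho]=\frac{\I_d}{d}$ and the second-moment formula \eqref{eq:secondmoment}. No Dirichlet or spline structure is needed.

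\textbf{Mean.} Linearity gives $\bbE[\Tr{\bsA\rho}]=\Tr{\bsA\,\bbE[\rho]}=\frac1d\Tr{\bsA}$.

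\textbf{Covariance.} I would write the product of two expectation values as a single trace on the doubled space,
$$
\Tr{\bsA\rho}\Tr{\bsB\rho}=\Tr{(\bsA\ot\bsB)(\rho\ot\rho)}.
$$
Then insert \eqref{eq:secondmoment}, using $\Tr{(\bsA\ot\bsB)(\I_d\ot\I_d)}=\Tr{\bsA}\Tr{\bsB}$ and the swap identity $\Tr{(\bsA\ot\bsB)\bsF}=\Tr{\bsA\bsB}$. This yields
$$
\bbE[\Tr{\bsA\rho}\Tr{\bsB\rho}]=\frac{K\Tr{\bsA}\Tr{\bsB}+\Tr{\bsA\bsB}}{d(dK+1)}.
$$
Subtract $\frac1{d^2}\Tr{\bsA}\Tr{\bsB}$. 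The coefficient of $\Tr{\bsA}\Tr{\bsB}$ becomes
$$
\frac{K}{d(dK+1)}-\frac1{d^2}=\frac{dK-(dK+1)}{d^2(dK+1)}=-\frac{1}{d^2(dK+1)},
$$
so the covariance equals $\frac{\Tr{\bsA\bsB}-\frac1d\Tr{\bsA}\Tr{\bsB}}{d(dK+1)}$, as claimed.

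\textbf{Expected variance.} Write $\Var_\rho(\bsA)=\Tr{\bsA^2\rho}-\Tr{\bsA\rho}^2$. The first term has mean $\frac1d\Tr{\bsA^2}$ by linearity. For the second term, apply the covariance formula with $\bsB=\bsA$:
$$
\bbE[\Tr{\bsA\rho}^2]=\frac1{d^2}(\Tr{\bsA})^2+\frac{\Tr{\bsA^2}-\frac1d(\Tr{\bsA})^2}{d(dK+1)}.
$$
The numerator of the last fraction is exactly $d\,\sigma^2(\bsA)$, so this equals $\frac1{d^2}(\Tr{\bsA})^2+\frac{\sigma^2(\bsA)}{dK+1}$. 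Hence
$$
\bbE[\Var_\rho(\bsA)]=\sigma^2(\bsA)-\frac{\sigma^2(\bsA)}{dK+1}=\frac{dK}{dK+1}\sigma^2(\bsA).
$$

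\textbf{Main obstacle.} There is no real difficulty. The only points needing care are the swap-trace identity and the bookkeeping of the $\frac1{d^2}$ term. If one wants to double-check \eqref{eq:secondmoment}, take partial traces of $\bbE[\rho\ot\rho]=\alpha\I_d\ot\I_d+\beta\bsF$:
$$
\bbE[\Tr{\rho}^2]=1\ \Longrightarrow\ d^2\alpha+d\beta=1,\qquad
\bbE[\Tr{\rho^2}]=\frac{d+K}{dK+1}\ \Longrightarrow\ d\alpha+d^2\beta=\frac{d+K}{dK+1}.
$$
Solving gives $\alpha=\frac{K}{d(dK+1)}$ and $\beta=\frac{1}{d(dK+1)}$, which confirms the formula.
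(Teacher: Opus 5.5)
Your proof is correct and matches the paper's argument: write the covariance as $\Tr{(\bsA\ot\bsB)\bbE[\rho\ot\rho]}-\frac1{d^2}\Tr{\bsA}\Tr{\bsB}$, insert \eqref{eq:secondmoment} with the swap identity, and obtain $\bbE[\Var_\rho(\bsA)]$ by subtracting $\Var[\Tr{\bsA\rho}]$ and the squared mean from $\frac1d\Tr{\bsA^2}$. Your verification of $\alpha$ and $\beta$ is also correct and is a step the paper leaves implicit; note that it uses full traces against $\I_d\ot\I_d$ and $\bsF$, not partial traces.
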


\begin{proof}
The covariance is
$$
\Tr{(\bsA\ot \bsB)\bbE[\rho\ot\rho]}-\tfrac1{d^2} \Tr{\bsA}\Tr{\bsB};
$$
insert \eqref{eq:secondmoment} and use $\Tr{(\bsA\ot \bsB)\bsF}=\Tr{\bsA\bsB}$. For the last identity,
$$
\bbE[\Var_\rho(\bsA)]=\frac{\Tr{\bsA^2}}d-\Var[\Tr{\bsA\rho}]-\Pa{\frac{\Tr{\bsA}}d}^2=\sigma^2(\bsA)-\frac{\sigma^2(\bsA)}{dK+1}.\qedhere
$$
\end{proof}

For $d\geqslant3$ and a simple spectrum, the pair $(\langle \bsA\rangle_\rho,\langle \bsA^2\rangle_\rho)$ has a density whose support is the polygon $\conv\set{(a_k,a_k^2)}$; it is a triangle for $d=3$. For
$d=2$, the pair lies on the line $s=(a_1+a_2)r-a_1a_2$ and has no two-dimensional density.

\begin{cor}\label{cor:qutrit}
Let $d=3$ and write $\phi_{ij}(r)=(a_i+a_j)r-a_ia_j$. Then $(R,S)=(\langle \bsA\rangle_\rho,\langle \bsA^2\rangle_\rho)$
has the density
\begin{eqnarray}\label{eq:qutritjoint}
f_{R,S}(r,s)=\frac{\Gamma(3K)}{\Gamma(K)^3|V_3(\bsa)|^{2K-1}}\prod_{1\leqslant i<j\leqslant3}\abs{s-\phi_{ij}(r)}^{K-1}
\end{eqnarray}
on the triangle bounded by $s=\phi_{13}(r)$ from above and by $s=\phi_{12}(r)$ ($a_1\leqslant r\leqslant a_2$) and
$s=\phi_{23}(r)$ ($a_2\leqslant r\leqslant a_3$) from below. The pair $(\langle \bsA\rangle_\rho,\Delta_\rho \bsA)$ has the
density $2x\,f_{R,S}(r,r^2+x^2)$ on the region
$$
\sqrt{(a_{k+1}-r)(r-a_k)}\leqslant x\leqslant\sqrt{(a_3-r)(r-a_1)},\qquad a_k\leqslant r\leqslant a_{k+1},\ k=1,2.
$$
\end{cor}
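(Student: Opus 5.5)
The plan is to specialize Theorem~\ref{th:simplex} to $d=3$ and then perform a one-dimensional change of variables. With $d=3$ and $r_0=1$, the definition \eqref{eq:bary} gives
\[
\phi_k(r,s)=s-e_1(a_{\hat k})\,r+e_2(a_{\hat k}).
\]
For $k=3$ this is $s-(a_1+a_2)r+a_1a_2=s-\phi_{12}(r)$, and similarly $\phi_1=s-\phi_{23}(r)$ and $\phi_2=s-\phi_{13}(r)$. Hence $\prod_i|\phi_i|^{K-1}=\prod_{i<j}|s-\phi_{ij}(r)|^{K-1}$. Substituting this into the second expression of \eqref{eq:simplex} yields \eqref{eq:qutritjoint} with the stated constant.

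Next I will identify the support $S_a=\conv\{(a_k,a_k^2)\}$. The three points lie on the parabola $s=r^2$, and the line through $(a_i,a_i^2)$ and $(a_j,a_j^2)$ is exactly $s=\phi_{ij}(r)$, since $\phi_{ij}(a_i)=a_i^2$. By convexity of the parabola, the chord over $[a_1,a_3]$ lies above the other two chords. So the triangle is bounded above by $s=\phi_{13}$ and below by $\phi_{12}$ on $[a_1,a_2]$ and by $\phi_{23}$ on $[a_2,a_3]$.

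For the second claim, note that $\Delta_\rho\bsA=\sqrt{S-R^2}$, which is nonnegative. I will use the map $(r,s)\mapsto(r,x)$ with $x=\sqrt{s-r^2}$. Its inverse is $s=r^2+x^2$, with $\partial s/\partial x=2x$ and unit Jacobian in $r$. This gives the density $2x\,f_{R,S}(r,r^2+x^2)$ for $x>0$. The map is a bijection on $\{x>0\}$, and $S\geqslant R^2$ holds automatically, by Cauchy--Schwarz or simply because the triangle lies above the parabola.

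To get the region, I translate the triangle inequalities. The condition $r^2+x^2\leqslant\phi_{ij}(r)$ is equivalent to $x^2\leqslant\phi_{ij}(r)-r^2=(a_j-r)(r-a_i)$. The upper edge therefore gives $x\leqslant\sqrt{(a_3-r)(r-a_1)}$. The lower edge on $[a_k,a_{k+1}]$ gives $x\geqslant\sqrt{(a_{k+1}-r)(r-a_k)}$, and this quantity is nonnegative there.

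The only step needing any care is matching each $\phi_k$ with its edge and checking the sign and exponent bookkeeping; everything else is direct substitution.
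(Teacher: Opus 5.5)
Your proof is correct and matches the paper's argument: specialize Theorem~\ref{th:simplex} to $d=3$, identify $\phi_k(r,s)=s-\phi_{ij}(r)$ with $\{i,j,k\}=\{1,2,3\}$, then change variables via $s=r^2+x^2$ with Jacobian $2x$, using $\phi_{ij}(r)-r^2=(a_j-r)(r-a_i)$. You also identify the triangle's edges as chords of the convex parabola $s=r^2$, which supplies a step the paper leaves implicit.
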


\begin{proof}
For $d=3$, $\phi_k(r,s)=s-\phi_{ij}(r)$ with $\set{i,j,k}=\set{1,2,3}$. The second statement follows
from $s=r^2+x^2$, $\frac{\partial s}{\partial x}=2x$, and $\phi_{ij}(r)-r^2=(a_j-r)(r-a_i)$.
\end{proof}

Figure~\ref{fig:mechanism} follows the Hilbert--Schmidt law through these three coordinate systems.

\subsection{A complete variance law}

For a concrete observable, the variance $V=\Var_\rho(\bsA)=S-R^2$ can be integrated out of
Corollary~\ref{cor:qutrit} in finite terms. Take $\bsA=\diag(-1,0,1)$. Then $|V_3(\bsa)|=2$,
$x_1=\frac{s-r}2$, $x_2=1-s$, $x_3=\frac{s+r}2$, and \eqref{eq:qutritjoint} becomes
\begin{eqnarray}\label{eq:m101}
f_{R,S}(r,s)=c_K\big((1-s)(s^2-r^2)\big)^{K-1},\qquad |r|\leqslant s\leqslant1,\qquad
c_K=\frac{\Gamma(3K)}{2^{2K-1}\Gamma(K)^3}.
\end{eqnarray}

\begin{thrm}\label{th:variance}
Let $\bsA=\diag(-1,0,1)$ and $\rho\sim\mu_{3,K}$. Denote
\begin{eqnarray*}
\begin{cases}
c(v)=\sqrt{1-v},&v\in(0,1);\\
\alpha_\pm(v)=\frac12(1\pm\sqrt{1-4v}),&v<\frac14.
\end{cases}
\end{eqnarray*}
With
$h_v(t)=\big((1-v-t^2)((t^2+v)^2-t^2)\big)^{K-1}$, the variance $V=\Var_\rho(\bsA)$ has the density
\begin{eqnarray}\label{eq:variance}
f_V(v)=2c_K\times\begin{cases}
\int_0^{\alpha_-(v)}h_v(t)\dif t+\int_{\alpha_+(v)}^{c(v)}h_v(t)\dif t, & 0<v<\frac14,\\[4pt]
\int_0^{c(v)}h_v(t)\dif t, & \frac14\leqslant v<1,
\end{cases}
\end{eqnarray}
and $f_V=0$ outside $[0,1]$. Its mean is $\frac{2K}{3K+1}$. For $K=3$, $f_V(v)=315\,[H_v(c)+H_v(\alpha)-H_v(\beta)]$ for
$v<\frac14$ and $f_V(v)=315\,H_v(c)$ for $v\geqslant\frac14$, where
\begin{eqnarray*}
H_v(t)&=&\frac{t^{13}}{13}+\frac{6v-4}{11}t^{11}+\frac{5v^2-6v+2}{3}t^9+\frac{20v^3-32v^2+18v-4}{7}t^7\\
&&+\frac{15v^4-28v^3+19v^2-6v+1}{5}t^5+\frac{6v^5-12v^4+8v^3-2v^2}{3}t^3+(v^6-2v^5+v^4)t.
\end{eqnarray*}
The standard deviation has density $2x f_V(x^2)$.
\end{thrm}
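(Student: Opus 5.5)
The plan is to integrate the joint density \eqref{eq:m101} of $(R,S)=(\langle\bsA\rangle_\rho,\langle\bsA^2\rangle_\rho)$ from Corollary~\ref{cor:qutrit} along the curves $S-R^2=v$. I would change variables $(r,s)\mapsto(r,v)$ with $s=r^2+v$. This map has unit Jacobian, so
$$
f_V(v)=\int_{\bbR} f_{R,S}(r,r^2+v)\,\dif r .
$$
With $s=t^2+v$ and $r=t$, the integrand $\big((1-s)(s^2-r^2)\big)^{K-1}$ becomes exactly $h_v(t)$. The density \eqref{eq:m101} is even in $r$, so I would write $f_V(v)=2c_K\int_0^\infty h_v(t)\chi_{D_v}(t)\,\dif t$, where $D_v$ is the $t$-slice of the triangle $|r|\leqslant s\leqslant1$.

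The next step is to describe $D_v$ for $t\geqslant0$. It is cut out by two conditions.
\begin{enumerate}[(a)]
\item The condition $t^2+v\leqslant1$ means $t\leqslant c(v)$.
\item The condition $t\leqslant t^2+v$ means $t^2-t+v\geqslant0$.
\end{enumerate}
For $v\geqslant\frac14$ condition (b) always holds, which gives the range $[0,c(v)]$. For $0<v<\frac14$, condition (b) holds exactly when $t\leqslant\alpha_-(v)$ or $t\geqslant\alpha_+(v)$. To check the ordering I would use $\alpha_\pm^2+v=\alpha_\pm$, which gives $\alpha_\pm^2+v<1$, so $\alpha_-<\alpha_+< c(v)$ and the two intervals in \eqref{eq:variance} are nonempty.

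For the support, note that on the triangle $s\geqslant|r|\geqslant r^2$ (because $|r|\leqslant1$), so $v\geqslant0$; also $v\leqslant s\leqslant1$. Hence $f_V=0$ outside $[0,1]$. The mean requires no integration. By Proposition~\ref{prop:moments}, $\bbE[V]=\frac{dK}{dK+1}\sigma^2(\bsA)$ with $d=3$ and $\sigma^2(\bsA)=\frac23-0$, which gives $\frac{2K}{3K+1}$.

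For $K=3$, I would first compute $c_3=\Gamma(9)/(2^5\Gamma(3)^3)=40320/256$, so that $2c_3=315$. Then $h_v(t)=(1-v-t^2)^2\big((t^2+v)^2-t^2\big)^2$ is an even polynomial in $t$ of degree $12$. Its antiderivative vanishing at $0$ is $H_v$. As a check on the leading term, $t^4\cdot t^8$ integrates to $t^{13}/13$. The formula then reads $315\,[H_v(\alpha_-)+H_v(c)-H_v(\alpha_+)]$, identifying $\alpha=\alpha_-$ and $\beta=\alpha_+$ in the statement. The step I expect to be the main obstacle is the bookkeeping in expanding $h_v$ into the seven coefficients of $H_v$. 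To guard against errors I would verify the expansion by two checks: setting $v=0$, and confirming $\int_0^1f_V=1$ and the mean $\frac{6}{10}$.

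Finally, the standard deviation is $\Delta=\sqrt V$, and the change of variables $v=x^2$ gives its density $2x f_V(x^2)$.
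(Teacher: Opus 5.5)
Your proposal is correct and takes essentially the same route as the paper. You substitute $s=r^2+v$ in \eqref{eq:m101} with unit Jacobian, use evenness in $r$ for the factor $2$, split the $t$-range at the roots $\alpha_\pm(v)$ of $t^2-t+v$, take the mean from Proposition~\ref{prop:moments}, and take $H_v$ as the primitive of $h_v$ with $2c_3=315$. Your extra checks ($\alpha_+(v)<c(v)$, $V\in[0,1]$ on the triangle, and the identification $\alpha=\alpha_-$, $\beta=\alpha_+$) make explicit details the paper leaves implicit, and expanding $h_v$ does give exactly the seven coefficients of $H_v$ as stated.
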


\begin{proof}
Substitute $s=r^2+v$ in \eqref{eq:m101}; the Jacobian of $(r,s)\mapsto(r,v)$ is $1$. The support conditions
become $r^2+v\leqslant1$ and $r^2-|r|+v\geqslant0$. With $t=|r|$, the second condition holds for all $t$ when
$v\geqslant\frac14$ and means $t\leqslant\alpha_-(v)$ or $t\geqslant\alpha_+(v)$ when $v<\frac14$. The integrand is even in $r$,
which gives the factor $2$. For $K=3$, $2c_3=315$ and $H_v$ is the primitive of $h_v$ with $H_v(0)=0$. The mean
follows from Proposition~\ref{prop:moments} with $\sigma^2(A)=\frac23$.
\end{proof}

The threshold $v=\frac14$ is the largest variance on the lower boundary $x=\sqrt{(a_{k+1}-r)(r-a_k)}$ of
Corollary~\ref{cor:qutrit}: below it, the set of means compatible with a given variance splits into three
intervals (Figure~\ref{fig:variance}(a)).

\begin{figure}[t]
\centering
\includegraphics{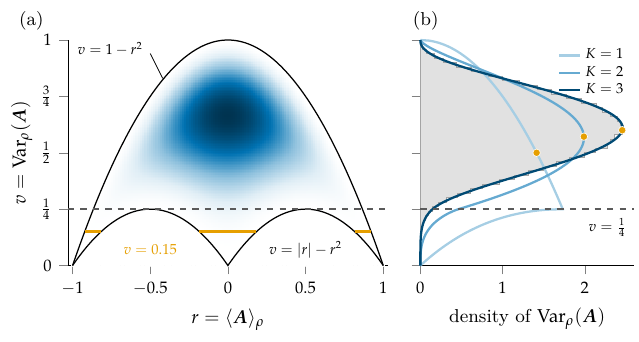}
\caption{The qutrit variance law of Theorem~\ref{th:variance}, $\bsA=\diag(-1,0,1)$. (a) Joint density of
$(\langle \bsA\rangle_\rho,\Var_\rho(\bsA))$ for $\rho\sim\mu_{3,3}$. The horizontal section at $v=0.15$ (orange)
meets the support in three intervals, while sections with $v\geqslant\frac14$ (dashed) meet it in one.
(b) Density of $\Var_\rho(\bsA)$ for $K=1,2,3$, drawn on the same vertical axis as (a); the histogram shows
$2\times10^5$ samples for $K=3$, and the orange dot on each curve marks its mean $\frac{2K}{3K+1}$. The density changes form at
$v=\frac14$.}\label{fig:variance}
\end{figure}

%===================================================%
\section{The overlap of two random states}\label{sect:overlap}
%===================================================%

Let $\rho\sim\mu_{d,K}$ and $\tau\sim\mu_{d,L}$ be independent and $X=\Tr{\rho\tau}$.

\begin{prop}\label{prop:overlapmoments}
$\bbE [X]=\frac1d$ and $\Var(X)=\dfrac{d^2-1}{d^2(dK+1)(dL+1)}$.
\end{prop}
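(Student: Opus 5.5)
The plan is to condition on $\tau$ and apply Proposition~\ref{prop:moments} with $\bsA=\tau$. Then use independence and the second-moment formula \eqref{eq:secondmoment} a second time, now for $\tau$.

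\emph{Mean.} Conditionally on $\tau$, Proposition~\ref{prop:moments} gives $\bbE[\Tr{\rho\tau}\mid\tau]=\frac1d\Tr{\tau}=\frac1d$. Taking expectations gives $\bbE[X]=\frac1d$. Alternatively, $\bbE[X]=\Tr{\bbE[\rho]\bbE[\tau]}=\Tr{\I_d/d^2}=\frac1d$.

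\emph{Variance.} Use the law of total variance. Since $\bbE[X\mid\tau]=\frac1d$ is constant, $\Var(X)=\bbE[\Var(X\mid\tau)]$. Applying the covariance formula of Proposition~\ref{prop:moments} with $\bsA=\bsB=\tau$ and $\Tr{\tau}=1$ gives
\[
\Var(X\mid\tau)=\frac{\Tr{\tau^2}-\tfrac1d}{d(dK+1)}.
\]
Taking the trace of \eqref{eq:secondmoment} against the swap (with $L$ in place of $K$) gives $\bbE[\Tr{\tau^2}]=\frac{d+L}{dL+1}$. Therefore
\[
\bbE\Big[\Tr{\tau^2}-\tfrac1d\Big]=\frac{d(d+L)-(dL+1)}{d(dL+1)}=\frac{d^2-1}{d(dL+1)}.
\]
Dividing by $d(dK+1)$ yields $\Var(X)=\frac{d^2-1}{d^2(dK+1)(dL+1)}$, as claimed.

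There is no real obstacle here. The only point to check is that the formula is symmetric in $K$ and $L$, and it is; this is a useful consistency check on the arithmetic.
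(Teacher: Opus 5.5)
Your proposal is correct and follows the paper's proof: condition on $\tau$, use Proposition~\ref{prop:moments} to get $\bbE[X\mid\tau]=\frac1d$ and $\Var[X\mid\tau]=\frac{\Tr{\tau^2}-\frac1d}{d(dK+1)}$, then average using $\bbE[\Tr{\tau^2}]=\frac{d+L}{dL+1}$. Your explicit use of the law of total variance, where the constant conditional mean removes the second term, spells out a step the paper leaves implicit.
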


\begin{proof}
Conditionally on $\tau$, Proposition~\ref{prop:moments} with $\bsA=\tau$ gives 
$$
\bbE[X|\tau]=\frac1d,\quad \Var[X|\tau]=\frac{\Tr{\tau^2}-\tfrac1d}{d(dK+1)}. 
$$
Average over $\tau$ with $\bbE[\Tr{\tau^2}]=\frac{d+L}{dL+1}$. The desired result follows.
\end{proof}

\subsection{Qubits: mode at the mean}

\begin{thrm}\label{th:qubitmode}
Let $\rho,\tau$ be independent random qubit states whose Bloch vectors have rotation-invariant laws, with
radii $A,B\in[0,1]$. Conditionally on $AB>0$, $X=\Tr{\rho\tau}$ is uniform on $[\frac{1-AB}2,\frac{1+AB}2]$. Consequently,
apart from an atom at $\frac12$ of mass $\Pr\Set{AB=0}$, $X$ has the density
\begin{eqnarray}\label{eq:qubitmix}
p_2(x)=\bbE\Br{\tfrac{1}{AB}\chi_{\set{|2x-1|<AB}}},
\end{eqnarray}
which is symmetric about $\frac12$ and non-increasing in $|2x-1|$. In particular $\frac12$ is both the mean
and a mode, and $p_2(\frac12)=\bbE[A^{-1}]\bbE[B^{-1}]$.
\end{thrm}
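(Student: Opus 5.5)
The plan is to write $\rho=\frac12(\I_2+\bsr\cdot\bdsigma)$ and $\tau=\frac12(\I_2+\bss\cdot\bdsigma)$. Then $\Tr{\rho\tau}=\frac12(1+\bsr\cdot\bss)$, so $X=\frac12(1+AB\,\cos\Theta)$, where $\Theta$ is the angle between the two Bloch vectors. The proof then proceeds by conditioning on the radii.

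\textbf{Step 1: the angle is independent of the radii.} Rotation invariance of the law of $\bsr$ means that, conditionally on $A=|\bsr|>0$, the direction $\bsr/A$ is uniform on $S^2$ and independent of $A$. The same holds for $\bss$ and $B$. Since $\rho$ and $\tau$ are independent, the two directions are independent of each other and of $(A,B)$. Fixing the direction of $\bss$ and using uniformity of the direction of $\bsr$, the cosine $\cos\Theta$ is the projection of a uniform point on $S^2$ onto a fixed axis. By Archimedes' theorem (the line version of Lemma~\ref{lem:archimedes}(i) with $K=1$), $\cos\Theta$ is uniform on $[-1,1]$ and independent of $(A,B)$.

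\textbf{Step 2: the conditional law.} Conditionally on $(A,B)$ with $AB>0$, the variable $X$ is therefore uniform on $[\frac{1-AB}2,\frac{1+AB}2]$, with density $\frac1{AB}\chi_{\set{|2x-1|<AB}}$. On the event $AB=0$ we have $X=\frac12$ exactly, which produces the atom of mass $\Pr\set{AB=0}$. Averaging the conditional density over $(A,B)$ by Fubini gives \eqref{eq:qubitmix}.

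\textbf{Step 3: shape of the density and its value at $\frac12$.} Each conditional density depends on $x$ only through $|2x-1|$. Each is also non-increasing in $|2x-1|$, being an indicator of $\set{|2x-1|<AB}$ times a constant. Symmetry and monotonicity are preserved under mixing, so $p_2$ inherits both. Hence $p_2$ attains its supremum at $x=\frac12$, and $\frac12$ is a mode. It is also the mean, either by symmetry or by Proposition~\ref{prop:overlapmoments}. At $x=\frac12$ the indicator equals $1$ whenever $AB>0$, so $p_2(\frac12)=\bbE[(AB)^{-1}\chi_{\set{AB>0}}]$. Independence of $A$ and $B$ factorizes this as $\bbE[A^{-1}]\bbE[B^{-1}]$, where the expectations are restricted to positive radii and may be $+\infty$.

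\textbf{Main obstacle.} The only delicate point is Step 1: the independence of the direction from the radius, and the handling of the null events $A=0$ or $B=0$. I would make this rigorous by disintegrating the rotation-invariant law into its radial law and normalized surface measure on spheres, which is a standard fact. Once that is in place, the remaining steps reduce to the conditional uniform law.
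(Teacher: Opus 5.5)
Your proposal is correct and takes essentially the same route as the paper. You write $X=\frac12(1+AB\cos\Theta)$ and use rotation invariance with Archimedes' theorem to make $\cos\Theta$ uniform on $[-1,1]$ and independent of $(A,B)$. You then average the conditional uniform densities and read off symmetry, monotonicity and $p_2(\frac12)$ from the mixture. Your handling of the atom and of a possibly infinite $\bbE[A^{-1}]$ is a bit more careful than the paper's.

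One small fix: Proposition~\ref{prop:overlapmoments} covers only the induced ensembles $\mu_{2,K}$. For a general rotation-invariant law, justify the mean $\frac12$ by the symmetry argument you also give.
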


\begin{proof}
It is easily seen that 
$$
X=\Tr{\rho\tau}=\frac{1+AB\cos\Theta}2,
$$
where $\Theta$ is the angle between the Bloch vectors. By rotation invariance and
Archimedes' theorem, $\cos\Theta$ is uniform on $[-1,1]$ and independent of $(A,B)$. Averaging the conditional
uniform densities gives \eqref{eq:qubitmix}, and each conditional density is symmetric and non-increasing in
$|2x-1|$. Specifically, Conditionally on $A=a$ and $B=b$, $X$ is uniform on
$\Br{\frac{1-AB}2,\frac{1+AB}2}$ with density
$$
f_X(x|a,b)=\frac1{ab}\chi_{\abs{2x-1}\leqslant ab}.
$$
Averaging over $a$ and $b$ gives
$$
p_2(x)=\bbE[\tfrac1{AB}\chi_{\abs{2x-1}\leqslant AB}]=9\int^1_0\int^1_0 ab \chi_{\abs{2x-1}\leqslant ab}\dif a\dif
b.
$$
Let $r=\abs{2x-1}$. Then
\begin{eqnarray*}
p_2(x) &=& 9 \int^1_r \dif a a\int^1_{\frac ra}\dif b b =
\frac94(1-r^2)+\frac92 r^2\ln r.
\end{eqnarray*}
Since $1-r^2=4x(1-x)$, we recover
\begin{eqnarray}
p_2(x) = 9x(1-x)+\frac92(1-2x)^2\ln\abs{1-2x}.
\end{eqnarray}
At $x=\frac12$, the logarithmic term vanishes in the limiting sense
and $p_2(\frac12)=\frac94=\bbE[A^{-1}]\bbE[B^{-1}]$, where 
$$
\bbE[A^{-1}]=\int^1_0 \frac1a(3a^2)\dif a=\frac32=\bbE[B^{-1}]. \qedhere
$$
\end{proof}

\begin{cor}\label{cor:qubitoverlap}
For $\rho\sim\mu_{2,K}$ and $\tau\sim\mu_{2,L}$ with $K,L\geqslant2$, put $r:=|2x-1|$ and
$C_K:=2[\mathrm{B}(\frac32,K-1)]^{-1}$. Then
\begin{eqnarray}\label{eq:B1}
p_{K,L}(x)=\frac{C_KC_L}{2(L-1)}\int^1_r\dif a\Br{a(1-a^2)^{K-2}\Pa{1-\frac{r^2}{a^2}}^{L-1}},
\end{eqnarray}
where
\begin{eqnarray}
p_{K,L}(\tfrac12)=\frac4\pi\,\frac{\Gamma(K+\frac12)\Gamma(L+\frac12)}{\Gamma(K)\Gamma(L)}.
\end{eqnarray}
The value at $\frac12$ holds for all $K,L\geqslant1$. For $K=L=2$,
$$
p_2(x)\equiv p_{2,2}(x)=9x(1-x)+\frac92(1-2x)^2\log|1-2x|,\quad p_2(\tfrac12)=\frac94.
$$
\end{cor}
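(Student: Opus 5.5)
The plan is to specialise Theorem~\ref{th:qubitmode} to the induced ensembles; the only input needed is the law of the Bloch radius. By Lemma~\ref{lem:archimedes}(i), or Step~2 of its proof, for $\rho\sim\mu_{2,K}$ with $K\geqslant2$ the radius $A=|\bsr|$ has density proportional to $a^2(1-a^2)^{K-2}$ on $[0,1]$. The substitution $u=a^2$ gives
\[
\int_0^1a^2(1-a^2)^{K-2}\dif a=\tfrac12\mathrm{B}(\tfrac32,K-1),
\]
so the normalised density is $C_Ka^2(1-a^2)^{K-2}$ with $C_K=2[\mathrm{B}(\frac32,K-1)]^{-1}$. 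The same holds for $B$ with $L$. Since $A,B$ are independent and $\Pr\set{AB=0}=0$, there is no atom, and formula \eqref{eq:qubitmix} applies directly.

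Next I insert these densities into \eqref{eq:qubitmix} with $r=|2x-1|$:
\[
p_{K,L}(x)=C_KC_L\int_0^1\!\!\int_0^1 ab\,(1-a^2)^{K-2}(1-b^2)^{L-2}\chi_{\set{ab>r}}\dif a\dif b .
\]
The constraint $ab>r$ with $b\leqslant1$ forces $a>r$ and $b>r/a$. I integrate over $b$ first:
\[
\int_{r/a}^1 b(1-b^2)^{L-2}\dif b=\frac{(1-r^2/a^2)^{L-1}}{2(L-1)},
\]
which is valid for $L\geqslant2$. This yields \eqref{eq:B1} immediately.

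For the value at the centre, Theorem~\ref{th:qubitmode} gives $p_{K,L}(\frac12)=\bbE[A^{-1}]\bbE[B^{-1}]$. For $K\geqslant2$,
\[
\bbE[A^{-1}]=C_K\int_0^1a(1-a^2)^{K-2}\dif a=\frac{C_K}{2(K-1)}=\frac{1}{(K-1)\mathrm{B}(\frac32,K-1)}.
\]
Using $\Gamma(\frac32)=\frac{\sqrt\pi}2$ and $(K-1)\Gamma(K-1)=\Gamma(K)$, this simplifies to $\frac{2}{\sqrt\pi}\frac{\Gamma(K+\frac12)}{\Gamma(K)}$. Multiplying the $K$ and $L$ factors gives the stated $\frac4\pi\frac{\Gamma(K+\frac12)\Gamma(L+\frac12)}{\Gamma(K)\Gamma(L)}$.

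For $K=1$ the state is pure, so $A\equiv1$ and $\bbE[A^{-1}]=1$. The same expression gives $\frac{2}{\sqrt\pi}\Gamma(\frac32)/\Gamma(1)=1$, so the centre formula holds for all $K,L\geqslant1$. This is the only point needing care: the integral \eqref{eq:B1} itself requires $K,L\geqslant2$, but the value at $\frac12$ extends through Theorem~\ref{th:qubitmode}. The case $K=L=2$ has $C_2=3$ and was already evaluated in the proof of Theorem~\ref{th:qubitmode}, giving $9x(1-x)+\frac92(1-2x)^2\log|1-2x|$ and $p_2(\frac12)=\frac94$. I do not expect a real obstacle anywhere; the only bookkeeping is the Gamma-function simplification and the $K=1$ boundary case.
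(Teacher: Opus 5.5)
Your proposal is correct and follows the paper's proof. You take the Bloch-radius density $C_K a^2(1-a^2)^{K-2}$ from Lemma~\ref{lem:archimedes}(i), insert it into \eqref{eq:qubitmix}, and integrate over $b\in(r/a,1]$ first. The centre value then comes from $\bbE[A^{-1}]=\frac{C_K}{2(K-1)}$ together with $A\equiv1$ for $K=1$; you simply spell out the normalisation and Gamma-function bookkeeping that the paper leaves implicit.
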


\begin{proof}
By Lemma~\ref{lem:archimedes}(i), Bloch length $A$ has density $C_Ka^2(1-a^2)^{K-2}$ on $[0,1]$. Integrating
\eqref{eq:qubitmix} first over $b\in(\frac ra,1]$ gives \eqref{eq:B1}. Moreover
$$
\bbE[A^{-1}]=\frac{C_K}{2(K-1)}=\frac{\Gamma(K+\frac12)}{\Gamma(\frac32)\Gamma(K)},
$$
and $A\equiv1$ when $K=1$.
\end{proof}

\subsection{Qutrits: the mode separates from the mean}

The symmetry used in Theorem~\ref{th:qubitmode} is special to $d=2$: the Bloch inversion
$\bsr\mapsto-\bsr$ maps the state space onto itself and sends $X=\Tr{\rho(\bsr)\tau}$ to $1-X=\Tr{\rho(-\bsr)\tau}$, where $\rho(\bsr)=\frac12(\I+\bsr\cdot\boldsymbol{\sigma})$. For $d=3$, there is no such
symmetry, and the mean is not a mode. We take the Hilbert--Schmidt case $K=L=3$, write $p_3$ for the density
of $X$, and use the representation
\begin{eqnarray}\label{eq:p3rep}
p_3(r)=\int_{\Omega_3}p_{3,3}(r|\bdlambda)\dif\nu_3(\bdlambda)=\int_{\Omega_3}p_3(r|\bdlambda)\dif\nu_3(\bdlambda),
\end{eqnarray}
which follows by conditioning on the spectrum $\bdlambda$ of $\tau$ and applying
Theorem~\ref{th:spline} with $\bsA=\tau$; here $\nu_3$ is \eqref{eq:hsorder} with $N_3=10080$, and
$p_{3,3}(r|\bdlambda)$ is given by \eqref{eq:qutritlocal}. We see that $0<a_1<\frac13$ and $a_1<a_2<\frac{1-a_1}2<\frac12$ and $a_3=1-a_1-a_2$.\\
%Note that $N_3=10080$.
(i) $r\in[\tfrac12,1)$. We have that the above integral is not
vanished if and only if $r\in(a_2,a_3)$, the open interval. Then the
domain of integral is given by the set of binary coordinates
\begin{eqnarray}
D(r)=D^{(1)}_{23}(r) =
\Set{(a_1,a_2):0<a_1<\tfrac{1-r}2,a_1<a_2<1-a_1-r}.
\end{eqnarray}
Thus
\begin{eqnarray}
p_3(r) = (-241920)\cdot\int^{\frac{1-r}2}_0\dif a_1
\int^{1-a_1-r}_{a_1}\dif a_2 \frac{c_3(r|\bsa)}{V_3(\bsa)^3}.
\end{eqnarray}
(ii) $r\in[\tfrac13,\tfrac12]$. Then the domain of integral is given
by
\begin{eqnarray}
D(r) = D^{(2)}_{23}(r)\cup D^{(3)}_{23}(r)\cup D^{(1)}_{12}(r).
\end{eqnarray}
where
\begin{eqnarray}
D^{(2)}_{23}(r) &=& \Set{(a_1,a_2):0<a_1<1-2r, a_1<a_2<r},\\
D^{(3)}_{23}(r) &=& \Set{(a_1,a_2):1-2r<a_1<\tfrac{1-r}2, a_1<a_2<1-a_1-r},\\
D^{(1)}_{12}(r) &=& \Set{(a_1,a_2):0<a_1<1-2r,r<a_2<\tfrac{1-a_1}2}.
\end{eqnarray}
Note that on $D^{(2)}_{23}(r)\cup D^{(3)}_{23}(r)$, we see
$r\in(a_2,a_3)$; on $D^{(1)}_{12}(r)$, we have $r\in(a_1,a_2)$. Thus
we have
\begin{eqnarray*}
p_3(r)&=& (-241920)\cdot\int^{1-2r}_0\dif a_1 \int^{r}_{a_1}\dif a_2
\frac{c_3(r|\bsa)}{V_3(\bsa)^3}+(-241920)\cdot\int^{\frac{1-r}2}_{1-2r}\dif
a_1
\int^{1-a_1-r}_{a_1}\dif a_2 \frac{c_3(r|\bsa)}{V_3(\bsa)^3}\\
&&+241920\cdot\int^{1-2r}_0\dif a_1 \int^{\frac{1-a_1}2}_r\dif a_2
\frac{c_1(r|\bsa)}{V_3(\bsa)^3}.
\end{eqnarray*}
(iii) $r\in(0,\tfrac13]$. Then the domain of integral is given by
\begin{eqnarray}
D(r) = D^{(4)}_{23}(r)\cup D^{(2)}_{12}(r),
\end{eqnarray}
where
\begin{eqnarray}
D^{(4)}_{23}(r) &=& \Set{(a_1,a_2):0<a_1<r, a_1<a_2<r},\\
D^{(2)}_{12}(r) &=& \Set{(a_1,a_2):0<a_1<r, r<a_2<\tfrac{1-a_1}2}.
\end{eqnarray}
Note that on $D^{(4)}_{23}(r)$, we see $r\in(a_2,a_3)$; on
$D^{(2)}_{12}(r)$, we have $r\in(a_1,a_2)$. Thus we have
\begin{eqnarray*}
p_3(r)&=& (-241920)\cdot\int^r_0\dif a_1 \int^{r}_{a_1}\dif a_2
\frac{c_3(r|\bsa)}{V_3(\bsa)^3}+241920\cdot\int^r_0\dif a_1
\int^{\frac{1-a_1}2}_r\dif a_2 \frac{c_1(r|\bsa)}{V_3(\bsa)^3}.
\end{eqnarray*}

\begin{thrm}\label{th:qutritmode}
The density $p_3$ is continuously differentiable near the mean $r=\frac13$, and
\begin{eqnarray}
p_3(\tfrac13)&=&-\frac{143560}{243}+\frac{43870}{81}\log3=4.23194778\ldots,\label{eq:p3center}\\
p_3'(\tfrac13)&=&-\frac{910420}{81}+\frac{276115}{27}\log3=-4.81486102\ldots<0.\label{eq:p3deriv}
\end{eqnarray}
Hence $\frac13$ is not a local maximum of $p_3$, and the mode of the Hilbert--Schmidt qutrit overlap
differs from its mean.
\end{thrm}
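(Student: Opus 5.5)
The plan is to work directly with the representation \eqref{eq:p3rep} and the piecewise formulas already derived for $r\in[\frac13,\frac12]$ (case (ii)) and $r\in(0,\frac13]$ (case (iii)). Here $p_3(r)$ is written as a sum of double integrals over explicit regions in the $(a_1,a_2)$-plane. The integrands are the rational functions $c_1(r|\bsa)/V_3(\bsa)^3$ and $c_3(r|\bsa)/V_3(\bsa)^3$, where $c_k$ is built from $\Phi$ and is a polynomial in $r$ and $\bsa$, and $a_3=1-a_1-a_2$.

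\textbf{Continuous differentiability at $\frac13$.} First we check that these integrands are integrable. Dividing by $V_3^3$ is harmless except near coincident eigenvalues. There $\Phi$ carries factors $(a_3-a_2)^3$ and $(r-a_k)^5$, and the $V_3^5$ denominator of \eqref{eq:qutritlocal} was already partly cancelled by the $V_3^2$ weight of $\nu_3$. So each piece, together with its first $r$-derivative, is bounded or integrable on its region.

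Next, in case (ii) the region $D^{(2)}_{23}\cup D^{(3)}_{23}$ is bounded by the curves $a_1=1-2r$, $a_2=r$ and $a_2=1-a_1-r$, whose endpoints move smoothly in $r$. Leibniz differentiation then gives $p_3'$ on each side as interior integrals of $\partial_r c_k/V_3^3$ plus boundary terms.

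The boundary terms on $a_2=r$ or $a_3=r$ vanish, because $c_1$ and $c_3$ vanish to order $5$ at $r=a_k$. That is, the spline $p_{3,3}(\cdot|\bsa)$ is $C^{4}$ at its knots by Theorem~\ref{th:spline}. The other boundary terms come from the edges $a_1=1-2r$, which merge as $r\to\frac13^{+}$, and from $a_1=r$ in case (iii). These are paired so that they cancel or tend to a common value.

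Comparing the one-sided limits of the case (ii) and case (iii) expressions at $r=\frac13$ then shows that $p_3$ and $p_3'$ are continuous there. At $r=\frac13$, case (ii) degenerates: $D^{(2)}_{23}$ has $a_1<1-2r=\frac13$, matching the $a_1<r$ constraint of case (iii). So it suffices to show that both formulas coincide at $r=\frac13$ together with their derivatives. This follows because the region decomposition differs only by sets on which the integrand vanishes to high order.

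\textbf{Exact values.} We evaluate at $r=\frac13$ using the case (iii) formula, which is simpler:
\[
p_3(\tfrac13)=-241920\int_0^{1/3}\!\!\int_{a_1}^{1/3}\frac{c_3}{V_3^3}\,\dif a_2\,\dif a_1+241920\int_0^{1/3}\!\!\int_{1/3}^{(1-a_1)/2}\frac{c_1}{V_3^3}\,\dif a_2\,\dif a_1 .
\]
The derivative $p_3'(\tfrac13)$ is the same expression with $\partial_r c_k$ in place of $c_k$, with no boundary terms by the vanishing above. The inner integrals are of rational functions in $a_2$ whose denominator is the product
\[
(a_2-a_1)^3(1-a_1-2a_2)^3(1-2a_1-a_2)^3 .
\]
Partial fractions in $a_2$ produce rational terms and logarithms of linear forms in $a_1$. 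The outer integration over $a_1\in(0,\frac13)$ yields rational numbers plus multiples of $\log 3$ and $\log 2$. The endpoints $\frac13$ and $\frac{1-a_1}{2}$ give arguments such as $1-3a_1$ and $\frac{1-a_1}{2}$.

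We then verify that the $\log 2$ terms cancel, leaving \eqref{eq:p3center} and \eqref{eq:p3deriv}. The negativity $p_3'(\frac13)<0$ follows from $\log 3>1.0986$ and the resulting numeric value $-4.8149$. A strictly negative derivative at an interior point rules out a local maximum, which proves the final claim.

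\textbf{Main obstacle.} The hardest step is the symbolic evaluation of these double integrals with exact cancellation of the spurious singularities at $a_2\to a_1$, at $a_1\to\frac13$ (where all three eigenvalues approach $\frac13$), and of the $\log2$ terms. I would first substitute $u=a_2-a_1$ and $w=\frac13-a_1$ to make the triple-coincidence point the origin, and perform the computation with computer algebra. As an independent check, I would compare against Monte Carlo estimates of $p_3$ near $\frac13$ and a finite-difference slope.
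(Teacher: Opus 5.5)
Your plan is correct, but it takes a different route from the paper. You integrate the case~(ii)/(iii) formulas directly over the $(a_1,a_2)$-plane, so the integrals are two-dimensional, and you rely on computer algebra to evaluate them and to cancel the spurious singular and $\log 2$ terms.

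The paper instead uses the spectral gaps $g=\lambda_2-\lambda_1$, $h=\lambda_3-\lambda_2$ and the affine covariance of the spline. For $t=h/g\le1$ it writes $p_{3,3}(r|\bdlambda)=g^{-1}q_t\big((r-\lambda_1)/g\big)$. For $t=g/h\le1$ it reverses the spectrum and writes $p_{3,3}(r|\bdlambda)=h^{-1}q_t\big((\lambda_3-r)/h\big)$. Here $q_t$ is the explicit degree-seven branch on $[0,1]$ for the reference spectrum $(0,1,1+t)$, and $r=\frac13$ is sent to $\frac{2+t}{3}$ in both cases. The weight becomes $3360\,t^2(1+t)^2g^7\,\dif g\,\dif t$ (resp.\ with $h$), so the radial integral is elementary and
\[
p_3'(\tfrac13)=560\int_0^1 t^2(1+t)^2\vartheta(t)\Big[(2+t)^{-6}-(1+2t)^{-6}\Big]\dif t,
\qquad \vartheta(t)=q_t'\big(\tfrac{2+t}{3}\big).
\]
The same reduction with exponent $-7$ gives $p_3(\frac13)$.

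What the paper's route buys:
\begin{itemize}
\item a one-dimensional rational integral instead of a two-dimensional one;
\item the sign $p_3'(\frac13)<0$ read off the integrand, independent of evaluating the closed form, because $\vartheta>0$ and $(2+t)^{-6}<(1+2t)^{-6}$ on $(0,1)$;
\item transparent integrability near the maximally mixed state, since the conditional derivative $g^{-2}\vartheta(t)$ sits against the weight $g^7$.
\end{itemize}
Your route avoids the scaling/reversal argument and handles the moving regions and one-sided limits more explicitly than the paper does. The price is that the sign then comes only from a closed form that nearly cancels ($\approx -11239.8+11234.9$). So the qualitative conclusion rests entirely on the exact symbolic evaluation.

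There are two slips to fix.

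First, the boundary terms on the line $a_2=r$ do not vanish individually. The functions $c_1$ and $c_3$ vanish to fifth order at $r=a_1$ and $r=a_3$ respectively, not at $r=a_2$. What actually happens is that the contributions from $D^{(4)}_{23}$ and $D^{(2)}_{12}$ cancel, because $c_1+c_3=-c_2$ vanishes to fifth order at $r=a_2$; this is the $C^4$ property you cite. It is cleaner to differentiate $\int_{\Omega_3}p_{3,3}(r|\bdlambda)\,\dif\nu_3$ over the fixed domain $\Omega_3$ by dominated convergence.

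Second, proving $-\frac{910420}{81}+\frac{276115}{27}\log 3<0$ requires an upper bound, namely $\log 3<\frac{910420}{828345}\approx1.09908$ (for instance $\log 3<1.099$). The lower bound $\log 3>1.0986$ that you quote points the wrong way.
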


\begin{proof}
We can use the above expression of $p_3(r)$, by \textsc{Mathematica}, we can get that
\begin{eqnarray*}
p_3(\tfrac13)&=& 480\int^1_0 t^2(1+t)^2 q_t(\tfrac{2+t}3)\Pa{(2+t)^{-7}+(1+2t)^{-7}}\dif t\\
&=&-\frac{143560}{243}+\frac{43870}{81}\log3,
\end{eqnarray*}
where $q_t$ is from \eqref{eq:qt}. Parametrize $\Omega_3=\Set{\bdlambda=(\lambda_1,\lambda_2,\lambda_3):\lambda_3>\lambda_2>\lambda_1,\sum^3_{k=1}\lambda_k=1}$ by the spectral gaps $g=\lambda_2-\lambda_1$ and $h=\lambda_3-\lambda_2$:
\begin{eqnarray}\label{eq:gaps}
\lambda_1=\tfrac13(1-2g-h),\quad\lambda_2=\tfrac13(1+g-h),\quad\lambda_3=\tfrac13(1+g+2h),
\end{eqnarray}
with $g,h>0$, $2g+h<1$. The Lebesgue measure on $\Omega_3$ becomes $\dif\lambda_1\dif\lambda_2=\frac13\dif g\dif h$, and the Vandermonde is $V_3(\bdlambda)=(\lambda_2-\lambda_1)(\lambda_3-\lambda_1)(\lambda_3-\lambda_2)=gh(g+h)$. Hence 
\begin{eqnarray}\label{eq:nu3-gaps}
\dif\nu_3(\bdlambda) = 3360g^2 h^2 (g+h)^2\dif g\dif h,
\end{eqnarray}
where $g\in(0,\tfrac12)$ and $0<h<1-2g$.
\paragraph{Reduction to a one-parameter family.} The key observation is that the conditional density $p_{3,3}(r|\bdlambda)$ 
depends on $\bdlambda$ only through the gap ratio $t=h/g$ (or its 
reciprocal), up to an affine rescaling. To see this, fix $g>0$ and write 
$h=tg$ with $t>0$. Then
$$
\lambda_1=\tfrac{1}{3}(1-(2+t)g),\qquad
\lambda_2=\lambda_1+g,\qquad
\lambda_3=\lambda_1+(1+t)g.
$$
The spectrum of $\bsA=\tau$ is therefore an affine image of the reference 
spectrum $(0,1,1+t)$:
\begin{eqnarray} \label{eq:affine}
\lambda_k=\lambda_1+g\,a_k^{(t)},\qquad
(a_1^{(t)},a_2^{(t)},a_3^{(t)})=(0,1,1+t).
\end{eqnarray}
Since the map $\bsA\mapsto \alpha \bsA+\beta\I$ sends 
$\Tr{\bsA\rho}$ to $\alpha\Tr{\bsA\rho}+\beta$, the 
conditional density transforms as
\begin{eqnarray*}
p_3(r|\bdlambda)=g^{-1}q_t\!\left(\frac{r-\lambda_1}{g}\right), 
\label{eq:rescaling}
\end{eqnarray*}
where $q_t$ is the density \eqref{eq:qt} of $\Tr{\bsA_t\rho}$ for the 
reference observable
$\bsA_t=\diag(0,1,1+t)$, where $0<t\leqslant 1$,
and $\rho\sim\mu_{3,3}$. The prefactor $g^{-1}$ is the Jacobian of the 
affine rescaling; the shift by $\lambda_1$ is absorbed into the argument. 
For $t>1$, we use the reciprocal $t\mapsto t^{-1}$ with the spectrum 
reversed; see below. 
\paragraph{The reference density and its derivative at the mean.}
For the reference observable $\bsA_t=\diag(0,1,1+t)$, the mean of 
$\Tr{\bsA_t\rho}$ is $\frac{\Tr{\bsA_t}}3=\frac{2+t}3$. The centre
$r_0=\frac{2+t}3$ lies in $(\frac23,1]\subset[a_1,a_2]$, so \eqref{eq:qutritlocal} and \eqref{eq:p3rep} lead to
\begin{eqnarray}\label{eq:qt}
q_t(r)=\frac{24r^5}{(1+t)^3}
\Br{7-7r\Pa{1+\frac{1}{1+t}}
+r^2\Pa{(2+\frac{3}{1+t}+\frac{2}{(1+t)^2}}},
\quad r\in[0,1]. 
\end{eqnarray}
Note that $q_t$ is supported on $[0,1+t]$, but the expression above is the 
branch valid on $[0,1]$; the mean $\frac{2+t}3\in(\frac23,1]$ lies in this interval 
for $t\in(0,1]$, so \eqref{eq:qt} is the relevant branch. Differentiating 
\eqref{eq:qt} with respect to $r$ and evaluating at $r_0=\frac{2+t}3$ yields, 
after simplification,
\begin{equation}\label{eq:Dt}
\vartheta(t):=q'_t\Pa{\frac{2+t}{3}}
=\frac{56(t-1)(t+2)^4(2t+1)(t^2-t-1)}{243(t+1)^5}. 
\end{equation}
The derivation of \eqref{eq:Dt} is a straightforward but tedious 
differentiation; the key point is that the sign of $\vartheta(t)$ on $(0,1)$ is 
determined by the factor $t^2-t-1$, which is negative for $t\in(0,1)$ 
(since $t^2-t-1=0$ at $t=\frac{1+\sqrt{5}}2\approx 1.618$). All other factors 
are positive on $(0,1)$. Hence
\begin{eqnarray}\label{eq:Dt-sign}
\vartheta(t)>0\qquad\text{for all }t\in(0,1). 
\end{eqnarray}
This is the source of the negative derivative of $p_3$ at the mean.
\paragraph{From the conditional derivative to $p'_3(\frac13)$.}
We now assemble the integral representation of $p'_3(\frac13)$. Write 
$p'_3(\frac13)$ as the sum of two contributions, corresponding to the regions 
$h\leqslant g$ (i.e., $t\leqslant 1$) and $h\geqslant g$ (i.e., $t\geqslant 1$).

\emph{Region I: $h\leqslant g$.} Write $h=tg$ with $0<t\leqslant 1$. The constraint 
$2g+h<1$ becomes $0<g<\frac1{2+t}$. Thus $\dif g\dif h = g\dif g\dif h$ and the weight \eqref{eq:nu3-gaps} becomes
\begin{eqnarray}\label{eq:weight-I}
\dif \nu_3=3360t^2(1+t)^2g^7\dif g\dif t, \quad 0<g<\tfrac1{2+t}.
\end{eqnarray}
The conditional rescaling is $p_3(r|\bdlambda)=g^{-1}q_t\Pa{\frac{r-\lambda_1}g}$. The conditional derivative at the mean is, by the 
chain rule,
\begin{eqnarray} \label{eq:cond-deriv-I}
\partial_{r=\frac13} p_{3,3}(r|\bdlambda)=g^{-2}\vartheta(t),
\end{eqnarray}
because the argument $\frac{r-\lambda_1}g$ equals $\frac{2+t}3$ at $r=\frac13$ (by 
\eqref{eq:gaps} and \eqref{eq:affine}), and the derivative brings down a 
factor $g^{-1}$ from the rescaling, giving 
$g^{-1}\cdot g^{-1}\vartheta(t)=g^{-2}\vartheta(t)$. Therefore the contribution of Region I 
to $p'_3(\frac13)$ is
\begin{eqnarray}\label{eq:I1}
\cI_1=3360\int_0^1\dif t t^2(1+t)^2 \vartheta(t)\int_0^{\frac1{2+t}}\dif g g^5 = 560 \int^1_0 \dif t\frac{t^2(1+t)^2\vartheta(t)}{(2+t)^6}. 
\end{eqnarray}

\emph{Region II: $g\leqslant h$.} Write $g=th$ with $0<t\leqslant 1$. The 
constraint $2g+h<1$ becomes $h<\frac1{1+2t}$. The spectrum in this region is 
the reverse of the reference spectrum: if we set $h=g't$ with $g'=h$, then 
the gap ratio is $g'/h=1/t$, and the roles of $\lambda_1$ and 
$\lambda_3$ are interchanged. Equivalently, the conditional density satisfies
\begin{eqnarray}\label{eq:rescaling-II}
p_3(r|\bdlambda)=h^{-1}q_t\Pa{\frac{\lambda_3-r}{h}}, 
\end{eqnarray}
where $q_t$ is the same reference density \eqref{eq:qt} and the argument is 
$\frac{\lambda_3-r}h$. At $r=\frac13$, this argument again equals $\frac{2+t}3$ (by 
symmetry of the spectrum under reversal). The derivative brings down a factor 
$-h^{-1}$, so
\begin{eqnarray}\label{eq:cond-deriv-II}
\partial_{r=\frac13} p_{3,3}(r|\bdlambda)=-h^{-2}\vartheta(t). 
\end{eqnarray}
The weight in this region is the same as \eqref{eq:weight-I} with 
$g\leftrightarrow h$: 
$\dif\nu_3=3360t^2(1+t)^2h^7\dif h\dif t$. 
Therefore the contribution of Region II is
\begin{eqnarray}\label{eq:I2}
\cI_2=-560\int_0^1\frac{t^2(1+t)^2\vartheta(t)}{(1+2t)^6}\dif t. 
\end{eqnarray}

Combining \eqref{eq:I1} and \eqref{eq:I2}, substituting \eqref{eq:Dt} into \eqref{eq:total} and simplifying the rational 
integrand, we get 
\begin{eqnarray} \label{eq:total}
p'_3\Pa{\tfrac{1}{3}}
&=&560\int^1_0 t^2(1+t)^2 \vartheta(t)\Pa{\frac{1}{(2+t)^6}-\frac{1}{(1+2t)^6}}\dif t\\
&=&\int_0^1\frac{31360t^2(t-1)^2(t^2-t-1)(t^2+t+1)(7t^2+13t+7)}
{27(t+1)^2(t+2)^2(2t+1)^5}\dif t\\
&=&\frac{245}{81} (3381 \log 3-3716)<0
\end{eqnarray}
This proves that $\frac13$ is not a local maximum of $p_3$, so the mode 
differs from the mean.
\end{proof}

Numerically, $p_3$ attains its maximum $4.25626\ldots$ at $r\approx0.323275$
(Figure~\ref{fig:overlap}(b)). Since $\bbE[X]=\frac13$ and the variance is $\frac{8}{9\cdot100}$ by
Proposition~\ref{prop:overlapmoments}, the separation $\frac13-0.3233\approx0.010$ is about $0.11$ standard
deviations. 

\subsection{Large environments}

As $K,L\to\infty$ both states concentrate at $\frac{\I}d$, so $X\to\frac1d$; by Proposition~\ref{prop:overlapmoments} the
fluctuations are of order $(KL)^{-\frac12}$, and their limit law is not Gaussian.

\begin{thrm}\label{th:wishart}
Fix $d\geqslant2$ and let $K,L\to\infty$. Then
$$
d^2\sqrt{KL}\Pa{\Tr{\rho\sigma}-\frac1d}\ \Longrightarrow\ \sum_{j=1}^{d^2-1}\xi_j\eta_j,
$$
where $\xi_j,\eta_j$ are independent standard normal variables. The limit is a variance-gamma law with
characteristic function $(1+t^2)^{-\frac{d^2-1}2}$ and density
$$
\frac{|x|^\nu K_\nu(|x|)}{\sqrt\pi\,2^\nu\,\Gamma(\nu+\frac12)},\qquad \nu=\frac{d^2-2}2,
$$
where $K_\nu$ is the modified Bessel function of the second kind, see Appendix~\ref{app:2ndBessel}.
\end{thrm}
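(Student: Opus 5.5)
The plan is to use the Ginibre representation \eqref{eq:induced} and a Hermitian orthonormal basis of the traceless part. Write $\rho=\frac{\I_d}{d}+\rho_0$ and $\sigma=\frac{\I_d}{d}+\sigma_0$ with $\Tr{\rho_0}=\Tr{\sigma_0}=0$. Then $\Tr{\rho\sigma}-\frac1d=\Tr{\rho_0\sigma_0}$, because the cross terms vanish. Fix an orthonormal basis $\bsE_1,\ldots,\bsE_{d^2-1}$ of the traceless Hermitian matrices with respect to $\Inner{\cdot}{\cdot}$. This gives
$$
\Tr{\rho_0\sigma_0}=\sum_{j=1}^{d^2-1}\Tr{\bsE_j\rho}\Tr{\bsE_j\sigma}.
$$
Since $\rho$ and $\sigma$ are independent, it suffices to prove two separate facts:
$$
d\sqrt K\,\big(\Tr{\bsE_j\rho}\big)_j\Longrightarrow(\xi_j)_j,\qquad d\sqrt L\,\big(\Tr{\bsE_j\sigma}\big)_j\Longrightarrow(\eta_j)_j,
$$
each limit being a standard Gaussian vector. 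The conclusion then follows from the continuous mapping theorem applied to the bilinear map $(\bsx,\bsy)\mapsto\sum_j x_jy_j$, since $d^2\sqrt{KL}=(d\sqrt K)(d\sqrt L)$.

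For the Gaussian limit of $\rho$, I would write $\rho=\bsW/\Tr{\bsW}$ with $\bsW=\bsG\bsG^\dagger=\sum_{k=1}^K\bsg_k\bsg_k^\dagger$. Here the columns $\bsg_k$ are i.i.d.\ standard complex Gaussian vectors in $\bbC^d$, so $\bsW$ is a sum of $K$ i.i.d.\ rank-one matrices with mean $\I_d$. The multivariate CLT gives $(\bsW-K\I_d)/\sqrt K\Rightarrow\bsZ$, where $\bsZ$ is a Gaussian Hermitian matrix. Its covariance is computed from the fourth moment $\bbE|\inner{\bsg}{\bsX\bsg}|^2=\Tr{\bsX}^2+\Tr{\bsX^2}$, i.e.\ $\Cov(\Tr{\bsX\bsg\bsg^\dagger},\Tr{\bsY\bsg\bsg^\dagger})=\Tr{\bsX\bsY}$. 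Also $\Tr{\bsW}/K\to d$ almost surely by the LLN. Since $\Tr{\bsE_j}=0$, Slutsky's lemma then yields
$$
\sqrt K\,\Tr{\bsE_j\rho}=\frac{\Tr{\bsE_j(\bsW-K\I_d)}/\sqrt K}{\Tr{\bsW}/K}\Longrightarrow\frac{\Tr{\bsE_j\bsZ}}{d}.
$$
The limit has covariance $\Tr{\bsE_i\bsE_j}/d^2=\delta_{ij}/d^2$. Multiplying by $d$ gives the standard Gaussian vector. As a consistency check, this matches Proposition~\ref{prop:moments}, which gives $\Cov=\delta_{ij}/(d(dK+1))\sim\delta_{ij}/(d^2K)$. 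The same argument with $L$ handles $\sigma$. I expect this step to be the main technical point, but it is routine. Joint convergence of the pair holds because $\rho$ and $\sigma$ are independent, and no uniformity in how $K,L\to\infty$ is needed.

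Finally, I would identify the law of $S=\sum_{j=1}^{n}\xi_j\eta_j$ with $n=d^2-1$. Conditioning on $\eta$ gives $\bbE[e^{\mathrm{i}tS}|\eta]=e^{-t^2|\eta|^2/2}$. Since $|\eta|^2\sim\chi^2_n$,
$$
\bbE[e^{-t^2|\eta|^2/2}]=(1+t^2)^{-n/2}.
$$
The density then comes from Fourier inversion of $(1+t^2)^{-n/2}$ via Basset's integral $K_\nu(x)=\frac{\Gamma(\nu+\frac12)(2x)^{\nu}}{\sqrt\pi}\int_0^\infty\frac{\cos(xt)}{(1+t^2)^{\nu+1/2}}\dif t$ with $\nu+\frac12=\frac n2$, i.e.\ $\nu=\frac{d^2-2}2$. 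Rearranging this gives exactly the stated density $\frac{|x|^\nu K_\nu(|x|)}{\sqrt\pi\,2^\nu\Gamma(\nu+\frac12)}$; this is the Appendix~\ref{app:2ndBessel} computation.
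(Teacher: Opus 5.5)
Your proposal is correct and follows essentially the paper's route: the Wishart CLT, normalization by $\Tr{\bsW}$, the traceless decomposition of the overlap, independence with continuous mapping, and the $\chi^2$ mixture for the characteristic function. Your coordinatewise Slutsky step is a cleaner version of the paper's delta-method step, and you also justify the density, which the paper only asserts. One slip needs fixing: Basset's integral should read $K_\nu(x)=\frac{\Gamma(\nu+\frac12)\,2^{\nu}}{\sqrt\pi\,x^{\nu}}\int_0^\infty\frac{\cos(xt)}{(1+t^2)^{\nu+1/2}}\dif t$. Your version is off by a factor $x^{2\nu}$ and, used as written, would give a density proportional to $|x|^{-\nu}K_\nu(|x|)$, which is not even integrable at $0$. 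With the corrected form, $f(x)=\frac1\pi\int_0^\infty\cos(xt)(1+t^2)^{-\nu-\frac12}\dif t$ yields exactly the stated density.
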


\begin{proof}
The detailed proof can be described as follows:
\begin{itemize}
\item \textbf{Step 1: Gaussian fluctuation of the Wishart matrix.} Let $\bsG$ be a $d \times K$ complex Ginibre matrix with independent standard complex Gaussian entries, so that
$\bsW = \bsG\bsG^\dagger = \sum_{k=1}^K \bsg_k \bsg_k^\dagger$, where $\bsg_1, \ldots, \bsg_K \in \bbC^d$ are independent standard complex Gaussian vectors. Each $\bsg_k$ satisfies
$$
\bbE[\bsg_k \bsg_k^\dagger] = \I_d, \qquad \op{Var}(\Innerm{\bsg_k} {\bsX}{\bsg_k}) = \Tr{\bsX^2}
$$
for every Hermitian matrix $\bsX \in \Herm(\bbC^d)$. By the multivariate central limit theorem applied to the i.i.d. rank-one matrices $\bsg_k \bsg_k^\dagger$, we have
$$
\sqrt{K}\Pa{\frac{\bsW}{K} - \I_d} \Longrightarrow \bsZ,
$$
where $\bsZ$ is a Gaussian Hermitian matrix with
$$
\bbE[\Tr{\bsX\bsZ}] = 0, \qquad \bbE[( \Tr{\bsX\bsZ})^2] =  \Tr{\bsX^2}
$$
for all Hermitian $\bsX$. Equivalently, in an orthonormal basis $\set{\bsE_j}_{j=1}^{d^2}$ of $\Herm(\bbC^d)$ with respect to the Hilbert--Schmidt inner product, the coordinates $\Tr{\bsZ\bsE_j}$ are independent standard normal variables (with the trace coordinate having variance $d$).
\item \textbf{Step 2: The delta method for the normalized state.} The induced random state is
$\rho = \frac{\bsW}{\Tr{\bsW}}$. Define the map $\Phi: \Herm(\bbC^d) \to \Herm(\bbC^d)$ by
$\Phi(\bsW) = \frac{\bsW}{\Tr{\bsW}}$. Its differential at $\bsW_0 = K\I_d$ is
$$
D\Phi_{K\I_d}(\bsH) = \frac{\bsH}{Kd} - \frac{\Tr{\bsH}}{Kd^2}\I_d.
$$
Indeed, for $\bsW = K\I_d + \sqrt{K}\bsH$ with $\bsH$ fixed,
\begin{eqnarray*}
\Phi(\bsW) &=& \frac{K\I_d + \sqrt{K}\bsH}{Kd + \sqrt{K}\Tr{\bsH}}
= \frac{1}{d}\Pa{\I_d + \frac{\bsH}{\sqrt{K}}}\Pa{1 + \tfrac{\Tr{\bsH}}{d\sqrt{K}}}^{-1}\\
&=& \frac{1}{d}\I_d + \frac{1}{d\sqrt{K}}\Pa{\bsH - \frac{\Tr{\bsH}}{d}\mathbb{1}_d} + O(K^{-1}).
\end{eqnarray*}
Thus, by the delta method, $\sqrt{K}(\rho - \frac{1}{d}\I_d) \Longrightarrow \frac{1}{d}\bsZ_0$, where
$\bsZ_0 = \bsZ - \frac{\Tr{\bsZ}}{d}\I_d$ is the traceless part of $\bsZ$.
\item \textbf{Step 3: Independence and the overlap expansion.} Let $\rho \sim \mu_{d,K}$ and $\sigma \sim \mu_{d,L}$ be independent, constructed from independent Ginibre matrices $\bsG$ and $\bsG'$. Write
$$
\rho = \frac{1}{d}\I_d + \frac{1}{d\sqrt{K}}\bsZ_0^{(K)} + o(K^{-1/2}),
\qquad
\sigma = \frac{1}{d}\I_d + \frac{1}{d\sqrt{L}}\bsZ_0^{(L)} + o(L^{-1/2}),
$$
where $\bsZ_0^{(K)} \Rightarrow \bsZ_0$ and $\bsZ_0^{(L)} \Rightarrow \bsZ_0'$ are independent copies of the traceless Gaussian matrix.
Since $\rho - \frac{1}{d}\I_d$ and $\sigma - \frac{1}{d}\I_d$ are traceless,
$$
\Tr{\rho\sigma} - \tfrac{1}{d}
= \Tr{((\rho - \tfrac{1}{d}\I_d)(\sigma - \tfrac{1}{d}\I_d)}.
$$
Substituting the expansions,
$$
d^2\sqrt{KL}\Pa{\Tr{\rho\sigma} - \tfrac{1}{d}}
= \Tr{\bsZ_0^{(K)}\bsZ_0^{(L)}} + o(1).
$$
By independence and the continuous mapping theorem,
$\Tr{\bsZ_0^{(K)}\bsZ_0^{(L)}} \Longrightarrow \Tr{\bsZ_0 \bsZ_0'}$, where $\bsZ_0, \bsZ_0'$ are independent copies of the traceless Gaussian matrix.
\item \textbf{Step 4: Representation as a sum of products of normals.} Let $\Set{\bsE_j}_{j=1}^{d^2-1}$ be an orthonormal basis of the traceless Hermitian matrices with respect to the Hilbert--Schmidt inner product $\Inner{\bsX}{\bsY}=\Tr{\bsX\bsY}$. Then
$$
\bsZ_0 = \sum_{j=1}^{d^2-1} \xi_j \bsE_j, \qquad \bsZ_0' = \sum_{j=1}^{d^2-1} \eta_j \bsE_j,
$$
where $\xi_j, \eta_j$ are independent standard normal variables. Indeed, the coordinates $\Tr{\bsZ_0 \bsE_j}$ are independent standard normals because $\bsZ_0$ is a Gaussian matrix with covariance $\bbE[\Tr{\bsX\bsZ_0}\Tr{\bsY\bsZ_0}] = \Tr{\bsX\bsY}$ for traceless $\bsX, \bsY$. Therefore,
$\Tr{\bsZ_0 \bsZ_0'} = \sum_{j=1}^{d^2-1} \xi_j \eta_j$. This establishes the first assertion of the theorem.
\item \textbf{Step 5: Identification of the variance-gamma law.} Let $S = \sum_{j=1}^{d^2-1} \xi_j \eta_j$ with $\xi_j, \eta_j \sim \cN(0,1)$ independent. Conditionally on $\eta = (\eta_1, \ldots, \eta_{d^2-1})$,
$$
S \mid \eta \sim \cN\Pa{0, \sum_{j=1}^{d^2-1} \eta_j^2}.
$$
Since $\sum_{j=1}^{d^2-1} \eta_j^2 \sim \chi^2_{d^2-1}$, the characteristic function of $S$ is
$$
\phi_S(t) = \bbE\Pa{e^{-\frac{t^2}{2}\sum_{j=1}^{d^2-1} \eta_j^2}}= (1 + t^2)^{-\frac{d^2-1}{2}}.
$$
This is exactly the characteristic function of the variance-gamma law with parameters $\nu = \frac{d^2-2}{2}$ and $\lambda = 1$. Its density is
$$
f_S(x) = \frac{|x|^\nu K_\nu(|x|)}{\sqrt{\pi}2^\nu \Gamma(\nu + \frac{1}{2})}, \qquad \nu = \frac{d^2-2}{2},
$$
where $K_\nu$ is the modified Bessel function of the second kind.
\item \textbf{Step 6: Consistency of the limiting variance.} From Proposition~\ref{prop:overlapmoments}, the variance of $X = \Tr{\rho\sigma}$ is
$$
\op{Var}(X) = \frac{d^2-1}{d^2(dK+1)(dL+1)}.
$$
Thus
$$
\op{Var}\left(d^2\sqrt{KL}\left(X - \frac{1}{d}\right)\right)
= d^4 KL \cdot \frac{d^2-1}{d^2(dK+1)(dL+1)}
\to d^2-1.
$$
The limiting variance of $S = \sum_{j=1}^{d^2-1} \xi_j \eta_j$ is
$$
\op{Var}(S) = \sum_{j=1}^{d^2-1} \op{Var}(\xi_j \eta_j) = d^2-1,
$$
which agrees with the limit above.
\item \textbf{Step 7: Non-Gaussianity and tail behavior.} The characteristic function $(1+t^2)^{-\frac{d^2-1}{2}}$ is not of the form $e^{-ct^2}$ for any $c > 0$, so the limit is not Gaussian for any fixed $d$. The density has exponential tails because $K_\nu(|x|) \sim \sqrt{\frac{\pi}{2|x|}} e^{-|x|}$ as $|x| \to \infty$, so
$$
f_S(x) \sim \frac{|x|^{\nu-\frac12} e^{-|x|}}{2^{\nu+\frac12}\Gamma(\nu + \tfrac{1}{2})}
$$
for large $|x|$, confirming exponential decay rather than Gaussian decay. For $d = 2,  \nu = 1$, and the density is
$f_S(x) = \frac{|x| K_1(|x|)}{\pi}$, which is $C^1$ but not $C^2$ at the origin, since $K_1(|x|)$ has a logarithmic singularity in its derivative at $x = 0$.
\end{itemize}
Therefore the limit is a variance-gamma law with characteristic function $(1+t^2)^{-\frac{d^2-1}{2}}$ and density
$$
\frac{|x|^\nu K_\nu(|x|)}{\sqrt{\pi}2^\nu \Gamma(\nu + \frac{1}{2})}, \qquad \nu = \frac{d^2-2}{2}.
$$
This completes the proof. 
\end{proof}

The limiting variance $d^2-1$ agrees with Proposition~\ref{prop:overlapmoments}. For $d=2$, the limiting density is $C^1$, but not $C^2$, at the origin, and the limiting density has exponential tails for every $d$; it is not Gaussian for any fixed $d$
(Figure~\ref{fig:overlap}(c)). A Gaussian limit for expectation values with prescribed eigenvalues in
growing dimension is proved in \cite{Meckes2020}. Central limit theorems for linear spectral statistics of
large sample covariance matrices are given in \cite{Bai2004}.

\begin{figure}[t]
\centering
\includegraphics{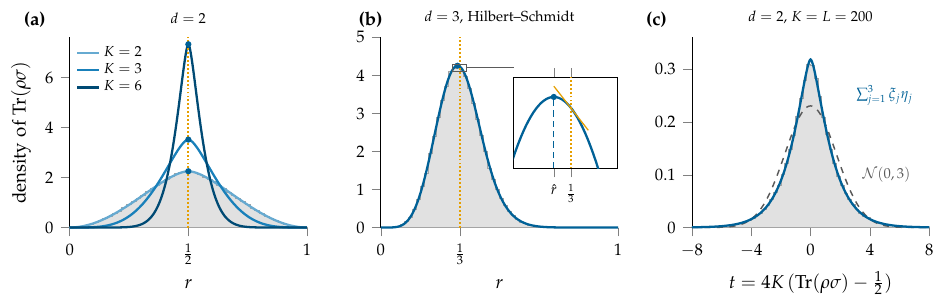}
\caption{Mode and mean of the overlap $\Tr{\rho\sigma}$. Dotted lines: means; dots: modes; histograms:
sampling. (a) Qubits, $\rho,\sigma\sim\mu_{2,K}$, Corollary~\ref{cor:qubitoverlap}: for every $K$ the mode
is at the mean $\frac12$, with height $\frac4\pi\Gamma(K+\frac12)^2/\Gamma(K)^2$. (b) Hilbert--Schmidt
qutrits, Theorem~\ref{th:qutritmode}: the inset magnifies the top of the density. The mode $\hat r\approx0.3233$
lies below the mean $\frac13$, where the tangent (orange) has slope $p_3'(\frac13)<0$. (c) $d=2$,
$K=L=200$: the standardized overlap follows the variance-gamma limit of Theorem~\ref{th:wishart} and not the
Gaussian with the same variance (dashed).}\label{fig:overlap}
\end{figure}

%======================================================================================%
\section{Uncertainty: expectation-value bodies and their laws}\label{sect:uncertainty}
%======================================================================================%

For $\bsA,\bsB\in\Herm(\bbC^d)$ the uncertainty region is
$\cU_{\bsA,\bsB}=\set{(\Delta_\rho(\bsA),\Delta_\rho(\bsB)):\rho\in\rD(\bbC^d)}$ \cite{Zhang2021,Zhang2022}. It is
the image of the compact convex set
$$
\cC_{\bsA,\bsB}=\Set{\pa{\langle \bsA\rangle_\rho,\langle \bsA^2\rangle_\rho,\langle \bsB\rangle_\rho,\langle \bsB^2\rangle_\rho}:\rho\in\rD(\bbC^d)}
$$
under $(r_1,s_1,r_2,s_2)\mapsto\Pa{\sqrt{s_1-r_1^2},\sqrt{s_2-r_2^2}}$. The support function of 
$\cC_{\bsA,\bsB}$ is
$$
h(\bst)=\lambda_{\max}(t_1\bsA+t_2\bsA^2+t_3\bsB+t_4\bsB^2),
$$
because $\max_\rho\Tr{\rho \bsH}=\lambda_{\max}(\bsH)$. The image
$\cU_{\bsA,\bsB}$ need not be convex.

\subsection{The optimal sum of variances}

\begin{prop}[\cite{Schwonnek2017}]\label{prop:sdw}
For $\bsA,\bsB\in\Herm(\bbC^d)$,
\begin{eqnarray}\label{eq:sdw}
\min_{\rho}\Br{\Var_\rho (\bsA)+\Var_\rho (\bsB)}=\min_{u\in I_{\bsA},\ v\in I_{\bsB}}\lambda_{\min}\pa{(\bsA-u\I)^2+(\bsB-v\I)^2},
\end{eqnarray}
where $I_{\bsA}=[\lambda_{\min}(\bsA),\lambda_{\max}(\bsA)]$ and $I_{\bsB}$ is defined likewise. If $c(u,v)$ denotes the
function minimized on the right and $\set{(u_i,v_j)}$ is a grid with mesh $\delta_u,\delta_v$ covering
$I_{\bsA}\times I_{\bsB}$, then $\min_{i,j}c(u_i,v_j)-w_{\bsA}\delta_u-w_{\bsB}\delta_v$ is a lower bound for \eqref{eq:sdw}, where
$w_{\bsA}=\lambda_{\max}(\bsA)-\lambda_{\min}(\bsA)$.
\end{prop}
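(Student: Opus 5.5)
The plan is to prove the identity by a min--max interchange and then treat the grid bound as a Lipschitz estimate. For any state $\rho$ and reals $u,v$ we have $\Var_\rho(\bsA)=\min_u\Tr{\rho(\bsA-u\I)^2}$, with the minimum attained at $u=\langle\bsA\rangle_\rho\in I_{\bsA}$, and similarly for $\bsB$. Hence
$$
\min_\rho\Br{\Var_\rho(\bsA)+\Var_\rho(\bsB)}=\min_{\rho}\min_{u\in I_{\bsA},v\in I_{\bsB}}\Tr{\rho\,\bsH(u,v)},\qquad \bsH(u,v)=(\bsA-u\I)^2+(\bsB-v\I)^2 .
$$
Restricting $u,v$ to $I_{\bsA},I_{\bsB}$ loses nothing, because the optimal $u$ is a mean value and so lies in the numerical range of $\bsA$. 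Two minima always commute. Minimizing over $\rho$ first gives $\min_\rho\Tr{\rho\bsH}=\lambda_{\min}(\bsH)$, which yields \eqref{eq:sdw}. No saddle-point theorem is needed here, since this is a min--min and not a min--max; this is the key observation. The minima exist by compactness and continuity.

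For the numerical bound, I would show that $c(u,v)=\lambda_{\min}(\bsH(u,v))$ is Lipschitz on $I_{\bsA}\times I_{\bsB}$. For a fixed $\rho$, $c$ is a minimum of functions $\Tr{\rho\bsH(u,v)}$, and these are differentiable with
$$
\partial_u\Tr{\rho\bsH}=-2\Tr{\rho(\bsA-u\I)}=2(u-\langle\bsA\rangle_\rho).
$$
For $u\in I_{\bsA}$ this derivative is bounded by $2w_{\bsA}$ in absolute value, and likewise in $v$. A pointwise minimum of uniformly Lipschitz functions is Lipschitz with the same constants, so
$$
|c(u,v)-c(u',v')|\leqslant 2w_{\bsA}|u-u'|+2w_{\bsB}|v-v'|.
$$
Every point of $I_{\bsA}\times I_{\bsB}$ lies within $\delta_u/2$ in $u$ and $\delta_v/2$ in $v$ of some grid node, because the grid has mesh $\delta_u,\delta_v$ and covers the rectangle. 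It follows that $\min c\geqslant\min_{i,j}c(u_i,v_j)-w_{\bsA}\delta_u-w_{\bsB}\delta_v$.

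The only delicate point is the constant, and hence the precise meaning of ``mesh''. If mesh means the spacing between adjacent nodes, the half-spacing cancels the factor $2$ and the constant is exactly as stated. I expect this bookkeeping to be the main place to be careful. The rest is the elementary interchange above.
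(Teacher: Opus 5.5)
Your proposal is correct and follows the paper's proof. The identity comes from $\Tr{\rho(\bsA-u\I)^2}=\Var_\rho(\bsA)+(u-\langle\bsA\rangle_\rho)^2$ followed by swapping the two minimizations, and the grid bound comes from Lipschitz constants $2w_{\bsA},2w_{\bsB}$ together with the half-mesh covering. The only cosmetic difference is how the Lipschitz constant is obtained: the paper uses the operator-norm bound $\norm{(\bsA-u\I)^2-(\bsA-u'\I)^2}\leqslant 2w_{\bsA}|u-u'|$ (implicitly relying on $\lambda_{\min}$ being $1$-Lipschitz in norm), whereas you bound $\partial_u\Tr{\rho\bsH}=2(u-\langle\bsA\rangle_\rho)$ uniformly in $\rho$, which amounts to the same estimate.
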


\begin{proof}
For every state, $\Tr{\rho(\bsA-u\I)^2}=\Var_\rho (\bsA)+(u-\langle \bsA\rangle_\rho)^2$, and similarly for $\bsB$; minimizing over
$u,v$ first and then over $\rho$ gives \eqref{eq:sdw}, with optimal $u=\langle \bsA\rangle_\rho\in I_{\bsA}$. For
$u,u'\in I_{\bsA}$, $\norm{(\bsA-u\I)^2-(\bsA-u'\I)^2}=|u-u'|\,\norm{2\bsA-u\I-u'\I}\leqslant2w_{\bsA}|u-u'|$, so $c$ is Lipschitz with
constants $2w_{\bsA}$ and $2w_{\bsB}$; every point of $I_{\bsA}\times I_{\bsB}$ is within $\delta_u/2,\delta_v/2$ of a grid point.
\end{proof}

The right-hand side of \eqref{eq:sdw} can vanish for noncommuting $\bsA,\bsB$ when they share an eigenvector.
For qubits, it is $1-|c|$ in the setting below.

\subsection{Two qubit observables}

Let $\bsA=\bsa\cdot\bdsigma$ and $\bsB=\bsb\cdot\bdsigma$ with unit vectors $\bsa,\bsb$ and $c=\Inner{\bsa}{\bsb}$, $|c|<1$.
For a qubit, $\Var_\rho(\bsA)=1-\langle \bsA\rangle_\rho^2$, so everything is determined by $(U,V)=(\langle \bsA\rangle_\rho,\langle \bsB\rangle_\rho)$.

\begin{thrm}\label{th:ellipse}
For $\rho\sim\mu_{2,K}$ for $K\geqslant1$, the pair $(U,V)$ has the joint density
\begin{eqnarray}\label{eq:U3}
g_{K,c}(u,v)=\frac{K-\frac12}{\pi\sqrt{1-c^2}}\Pa{1-\frac{u^2-2cuv+v^2}{1-c^2}}_+^{K-\frac32}.
\end{eqnarray}
\end{thrm}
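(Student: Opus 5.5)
The plan is to reduce everything to Lemma~\ref{lem:archimedes}(i) by a linear change of variables. Write $\rho=\frac12(\I_2+\bsr\cdot\bdsigma)$. Then $U=\Tr{\rho\,\bsa\cdot\bdsigma}=\Inner{\bsa}{\bsr}$ and $V=\Inner{\bsb}{\bsr}$, because $\Tr{\sigma_i\sigma_j}=2\delta_{ij}$ and the Pauli matrices are traceless. Since $|c|<1$, the unit vectors $\bsa,\bsb$ span a plane $P$. The pair $(U,V)$ depends only on the orthogonal projection $\xi$ of $\bsr$ onto $P$, and it is a linear function of $\xi$. Lemma~\ref{lem:archimedes}(i) gives the density of $\xi$ as $g_K(\xi)=\frac{K-\frac12}{\pi}(1-|\xi|^2)^{K-\frac32}$ on the unit disk, for every $K\geqslant1$; the case $K=1$ is Archimedes' formula $g_1$ recorded in the proof. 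The remaining task is to push this density forward under the linear map $L:\xi\mapsto(\Inner{\bsa}{\xi},\Inner{\bsb}{\xi})$.

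\textbf{Coordinates.} Choose an orthonormal basis of $P$ with $\bsa=(1,0)$ and $\bsb=(c,s)$, where $s=\sqrt{1-c^2}>0$. Then $u=\xi_1$ and $v=c\xi_1+s\xi_2$. The matrix of $L$ has determinant $s$, so $|\det L|^{-1}=(1-c^2)^{-1/2}$, and the inverse is $\xi_1=u$, $\xi_2=(v-cu)/s$. Hence
$$
|\xi|^2=u^2+\frac{(v-cu)^2}{1-c^2}=\frac{u^2-2cuv+v^2}{1-c^2}.
$$
A cleaner check is to note that $|\xi|^2=(u,v)G^{-1}(u,v)^\t$, where $G=\begin{pmatrix}1&c\\ c&1\end{pmatrix}$ is the Gram matrix of $\bsa,\bsb$ and $\det G=1-c^2$. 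Substituting into $g_K(L^{-1}(u,v))\,|\det L|^{-1}$ gives exactly \eqref{eq:U3}. The positive-part notation $(\cdot)_+$ encodes the support, namely the ellipse $u^2-2cuv+v^2<1-c^2$, which is the image of the unit disk under $L$.

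\textbf{Expected difficulty.} There is essentially no obstacle. The only points needing care are the following.
\begin{enumerate}[(a)]
\item For $K=1$ the exponent $K-\frac32=-\frac12$ is negative. The formula is still correct because $g_1$ from Lemma~\ref{lem:archimedes}(i) is integrable, but the $(\cdot)_+$ convention has to be read as setting the density to zero outside the open ellipse.
\item I will confirm the normalization by noting that the change of variables preserves total mass, so no constant needs to be recomputed.
\item The hypothesis $|c|<1$ is required for $L$ to be invertible. When $|c|=1$, $(U,V)$ lies on a line and has no two-dimensional density, mirroring the remark after Proposition~\ref{prop:moments}.
\end{enumerate}
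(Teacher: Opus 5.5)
Your proposal is correct and follows the paper's proof essentially step for step. Both arguments write $(U,V)=(\Inner{\bsa}{\bsr},\Inner{\bsb}{\bsr})$, take the projected density $g_K$ from Lemma~\ref{lem:archimedes}(i) (including the $K=1$ case), choose the basis $\bsa=\bse_1$, $\bsb=c\bse_1+\sqrt{1-c^2}\,\bse_2$, and push forward with the factor $(1-c^2)^{-1/2}$; your Gram-matrix check and your remarks on the $K=1$ exponent and on normalization are sound additions that do not change the argument.
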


\begin{proof}
Any qubit density matrix $\rho$ can be written in the Bloch representation as $\rho=\frac12(\I_2+\bsr\cdot\bdsigma)$. For $\rho\sim \mu_{2,K}$, Lemma~\ref{lem:archimedes} (i) states that
\begin{itemize}
\item $\bsr$ is rotation invariant;
\item For $K=1$, $\bsr$ is uniform on the unit sphere $S^2$;
\item For $K\geqslant2$, $\bsr$ has density proportional to $(1-\abs{\bsr}^2)^{K-2}$ on the unit ball $B^3$.
\end{itemize}
For $\bsA=\bsa\cdot\bdsigma$ with $\abs{\bsa}=1$, the expectation value in state $\rho$ is $\langle\bsA\rangle_\rho=\Inner{\bsa}{\bsr}$.  Similarly, $\langle\bsB\rangle_\rho=\Inner{\bsb}{\bsr}$. Therefore $(U,V)=(\Inner{\bsa}{\bsr},\Inner{\bsb}{\bsr})$. The pair $(U,V)$ depends on $\bsr$ only through its projection onto the two-dimensional subspace $S=\op{span}\set{\bsa,\bsb}$. Write $\bsr=\zeta+\xi$, where $\zeta\in S$ is the orthogonal projection of $\bsr$ onto $S$, and $\xi\in S^\perp$ is the orthogonal complement. Since $\bsa,\bsb\in S$, we have $\Inner{\bsa}{\bsr}=\Inner{\bsa}{\zeta}$ and $\Inner{\bsb}{\bsr}=\Inner{\bsb}{\zeta}$. Thus $(U,V)$ depends only on $\zeta$.

In what follows, we derive the distribution of the projection $\zeta$. By Lemma~\ref{lem:archimedes} (i), the Bloch vector $\bsr$ has a rotation-invariant distribution. Consequently, its orthogonal projection onto any two-dimensional subspace $S$ has the same distribution as the projection onto the $(r_1,r_2)$-plane.

From Lemma~\ref{lem:archimedes}(i), the density of the projection $\zeta=(\zeta_1,\zeta_2)$ onto any plane is 
$$
g_K(\zeta_1,\zeta_2) = \frac{K-\frac12}\pi \Pa{1-(\zeta^2_1+\zeta^2_2)}^{K-\frac32}, \quad \zeta^2_1+\zeta^2_2<1.
$$
This is the density of $\zeta$ with respect to the Lebesgure measure on $S\cong\bbR^2$. Next perform change of variables to $(U,V)$. Let $\set{\bse_1,\bse_2}$ be an orthonormal basis of $S=\op{span}\set{\bsa,\bsb}$. Write $\zeta=\xi_1\bse_1+\xi_2\bse_2$, so that $\zeta$ has coordinates $(\xi_1,\xi_2)$ in this basis. The density of $(\xi_1,\xi_2)$ is 
$$
g_K(\xi_1,\xi_2) = \frac{K-\frac12}\pi \Pa{1-(\xi^2_1+\xi^2_2)}^{K-\frac32}, \quad \xi^2_1+\xi^2_2<1.
$$
Now express $\bsa$ and $\bsb$ in the basis $\set{\bse_1,\bse_2}$. Since $\abs{\bsa}=\abs{\bsb}=1$ and $\Inner{\bsa}{\bsb}=c$, we can choose the basis such that $\bsa=\bse_1$ and $\bsb=c\bse_1+\sqrt{1-c^2}\bse_2$. Then $U=\Inner{\bsa}{\zeta}=\xi_1$ and $V=\Inner{\bsb}{\xi}=c\xi_1+\sqrt{1-c^2}\xi_2$. Solving for $(\xi_1,\xi_2)$ in terms of $(U,V)$:
$$
\xi_1=U,\quad \xi_2=\frac{V-cU}{\sqrt{1-c^2}}.
$$
The Jacobian of the transformation $(\xi_1,\xi_2)\mapsto (U,V)$ is $\frac1{\sqrt{1-c^2}}$. The density of $(U,V)$ is obtained from the density of $(\xi_1,\xi_2)$ via the change-of-variable formula:
$$
g_{K,c}(u,v) = g_K(\xi_1(u,v),\xi_2(u,v))\Abs{\frac{\partial(\xi_1,\xi_2)}{\partial(u,v)}}.
$$
Substituting the expressions for $\xi_1$ and $\xi_2$: $\xi^2_1+\xi^2_2=u^2+\frac{(v-cu)^2}{1-c^2}=\frac{u^2-2cuv+v^2}{1-c^2}$. Thus
we get the desired \eqref{eq:U3}.

The support condition is $\xi^2_1+\xi^2_2<1$, which becomes $u^2-2cuv+v^2<1-c^2$, the interior of the ellipse $u^2-2cuv+v^2\leqslant1-c^2$. The quatratic form 
$$
u^2-2cuv+v^2 = (u,v)\bsM\Pa{u,v}^\t,\quad \bsM=\begin{pmatrix}1&-c\\-c&1\end{pmatrix},
$$
with eigenvalues $1\pm c$. Since $|c|<1$, both eigenvalues are positive, so the support is a non-degenerate ellipse centered at the origin. The veryfication of normalization is trivially.
\end{proof}

\begin{thrm}\label{th:twostd}
For $\rho\sim\mu_{2,K}$, the standard deviations $(X,Y)=(\Delta_\rho(\bsA),\Delta_\rho(\bsB))$ have the joint density
\begin{eqnarray}\label{eq:U4}
f_{X,Y}(x,y)=\frac{xy}{\sqrt{(1-x^2)(1-y^2)}}\sum_{\epsilon_1,\epsilon_2=\pm1}g_{K,c}\Pa{\epsilon_1\sqrt{1-x^2},\epsilon_2\sqrt{1-y^2}}
\end{eqnarray}
for all $x,y\in(0,1)$, supported on 
$$
\Set{(x,y):x^2+y^2+2|c|\sqrt{(1-x^2)(1-y^2)}\geqslant1+c^2}.
$$
For $c=0$,
$$
f_{X,Y}(x,y)=\frac{4K-2}{\pi}\,\frac{xy}{\sqrt{(1-x^2)(1-y^2)}}(x^2+y^2-1)^{K-\frac32},\qquad x^2+y^2>1.
$$
\end{thrm}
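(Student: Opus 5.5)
The plan is to push the joint law of $(U,V)=(\langle\bsA\rangle_\rho,\langle\bsB\rangle_\rho)$ from Theorem~\ref{th:ellipse} through the map $(u,v)\mapsto(x,y)=(\sqrt{1-u^2},\sqrt{1-v^2})$. For a qubit, $\Var_\rho(\bsA)=1-U^2$ and $\Var_\rho(\bsB)=1-V^2$, so $X=\sqrt{1-U^2}$ and $Y=\sqrt{1-V^2}$.

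The map is two-to-one in each coordinate, with preimages $u=\epsilon_1\sqrt{1-x^2}$ and $v=\epsilon_2\sqrt{1-y^2}$. On each of the four open quadrants $\{\epsilon_1u>0,\ \epsilon_2v>0\}$ it is a diffeomorphism onto its image in $(0,1)^2$. The lines $u=0$ and $v=0$ have measure zero, so they can be discarded. From $x\,\dif x=-u\,\dif u$ we get $|\dif u/\dif x|=x/\sqrt{1-x^2}$, and likewise for $y$. The Jacobian of the inverse branch is therefore $xy/\sqrt{(1-x^2)(1-y^2)}$, the same on all four branches. Summing $g_{K,c}$ over the four preimages gives \eqref{eq:U4} directly.

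For the support, note that $g_{K,c}(u,v)>0$ exactly when $u^2-2cuv+v^2<1-c^2$. Substituting $u^2=1-x^2$, $v^2=1-y^2$ and $uv=\epsilon_1\epsilon_2\sqrt{(1-x^2)(1-y^2)}$ turns this into $x^2+y^2+2\epsilon_1\epsilon_2c\sqrt{(1-x^2)(1-y^2)}>1+c^2$. Some branch is nonzero precisely when the version with the sign $\epsilon_1\epsilon_2=\sign(c)$ holds. This gives the stated region $x^2+y^2+2|c|\sqrt{(1-x^2)(1-y^2)}\geqslant1+c^2$, up to the boundary.

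For $c=0$, the quadratic form reduces to $u^2+v^2=2-x^2-y^2$, so $g_{K,0}=\frac{K-\frac12}{\pi}(x^2+y^2-1)_+^{K-\frac32}$ takes the same value on all four branches. The sum is then $4\cdot\frac{K-1/2}{\pi}=\frac{4K-2}{\pi}$ times the Jacobian factor, which is the displayed formula.

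The step needing the most care is the case $K=1$. There $g_{K,c}$ has exponent $-\tfrac12$ and blows up on the boundary ellipse, but it is still integrable, so the change of variables holds almost everywhere and nothing changes. One should also confirm that the density is understood on the open square $(0,1)^2$, since the Jacobian is singular at $x=1$ or $y=1$. That singularity is also integrable, because it corresponds to the interior lines $u=0$ and $v=0$.
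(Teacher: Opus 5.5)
Your proposal is correct and follows the paper's proof. You push the law of $(U,V)$ from Theorem~\ref{th:ellipse} through $(u,v)\mapsto(\sqrt{1-u^2},\sqrt{1-v^2})$ and sum over the four sign branches with the common Jacobian $xy/\sqrt{(1-x^2)(1-y^2)}$. Your explicit derivations of the support region and of the $c=0$ formula fill in steps that the paper states without proof.
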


\begin{proof}
For a qubit state, $\op{Var}_\rho(\bsA)=1-\langle\bsA\rangle^2_\rho$. This follows from
$$
\op{Var}_\rho(\bsA)=\Tr{\bsA^2\rho} - [\Tr{\bsA\rho}]^2=\Tr{\rho} - \langle\bsA\rangle^2_\rho
$$
since $\bsA^2=(\bsa\cdot\bdsigma)^2=\I_2$ for a unit vector $\bsa$. Therefore
$$
X=\Delta_\rho(\bsA)=\sqrt{1-U^2},\quad Y=\Delta_\rho(\bsB)=\sqrt{1-V^2}.
$$
The map $(u,v)\mapsto (x,y)=(\sqrt{1-u^2},\sqrt{1-v^2})$ is four-to-one on $(-1,1)^2\backslash \set{uv=0}$, with the four branches corresponding to the sign choices $u=\varepsilon_1\sqrt{1-x^2}$ and $v=\varepsilon_2\sqrt{1-y^2}$, where $\varepsilon_1,\varepsilon_2\in\set{\pm1}$. 

We need the Jacobian of the transformation $(u,v)\mapsto (x,y)$. Compute:
$$
x=\sqrt{1-u^2}\implies \frac{\partial x}{\partial u} = \frac{-u}{\sqrt{1-u^2}}=-\frac ux.
$$
Similarly, $\frac{\partial y}{\partial u} =-\frac vy$. The Jacobian matrix is diagonal:
$$
\bsJ=\diag\Pa{\frac{\partial x}{\partial u},\frac{\partial y}{\partial u}}=\diag\Pa{-\frac ux,-\frac vy}.
$$
Therefore $\abs{\det(\bsJ)}=\frac{\abs{uv}}{xy}$. The inverse Jacobian (from $(x,y)$ to $(u,v)$) has absolute value $\Abs{\frac{\partial(u,v)}{\partial(x,y)}}=\frac{xy}{\abs{uv}}$. Since $u=\varepsilon_1\sqrt{1-x^2}$ and $v=\varepsilon_2\sqrt{1-y^2}$, we have 
$$
\Abs{\frac{\partial(u,v)}{\partial(x,y)}}= \frac{xy}{\sqrt{(1-x^2)(1-y^2)}}.
$$
This factor is the same for all four sign branches. The density of $(X,Y)$ is obtained by summing the density of $(U,V)$ over all preimages
$$
f_{X,Y}(x,y) =\Abs{\frac{\partial(u,v)}{\partial(x,y)}} \sum_{\varepsilon_1,\varepsilon_2=\pm1}g_{K,c}\Pa{\varepsilon_1\sqrt{1-x^2},\varepsilon_2\sqrt{1-y^2}},
$$
which is exactly \eqref{eq:U4}.
\end{proof}

For pure states ($K=1$) and for Hilbert--Schmidt states ($K=2$), the joint law of $(\Delta_\rho (\bsA),\Delta_\rho (\bsB))$ was
obtained in \cite{Zhang2022} and \cite{Zhang2021}, respectively; Theorem~\ref{th:twostd} covers every $K$
with one formula, the new content here, but the derivation is a straightforward change of variables from Lemma~\ref{lem:archimedes} (i). The support is the uncertainty region of the pair, and its point closest to the origin
gives the optimal bound $\Var_\rho (\bsA)+\Var_\rho (\bsB)\geqslant1-|c|$: maximizing $u^2+v^2$ over the ellipse
$u^2-2cuv+v^2\leqslant1-c^2$ gives $1+|c|$. The law also shows how much probability lies near this bound.

\begin{thrm}\label{th:nearopt}
Let $\rho\sim\mu_{2,K}$ for $K\geqslant1$, and $S=\Var_\rho(\bsA)+\Var_\rho(\bsB)$ and $S_{\min}=1-|c|$.
\begin{enumerate}[(i)]
\item If $c=0$, then $S=2-Q$ with $Q\sim\Beta(1,K-\frac12)$; in particular $\Pr\Set{S\leqslant1+\varepsilon}=\varepsilon^{K-\frac12}$ for
$0\leqslant\varepsilon\leqslant1$.
\item If $0<|c|<1$, then as $\varepsilon\downarrow0$
\begin{eqnarray}\label{eq:nearopt}
\Pr\Set{S\leqslant S_{\min}+\varepsilon}=\sqrt{\frac{2}{|c|}}\,\frac{\Gamma(K+\frac12)}{2\sqrt\pi\,\Gamma(K+1)}\,
(1+|c|)^{\frac12-K}\,\varepsilon^K\,\pa{1+O(\varepsilon)}.
\end{eqnarray}
\end{enumerate}
\end{thrm}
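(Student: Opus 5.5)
Since $\Var_\rho(\bsA)=1-U^2$ and $\Var_\rho(\bsB)=1-V^2$, we have $S=2-Q$ with $Q:=U^2+V^2$, and $(U,V)$ has the density $g_{K,c}$ of Theorem~\ref{th:ellipse}. Writing $(U,V)$ as a linear image of the projection $\xi=(\xi_1,\xi_2)$ from Lemma~\ref{lem:archimedes}(i), namely $U=\xi_1$ and $V=c\xi_1+\sqrt{1-c^2}\,\xi_2$, the whole statement reduces to the law of a quadratic form in $\xi$ near its maximum on the unit disk. The event $\{S\leqslant S_{\min}+\varepsilon\}$ becomes $\{Q\geqslant 1+|c|-\varepsilon\}$.

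\emph{Case (i), $c=0$.} Here $Q=|\xi|^2$. By \eqref{eq:gK} in polar coordinates, $Q$ has density $\pi\cdot\frac{K-\frac12}{\pi}(1-q)^{K-\frac32}=(K-\frac12)(1-q)^{K-\frac32}$ on $[0,1]$, so $Q\sim\Beta(1,K-\frac12)$. This also holds for $K=1$, where $g_1$ is the projected uniform sphere. Hence
$$
\Pr\set{S\leqslant1+\varepsilon}=\Pr\set{Q\geqslant1-\varepsilon}=\int_{1-\varepsilon}^1(K-\tfrac12)(1-q)^{K-\frac32}\dif q=\varepsilon^{K-\frac12},
$$
which is exact.

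\emph{Case (ii), $0<|c|<1$.} In terms of $\xi$, $Q=\xi^\t\bsN\xi$ with $\bsN=\bsT^\t\bsT$, where $\bsT=\begin{pmatrix}1&0\\ c&\sqrt{1-c^2}\end{pmatrix}$. The eigenvalues of $\bsN=\bsT^\t\bsT$ equal those of $\bsT\bsT^\t=\begin{pmatrix}1&c\\ c&1\end{pmatrix}$, namely $1\pm|c|$. Rotating to an eigenbasis (allowed by rotation invariance of $g_K$) gives $Q=(1+|c|)\eta_1^2+(1-|c|)\eta_2^2$ with $(\eta_1,\eta_2)\sim g_K$. The maximum $1+|c|$ is attained only at the two points $\eta=(\pm1,0)$. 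Near $(1,0)$, write $\eta_1=\sqrt{1-\eta_2^2}-w$, or better use $q:=1-|\eta|^2\geqslant0$ and $\eta_2$. Then
$$
1+|c|-Q=(1+|c|)q+2|c|\eta_2^2,
$$
while the density is $\frac{K-\frac12}{\pi}q^{K-\frac32}$ and the area element is $\dif\eta_1\dif\eta_2=\frac{\dif q\,\dif\eta_2}{2\eta_1}=\frac{\dif q\,\dif\eta_2}{2}(1+O(q+\eta_2^2))$. Near each of the two points the probability is therefore
$$
\frac{K-\frac12}{2\pi}\iint_{(1+|c|)q+2|c|\eta_2^2\leqslant\varepsilon} q^{K-\frac32}\dif q\,\dif\eta_2\,(1+O(\varepsilon)).
$$

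Scaling $q=\varepsilon q'/(1+|c|)$ and $\eta_2=\sqrt{\varepsilon/(2|c|)}\,y$ gives
$$
\varepsilon^{K}(1+|c|)^{\frac12-K}(2|c|)^{-\frac12}\iint_{q'+y^2\leqslant1}q'^{K-\frac32}\dif q'\dif y.
$$
The inner integral is $\int_{-1}^1\frac{(1-y^2)^{K-\frac12}}{K-\frac12}\dif y=\frac{\Beta(\frac12,K+\frac12)}{K-\frac12}$. Multiplying by $2$ for the two points, the constant becomes
$$
\frac{1}{\pi}(2|c|)^{-\frac12}\sqrt\pi\,\frac{\Gamma(K+\frac12)}{\Gamma(K+1)}=\sqrt{\frac{2}{|c|}}\,\frac{\Gamma(K+\frac12)}{2\sqrt\pi\,\Gamma(K+1)},
$$
which matches \eqref{eq:nearopt}.

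\emph{Main obstacle.} The delicate step is controlling the error terms to get a relative $O(\varepsilon)$ rather than $O(\sqrt\varepsilon)$. On the region, $q=O(\varepsilon)$ and $\eta_2^2=O(\varepsilon)$. The Jacobian correction $1/\eta_1-1$ is $O(q+\eta_2^2)=O(\varepsilon)$, and the quadratic-form identity above is exact, so no odd-order terms survive. The case $K=1$ needs separate care, since the density then blows up like $q^{-1/2}$; it is integrable, so the same scaling applies. I would check it by computing directly with $\eta$ uniform on the sphere projected to the disk.
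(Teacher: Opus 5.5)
Your proposal is correct and follows the paper's proof essentially step by step. It uses the same rotation to $(1+|c|)\eta_1^2+(1-|c|)\eta_2^2$, the same $\Beta(1,K-\tfrac12)$ argument for $c=0$, and the same change of variables $(\eta_1,\eta_2)\mapsto(1-|\eta|^2,\eta_2)$ with Jacobian $1/(2|\eta_1|)=\tfrac12(1+O(\varepsilon))$, summed over the two optimal points. Your scaling correctly produces the factor $(2|c|)^{-1/2}$ required by \eqref{eq:nearopt}; the paper's corresponding intermediate line writes $\varepsilon^K/(2|c|)$, which is a slip.
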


\begin{proof}
It is easily seen that $S=2-\Inner{\bsa}{\bsr}^2 - \Inner{\bsb}{\bsr}^2=2-U^2-V^2=2-T$, where $T=U^2+V^2$. So the event $\set{S\leqslant S_{\min}+\varepsilon}$ is equivalent to $\set{T\geqslant 2-S_{\min}-\varepsilon}$. Since $S_{\min}=1-\abs{c}$, we have
$$
2-S_{\min} = 2-(1-\abs{c})=1+\abs{c}.
$$
Thus $\set{S\leqslant S_{\min}+\varepsilon}=\Set{T\geqslant 1+\abs{c}-\varepsilon}$. Recall from the proof of Theorem~\ref{th:ellipse} that
$$
(U,V)^\t=\bsM\xi,\quad \bsM=\begin{pmatrix} 1&0\\ c&\sqrt{1-c^2}\end{pmatrix},
$$
where $\boldsymbol{\xi}=(\xi_1,\xi_2)$ is the projection of the Bloch vector onto $\op{span}\set{\bsa,\bsb}$, with rotation-invariant density
$$
g_K(\xi_1,\xi_2) = \frac{K-\frac12}\pi\Pa{1-\xi^2_1-\xi^2_2}^{K-\frac32},\quad\xi^2_1+\xi^2_2<1.
$$
Now $T=U^2+V^2=\boldsymbol{\xi}^\t\bsM^\t\bsM\boldsymbol{\xi}$, the eigenvalues of $\bsM^\t\bsM$ are $\lambda_\pm =1\pm\abs{c}$, with corresponding orthonormal eigenvectors. Let $\bsO$ be the orthogonal matrix diagonalizing $\bsM^\t\bsM$: $\bsO^\t\bsM^\t\bsM\bsO = \diag(1+\abs{c},1-\abs{c})$. Set $\boldsymbol{\eta}=\bsO^\t\boldsymbol{\xi}$. Since $\bsO$ is orthogonal and $g_K$ is rotation invariant, $\boldsymbol{\eta}$ has the same density $g_K$:
$$
g_K(\eta_1,\eta_2) = \frac{K-\frac12}\pi\Pa{1-\eta^2_1-\eta^2_2}^{K-\frac32},\quad\eta^2_1+\eta^2_2<1.
$$
Then $T=(1+\abs{c})\eta^2_1+(1-\abs{c})\eta^2_2$.
\begin{itemize}
\item When $c=0$, $1+\abs{c}=1-\abs{c}=1$, so $T=\eta^2_1+\eta^2_2=\abs{\boldsymbol{\eta}}^2=:Q$. Since $\boldsymbol{\eta}$ has density $g_K$, the radial density of $Q$ is obtained by integrating over the circle $\abs{\boldsymbol{\eta}}^2=q$:
\begin{eqnarray*}
f_Q(q) = \frac{K-\frac12}\pi (1-q)^{K-\frac32}\cdot 2\pi\sqrt{q}\cdot\frac1{2\sqrt{q}}=\Pa{K-\frac12}(1-q)^{K-\frac32},
\end{eqnarray*}
for $q\in(0,1)$. Therefore $Q=\abs{\boldsymbol{\eta}}^2\sim\Beta(1,K-\frac12)$. Since $S=2-T=2-Q$, where $Q\sim\Beta(1,K-\frac12)$. Now compute the tail probability. For $\varepsilon\in[0,1]$, 
\begin{eqnarray*}
\Pr\Set{S\leqslant 1+\varepsilon} =\Pr\Set{Q\geqslant 1-\varepsilon}= \int^1_{1-\varepsilon} (K-\tfrac12)(1-q)^{K-\frac32}\dif q = \varepsilon^{K-\frac12}.
\end{eqnarray*}
\item When $0<\abs{c}<1$, $T=(1+\abs{c})\eta^2_1+(1-\abs{c})\eta^2_2$, and we need the asymptotic behavior of $\Pr\Set{T\geqslant 1+\abs{c}-\varepsilon}$ as $\varepsilon\downarrow0$. Set $w=1-\eta^2_1-\eta^2_2=1-\abs{\boldsymbol{\eta}}^2$. We see that $w\in(0,1)$. Then $T=(1-w)(1+\abs{c})-2\abs{c}\eta^2_2$, and thus $1+\abs{c}-T=(1+\abs{c})w+2\abs{c}\eta^2_2$. So the event $\Set{S\leqslant S_{\min}+\varepsilon}=\Set{T\geqslant1+\abs{c}-\varepsilon}$ becomes $(1+\abs{c})w+2\abs{c}\eta^2_2\leqslant\varepsilon$.

Consider the change of variables from $(\eta_1,\eta_2)$ to $(w,\eta_2)$. We have $\eta_1=\pm\sqrt{1-w-\eta^2_2}$. The map is two-to-one (for $\eta_1>0$ and $\eta_1<0$). The Jacobian:
$$
\Abs{\frac{\partial \eta_1}{\partial w}}=\frac1{2\sqrt{1-w-\eta^2_2}}=\frac1{2\abs{\eta_1}}.
$$
So $\dif \eta_1=\frac{\dif w}{2\abs{\eta_1}}$. The density of $\boldsymbol{\eta}$ is $g_K(\eta_1,\eta_2)=\frac{K-\frac12}\pi w^{K-\frac32}$. Thus the joint density of $(w,\eta_2)$ (summing over the two signs of $\eta_1$) is
$$
f_{w,\eta_2}(w,\eta_2) = 2\cdot \frac{K-\frac12}\pi w^{K-\frac32}\cdot\frac1{2\abs{\eta_1}} = \frac{K-\frac12}\pi \frac{w^{K-\frac32}}{\abs{\eta_1}}.
$$
But $\abs{\eta_1}=\sqrt{1-w-\eta^2_2}$. On the event $(1+\abs{c})w+2\abs{c}\eta^2_2\leqslant\varepsilon$, we have $w\leqslant \frac{\varepsilon}{1+\abs{c}}$ and $\abs{\eta_2}\leqslant \sqrt{\frac{\varepsilon}{2\abs{c}}}$. Therefore
$$
1-w-\eta^2_2=1+O(\varepsilon)\quad \text{as }\varepsilon\downarrow 0.
$$
More precisely, $\frac1{\abs{\eta_1}}=\frac1{\sqrt{1-w-\eta^2_2}}=1+O(\varepsilon)$. So
$$
f_{w,\eta_2}(w,\eta_2) = \frac{K-\frac12}\pi w^{K-\frac32}(1+O(\varepsilon)).
$$
Now
\begin{eqnarray*}
\Pr\Set{S\leqslant S_{\min}+\varepsilon}&=&\iint_{(1+\abs{c})w+2\abs{c}\eta^2_2\leqslant\varepsilon}f_{w,\eta_2}(w,\eta_2)\dif w\dif \eta_2\\
&=& (1+O(\varepsilon))\frac{K-\frac12}\pi \int_{(1+\abs{c})w+2\abs{c}\eta^2_2\leqslant\varepsilon} w^{K-\frac32}\dif w\dif\eta_2.
\end{eqnarray*}
The domain of integration can be reduced as:
$$
0\leqslant w\leqslant \frac{\varepsilon-2\abs{c}\eta^2_2}{1+\abs{c}},\qquad \abs{\eta_2}\leqslant\sqrt{\frac{\varepsilon}{2\abs{c}}}.
$$
Thus
\begin{eqnarray*}
\Pr\Set{S\leqslant S_{\min}+\varepsilon}&=& (1+O(\varepsilon))\frac{K-\frac12}\pi  \int^{\sqrt{\frac{\varepsilon}{2\abs{c}}}}_{-\sqrt{\frac{\varepsilon}{2\abs{c}}}} \dif\eta_2\int^{\frac{\varepsilon-2\abs{c}\eta^2_2}{1+\abs{c}}}_0 w^{K-\frac32}\dif w\\
&=&  (1+O(\varepsilon))\frac{K-\frac12}\pi \frac1{K-\frac12} \frac1{(1+\abs{c})^{K-\frac12}}\frac{\varepsilon^K}{2\abs{c}}\frac{\sqrt{\pi}\Gamma(K+\tfrac12)}{\Gamma(K+1)},
\end{eqnarray*}
implying the desired result \eqref{eq:nearopt} after simplyfication.\qedhere
\end{itemize}

\end{proof}

The exponents $K-\frac12$ and $K$ reflect the dimension of the set of optimal states. 
\begin{itemize}
\item For $c=0$, the states attaining
$S_{\min}=1$ are the pure states whose
Bloch vectors lie on the great circle lies in $\op{span}\set{\bsa,\bsb}$. This is a one-dimensional set (a circle). The  probability of being within $\varepsilon$ of the optimum scales as $\varepsilon^{K-\frac12}$, where the exponent reflects the codimension of the optimal set in the Bloch ball (dimension 3), shifted by the $K$-dependent concentration.
\item for $c\neq0$, there are only two optimal
pure states, with Bloch vectors $\bsr=\pm\frac{\bsa\pm\bsb}{|\bsa\pm\bsb|}$ for the appropriate sign. This is a zero-dimensional set. The probability scales as $\varepsilon^K$, with a larger exponent because the optimal set is smaller (codimension $2$ instead of $1$).
\end{itemize}
Enlarging the optimal set by
one dimension lowers the exponent by $\frac12$. Increasing $K$ suppresses near-optimal states because they must
be nearly pure (Figure~\ref{fig:uncertainty}).

\begin{figure}[t]
\centering
\includegraphics[width=\textwidth]{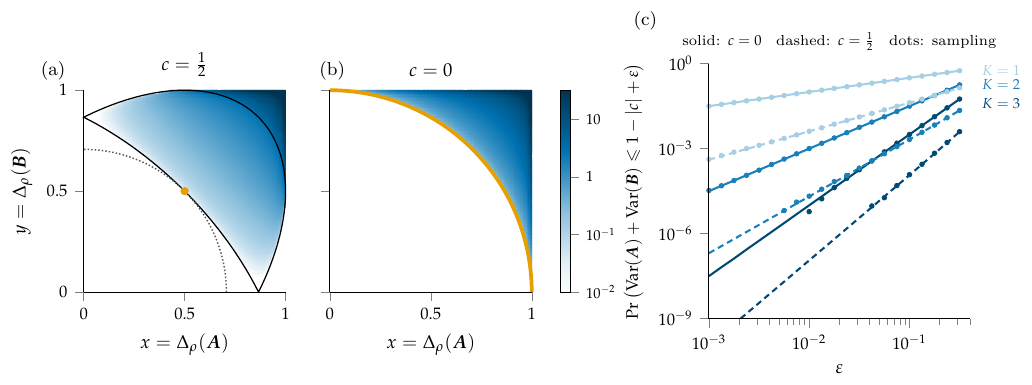}
\caption{Probability laws on the uncertainty region of $\bsA=\bsa\cdot\bdsigma$, $\bsB=\bsb\cdot\bdsigma$, $c=\bsa\cdot\bsb$.
(a), (b) Joint density of $(\Delta_\rho(\bsA),\Delta_\rho (\bsB))$ for $\rho\sim\mu_{2,2}$, Theorem~\ref{th:twostd}, on a
logarithmic colour scale; the support is bounded by the black curve. Orange marks the states attaining the
optimal sum of variances $1-|c|$: a single point for $c=\frac12$, where the dotted circle
$x^2+y^2=\frac12$ touches the region, and a whole arc for $c=0$. (c) Probability that
$\Var_\rho(\bsA)+\Var_\rho(\bsB)$ is within $\varepsilon$ of the optimum. Solid lines ($c=0$) are the exact power laws
$\varepsilon^{K-\frac12}$; dashed lines ($c=\frac12$) are exact probabilities computed from \eqref{eq:U3}, which approach slope $K$
as in \eqref{eq:nearopt}; dots are estimates from $4\times10^6$ samples.}\label{fig:uncertainty}
\end{figure}

%===================================================%
\section{Random Bargmann invariants}\label{sect:bargmann}
%===================================================%

This section is the most novel in the paper, Proposition~\ref{prop:conditional} (conditional laws are numerical shadows) is essentially a restatement of known facts about numerical ranges of rank-one matrices, combined with the decomposition of a Haar vector into a 2D subspace (Lemma~\ref{lem:archimedes} (ii)). The real new contributions are the exact density for three qubit states (Theorem~\ref{th:b3density}), the marginals (Theorem~\ref{th:marginals}), the negative-real-part probability (Theorem~\ref{th:negative}), the Markov chain (Theorem~\ref{th:markov}), and the long-cycle limit (Theorem~\ref{th:longcycle}).

Let $\ket{\psi_1},\ldots,\ket{\psi_n}$ be independent Haar-random unit vectors in $\bbC^d$ for $d\geqslant2$, and
$\psi_j=\proj{\psi_j}$. Inner products are conjugate-linear in the first argument. The cyclic Bargmann invariant of the first $n$ vectors is
$$
\Delta_n=\Inner{\psi_1}{\psi_2}\Inner{\psi_2}{\psi_3}\cdots\Inner{\psi_{n-1}}{\psi_n}\Inner{\psi_n}{\psi_1}=\Tr{\psi_1\psi_2\cdots \psi_n}.
$$
It is invariant under phase changes of the $\ket{\psi_j}$ and under a common unitary. Its law is invariant under
complex conjugation (reverse the order of the states), so $\Im(\Delta_n)$ is symmetric. Since
$\bbE[\psi_j]=\I_d/d$, $\bbE[\Delta_n]=d^{1-n}$; the law is not rotation invariant in the complex plane. For $n\geqslant3$, we seek the planar density
\begin{eqnarray*}
{\Pr}_d\Set{\Delta_n\in[\dif z]} = f^{(d)}_n(z)[\dif z],\quad [\dif z]=\dif \Re(z)\dif\Im(z).
\end{eqnarray*}

\subsection{Conditional laws are numerical shadows}

Let $C_k=\prod_{j=1}^{k-1}\Inner{\psi_j}{\psi_{j+1}}$ be the open-chain product, with $C_1=1$. Then
$\psi_1\cdots \psi_k=C_k\out{\psi_1}{\psi_k}$ and
\begin{eqnarray}\label{eq:shadowrep}
\Delta_{k+1}=\Innerm{\psi_{k+1}}{\bsM_k}{\psi_{k+1}},\qquad \bsM_k=\psi_1\cdots \psi_k=C_k\out{\psi_1}{\psi_k}.
\end{eqnarray}

\begin{prop}\label{prop:conditional}
Conditionally on $\psi_1,\ldots,\psi_k$ with $\psi_1,\psi_k$ linearly independent and $C_k\neq0$, the law of
$\Delta_{k+1}$ is the numerical shadow of the rank-one matrix $\bsM_k$. It is supported on the closed elliptical disk
$W(\bsM_k)$ with foci $0$ and $\Tr{\bsM_k}=\Delta_k$, major axis $|C_k|$ and minor axis $\sqrt{|C_k|^2-|\Delta_k|^2}$
\cite{Zhang2025}. If $\Delta_k\neq0$, let $e^{\mathrm{i}\theta}=\frac{\Delta_k}{\abs{\Delta_k}}$, and rotate the centred coordinate $\zeta=e^{-\mathrm{i}\theta}(z-\frac{\Delta_k}2)=X+\mathrm{i}Y$. Define the affine elliptical radius $R_(z)$ as follows. With $a=\frac{|C_k|}2, b=\frac12\sqrt{|C_k|^2-|\Delta_k|^2}$,
\begin{eqnarray}\label{eq:affineer}
[R(z)]^2 = \frac{X^2}{a^2} +  \frac{Y^2}{b^2} = \frac{4\Pa{\Re(e^{-\mathrm{i}\theta}(z-\frac{\Delta_k}2))}^2}{\abs{C_k}^2} + \frac{4\Pa{\Im(e^{-\mathrm{i}\theta}(z-\frac{\Delta_k}2))}^2}{\abs{C_k}^2-\abs{\Delta_k}^2}.
\end{eqnarray}
If $\Delta_k=0$, then $R(z)=\frac{2\abs{z}}{\abs{C_k}}$. The support condition is exactly $z\in W(\bsM_k)$ if and only if $R(z)\leqslant1$.
\begin{enumerate}[(i)]
\item For $d=2$, its density of $\Delta_{k+1}$ is 
\begin{eqnarray}\label{eq:shadow2}
\cF_2(z|\bsM_k)=\frac{1}{2\pi ab(1-R(z)^2)^{\frac12}}\chi_{W(\bsM_k)}(z).
\end{eqnarray}
\item For $d\geqslant3$, the density of $\Delta_{k+1}$ is
\begin{eqnarray}
\cF_d(z | \bsM_k)=(d-1)(d-2)\int_0^1\tfrac{(1-t)^{d-3}}{t} \cF_2\Pa{\tfrac zt \big|\bsM_k}\dif t,
\end{eqnarray}
where $\cF_2(\cdot|\bsM_k)$ is the $d=2$ density above for the restriction of $\bsM_k$ to $\spn\set{\psi_1,\psi_k}$; the support is still the closed disk $W(\bsM_k)$, with foci $0$ and $\Delta_k$, major axis $\abs{C_k}$, and minor axis $\sqrt{\abs{C_k}^2-\abs{\Delta_k}^2}$.
\end{enumerate}
\end{prop}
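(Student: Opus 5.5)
The plan is to reduce everything to the fixed rank-one matrix $\bsM_k$. We then apply Lemma~\ref{lem:archimedes} in the plane $S=\spn\set{\psi_1,\psi_k}$. Since $\psi_{k+1}$ is independent of $\psi_1,\ldots,\psi_k$, conditioning leaves it Haar distributed. By \eqref{eq:shadowrep}, $\Delta_{k+1}=\Innerm{\psi_{k+1}}{\bsM_k}{\psi_{k+1}}$, so the conditional law is by definition the numerical shadow of $\bsM_k$.

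\textbf{Support.} The range of $\bsM_k=C_k\out{\psi_1}{\psi_k}$ and of its adjoint lie in $S$. Writing $\Pi$ for the projection onto $S$, we have $\Innerm{\psi}{\bsM_k}{\psi}=\Innerm{\Pi\psi}{\bsM_k}{\Pi\psi}$. Hence $W(\bsM_k)=W(\bsM_k|_S)\cup\set{0}$-type considerations reduce to the $2\times2$ compression.

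In an orthonormal basis of $S$ with first vector $\psi_1$, the compression is a $2\times2$ matrix of the form $\begin{pmatrix}\lambda & \mu\\ 0&0\end{pmatrix}$. Here $\lambda=C_k\Inner{\psi_k}{\psi_1}=\Tr{\bsM_k}=\Delta_k$, and $|\lambda|^2+|\mu|^2=\norm{\bsM_k}_2^2=|C_k|^2$. The elliptical range theorem then gives an ellipse with foci at the eigenvalues $0$ and $\Delta_k$ and minor axis $|\mu|=\sqrt{|C_k|^2-|\Delta_k|^2}$. The major axis is $\sqrt{|\lambda|^2+|\mu|^2}=|C_k|$. This also explains the cited fact from \cite{Zhang2025}. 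It contains $0$, so for $d\geqslant3$ the support is unchanged.

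\textbf{Case $d=2$.} Write $\bsM_k=\bsH_1+\mathrm{i}\bsH_2$ with $\bsH_1,\bsH_2$ Hermitian. Then $(\Re\Delta_{k+1},\Im\Delta_{k+1})=(\Innerm{\psi}{\bsH_1}{\psi},\Innerm{\psi}{\bsH_2}{\psi})$ is an affine image of the projection of the uniform Bloch vector onto a plane. By Lemma~\ref{lem:archimedes}(i) with $K=1$, that projection has density $g_1(\xi)=(2\pi\sqrt{1-|\xi|^2})^{-1}$. The affine map sends the unit disk onto the ellipse with semi-axes $a=|C_k|/2$ and $b=\frac12\sqrt{|C_k|^2-|\Delta_k|^2}$, centred at $\Delta_k/2$ and rotated by $\theta$. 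Its Jacobian is $ab$, and $|\xi|$ becomes the affine radius $R(z)$ of \eqref{eq:affineer}. This yields \eqref{eq:shadow2}, and it is exactly the remark after Lemma~\ref{lem:archimedes} (Proposition~2.1 of \cite{Dunkl2011}). The degenerate case $\Delta_k=0$ gives a disk of radius $|C_k|/2$, so $R=2|z|/|C_k|$. The linear independence of $\psi_1,\psi_k$ and $C_k\neq0$ ensure that $b>0$, i.e.\ a nondegenerate ellipse.

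\textbf{Case $d\geqslant3$.} By Lemma~\ref{lem:archimedes}(ii), write $\Pi\psi_{k+1}=\sqrt T\,\phi$ with $T\sim\Beta(2,d-2)$ and $\phi$ Haar in $S$, independent of $T$. Then $\Delta_{k+1}=T\,\Innerm{\phi}{\bsM_k|_S}{\phi}=TW$, where $W$ has density $\cF_2(\cdot|\bsM_k)$. Conditioning on $T=t$ gives the density $t^{-2}\cF_2(z/t|\bsM_k)$, since a complex scaling by $t$ has real Jacobian $t^2$. Integrating against the Beta density $(d-1)(d-2)t(1-t)^{d-3}$ leaves the factor $\frac{(1-t)^{d-3}}{t}$, which gives the stated formula.

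\textbf{Main obstacle.} The main point needing care is bookkeeping in the $2\times2$ reduction. One must verify that the ellipse's centre, rotation and semi-axes match \eqref{eq:affineer} exactly, i.e.\ that the Bloch-plane affine map has those axes. I would check this using the upper-triangular form above: the boundary of the ellipse is parametrized by $\Delta_k/2+\frac12(|C_k|\cos s\,e^{\mathrm{i}\theta}+\mathrm{i}|\mu|\sin s\,e^{\mathrm{i}\theta})$.
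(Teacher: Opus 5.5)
Your proposal is correct and takes essentially the paper's route. You reduce to the compression of $\bsM_k$ on $S=\spn\set{\psi_1,\psi_k}$, get the $d=2$ elliptical density from the uniform Bloch vector, and for $d\geqslant3$ write $\Delta_{k+1}=T\zeta$ with $T\sim\Beta(2,d-2)$ independent of $\zeta$, integrating $t^{-2}\cF_2(z/t|\bsM_k)$ against the Beta density. The differences are cosmetic: you derive the ellipse from the elliptical range theorem on $\begin{pmatrix}\Delta_k&\mu\\0&0\end{pmatrix}$ rather than quoting the rank-one fact, and you push $g_1$ from Lemma~\ref{lem:archimedes}(i) through an invertible affine map instead of the paper's explicit $(\theta,\varphi)$ change of variables summed over two preimages; since that map carries the unit disk onto $W(\bsM_k)$, its Jacobian $ab$ and the affine radius $R(z)$ are intrinsic to the ellipse, so your ``main obstacle'' check is automatic.
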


\begin{proof}
\begin{itemize}
\item \textbf{Conditional law of $\Delta_{k+1}$.} Let $\psi_1,\ldots,\psi_k$ be fixed unit vectors in $\bbC^d$, and let
$$
C_k=\prod_{j=1}^{k-1}\Inner{\psi_j}{\psi_{j+1}},\qquad
\bsM_k=\psi_1\cdots\psi_k=C_k\out{\psi_1}{\psi_k}.
$$
Assume $\psi_1,\psi_k$ are linearly independent and $C_k\neq0$. Let $\psi_{k+1}$ be Haar-random on the unit sphere of $\bbC^d$, independent of $\psi_1,\ldots,\psi_k$. Then $\Delta_{k+1}=\Innerm{\psi_{k+1}}{\bsM_k}{\psi_{k+1}}$. This is exactly the numerical shadow of the rank-one matrix $\bsM_k$. Let
$$
S=\operatorname{span}\set{\psi_1,\psi_k},\qquad \dim_{\bbC}S=2.
$$
Since $\bsM_k$ vanishes on $S^\perp$ and maps $S$ into itself, we may decompose
$$
\ket{\psi_{k+1}}=\bsu+\bsv,\qquad \bsu\in S,\quad \bsv\in S^\perp.
$$
Then $\Delta_{k+1}=\Innerm{\bsu}{\bsM_k}{\bsu}$. For a Haar-random unit vector in $\bbC^d$, the squared norm of its projection onto a fixed $2$-dimensional subspace is
$$
T:=\norm{\bsu}^2\sim \op{Beta}(2,d-2),
$$
and, conditionally on $T>0$, the normalized vector
$\phi=\frac{\bsu}{\norm{\bsu}}$ is Haar-random on the unit sphere of $S\cong\bbC^2$. Hence
$$
\Delta_{k+1}=T\,\zeta,\qquad \zeta:=\Innerm{\phi}{\bsM_k}{\phi},
$$
where $T\sim\op{Beta}(2,d-2)$ is independent of $\zeta$. For $d=2$, we have $T=1$, so $\Delta_{k+1}=\zeta$. For $d\geqslant3$, the density of $\Delta_{k+1}$ is the convolution of the density of $T$ and the density of $\zeta$.
\item \textbf{The numerical range of $\bsM_k$.} Write
$\bsM_k=C_k\out{\psi_1}{\psi_k}$. Its trace is
$\Tr{\bsM_k}=C_k\Inner{\psi_k}{\psi_1}=\Delta_k$. Its unique nonzero singular value is
$|C_k|$. For a rank-one matrix $\bsM=|C|\out{\bsa}{\bsb}$ with $\norm{\bsa}=\norm{\bsb}=1$), the numerical range $W(\bsM)$ is the closed elliptical disk with foci $0$ and $\Tr{\bsM}$, major axis $|C|$, and minor axis
$\sqrt{|C|^2-|\Tr{\bsM}|^2}$.
Thus $W(\bsM_k)$ has foci $0$ and $\Delta_k$, major axis $|C_k|$, and minor axis $\sqrt{|C_k|^2-|\Delta_k|^2}$. Set
$$
a=\frac{|C_k|}{2},\qquad
b=\frac12\sqrt{|C_k|^2-|\Delta_k|^2}.
$$
Then $a$ and $b$ are the semi-major and semi-minor axes of $W(\bsM_k)$.
\item \textbf{The case $d=2$.} For $d=2$, the density of $\zeta=\Innerm{\phi}{\bsM_k}{\phi}$ for Haar-random $\phi\in\bbC^2$ is the numerical shadow of the $2\times2$ matrix $\bsM_k$. By a unitary conjugation and, if necessary, a phase rotation $\bsM_k\mapsto e^{\mathrm{i}\varphi}\bsM_k$, we may assume that
$$
\bsM_k=
\begin{pmatrix}
t & s\\
0 & 0
\end{pmatrix},
\qquad t\geqslant0,\quad s\geqslant0.
$$
The phase rotation only rotates the numerical shadow in the complex plane and leaves its density unchanged.
Let $\ket{\phi}=(x_1,x_2)$ with $|x_1|^2+|x_2|^2=1$. Write
$u=|x_1|^2$ and $\varphi=\arg(x_2)-\arg(x_1)$. Then $u\sim\mathrm{Uniform}(0,1)$ and $\varphi\sim\mathrm{Uniform}(0,2\pi)$ are independent, and
$\zeta=t u+s\sqrt{u(1-u)} e^{i\varphi}$. Equivalently, with $u=\frac{1+\cos\theta}2, \theta\in[0,\pi]$,
$$
\zeta=\frac t2+\frac t2\cos\theta+\frac s2\sin\theta\,e^{\mathrm{i}\varphi}.
$$
The joint density of $(\theta,\varphi)$ is $\frac{1}{4\pi}\sin\theta\dif \theta\dif\varphi$. Changing variables to $z=\zeta$ and summing over the two preimages of a point $z$ inside the ellipse gives the density
\begin{eqnarray}\label{eq:densityk1}
\cF_2(z|\bsM_k)=\frac{1}{2\pi ab\sqrt{1-R(z)^2}}\chi_{W(\bsM_k)}(z),
\end{eqnarray}
where $R(z)$ is the affine elliptical radius\footnote{Here the affine elliptical radius $R(z)$ defined in Eq.~\eqref{eq:affineer} is different from the quantity $R_c(z)=\frac{\abs{z}+\abs{z-\Delta_k}}{\abs{C_k}}$, called the confocal elliptical radius.}. This is the standard density of the numerical shadow of a $2\times2$ matrix; the calculation is a straightforward change of variables and simplification using the quadratic equation defining the preimages. The support is exactly the closed elliptical disk $W(\bsM_k)$.
\item \textbf{The case $d\geqslant3$.} For $d\geqslant3$, we have $\Delta_{k+1}=T\zeta$, where $T\sim\op{Beta}(2,d-2)$ is independent of $\zeta$, and $\zeta$ has density $\cF_2(\cdot|\bsM_k)$ on the $2$-dimensional subspace $S$ by
Lemma~\ref{lem:archimedes}(ii). The density of $T$ is
$$
f_T(t)=\frac{1}{\rB(2,d-2)}t(1-t)^{d-3}=(d-1)(d-2)t(1-t)^{d-3},\qquad 0<t<1.
$$
If $Z=T\zeta$, then conditionally on $T=t$, the density of $Z$ is $t^{-2}\cF_2(\frac zt|\bsM_k)$, because scaling a $2$-dimensional random variable by $t$ multiplies its density by $t^{-2}$. Therefore the unconditional density is
$$
\cF_d(z|\bsM_k)=\int^1_0 f_T(t)\tfrac{1}{t^2}\cF_2(\tfrac zt |\bsM_k)\dif t.
$$
Substituting $f_T(t)$,
$$
\cF_d(z|\bsM_k)
=
(d-1)(d-2)
\int_0^1
\tfrac{(1-t)^{d-3}}{t}
\cF_2(\tfrac zt|\bsM_k)\dif t.
$$
The support remains the closed elliptical disk $W(\bsM_k)$, because each ellipse $tW(\bsM_k), 0<t\leqslant1$, is contained in the convex set $W(\bsM_k)$, and their union over $t\in(0,1]$ is exactly $W(\bsM_k)\setminus\set{0}$, with $0$ included as a limit point.\qedhere
\end{itemize}
\end{proof}

The ellipses $W(\bsM_k)$ are the building blocks of the sets of Bargmann invariants studied in
\cite{Zhang2025,Pratapsi2025,Xu2026,Fernandes2024}; Proposition~\ref{prop:conditional} adds the probability law on
each of them.

\subsection{Three states: support and exact density}

Let $0<\alpha<\beta<\gamma$ and 
$$
I=\int_\alpha^\beta\frac{\dif u}{\sqrt{u(u-\alpha)(\beta-u)(\gamma-u)}}.
$$
Define
\begin{eqnarray}
\sE_1(x) = \int_0^{\frac\pi2}\frac{\dif\theta}{\sqrt{1-x^2\sin^2\theta}}.
\end{eqnarray}
This is the definition of the \emph{complete elliptic integral of the first kind}. Use the substitution $\sin^2\theta =
\frac{\beta(u-\alpha)}{u(\beta-\alpha)}$. Solving for $u$ gives
$$
u = \frac{\alpha\beta}{\beta-(\beta-\alpha)\sin^2\theta}.
$$
As $u$ runs from $\alpha$ to $\beta$, $\theta$ runs from $0$ to
$\frac\pi2$. Direct simplification yields
$$
I=\frac{2}{\sqrt{\beta(\gamma-\alpha)}}\int_0^{\frac\pi2}\frac{\dif\theta}{\sqrt{1-\kappa^2\sin^2\theta}}=\frac{2}{\sqrt{\beta(\gamma-\alpha)}} \sE_1(\kappa),
$$
where $\kappa^2=\frac{\gamma(\beta-\alpha)}{\beta(\gamma-\alpha)}$. In numerical libraries, $\sE_1$ is often parametrized by $\kappa^2$ rather than by the modulus $\kappa$ \cite{DLMF}.

For $m=3$, $\bsM_2=\Inner{\psi_1}{\psi_2}\out{\psi_1}{\psi_2}$ has trace $U=|\Inner{\psi_1}{\psi_2}|^2$ and
$|C_2|=\sqrt U$, where $U\sim\Beta(1,d-1)$. The conditional support is the ellipse
\begin{eqnarray}\label{eq:Eu}
\cE_u:=\Set{x+\mathrm{i}y:\frac{(x-u/2)^2}{u/4}+\frac{y^2}{u(1-u)/4}\leqslant1},
\end{eqnarray}
with foci $0$ and $u$. Equivalently $y^2\leqslant F_x(u)$ with $F_x(u):=\frac14(1-u)\pa{u(1-u+4x)-4x^2}$.

\begin{prop}\label{prop:support}
For every $d\geqslant2$, the support of $\Delta_3$ is
\begin{eqnarray*}
\cB_3&=&\bigcup_{0\leqslant u\leqslant1}E_u=\Set{z\in\bbC:-\frac18\leqslant \Re(z)\leqslant 1, 27|z|^2\leqslant(1+2\Re(z))^3}\\
&=&\Set{re^{\mathrm{i}\varphi}:0\leqslant r\leqslant\tfrac18\sec^3(\tfrac{\pi-|\varphi|}3), |\varphi|\leqslant\pi}.
\end{eqnarray*}
\end{prop}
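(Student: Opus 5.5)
The plan has three parts: show that the support of $\Delta_3$ is the union of the conditional ellipses $\cE_u$, compute that union as the envelope of the family $y^2\leqslant F_x(u)$, and then translate the resulting Cartesian inequality into polar form.

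\textbf{Support as a union of ellipses.} By Proposition~\ref{prop:conditional} with $k=2$, conditionally on $\psi_1,\psi_2$ the law of $\Delta_3$ has a density that is positive on the interior of $\cE_u$, where $u=|\Inner{\psi_1}{\psi_2}|^2$. For $d\geqslant3$ the conditional support is still the full ellipse, since $T\zeta$ fills $W(\bsM_2)$. The overlap $U\sim\Beta(1,d-1)$ has positive density on $(0,1)$. Hence the support of $\Delta_3$ is the closure of $\bigcup_{0<u<1}\cE_u$. The degenerate values $u=0$ (the point $0$) and $u=1$ (the point $1$) lie in this closure, so the support is $\bigcup_{0\leqslant u\leqslant1}\cE_u$, independently of $d$.

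\textbf{Computing the union.} A point $x+\mathrm{i}y$ belongs to the union exactly when $y^2\leqslant\max_{u\in[0,1]}F_x(u)$. So we need $G(x):=\max_{u\in[0,1]}F_x(u)$ and must show $\{y^2\leqslant G(x)\}$ equals $\{-\frac18\leqslant x\leqslant1,\ 27(x^2+y^2)\leqslant(1+2x)^3\}$. Writing $w=1-u$, the function becomes a cubic in $w$,
$$4F_x=w\big((1-w)(w+4x)-4x^2\big)=-w^3+(1-4x)w^2+4x(1-x)w.$$
Its derivative $-3w^2+2(1-4x)w+4x(1-x)$ vanishes at
$$w=\tfrac13\Big[(1-4x)\pm\sqrt{(1-4x)^2+12x(1-x)}\Big]=\tfrac13\big[(1-4x)\pm(1+2x)\big],$$
because the discriminant simplifies to $(1+2x)^2$. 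The roots are therefore $w=\frac{2-2x}3$ and $w=-2x$.
\begin{itemize}
\item For $x\geqslant0$, the interior maximum is at $w^*=\frac{2(1-x)}3\in[0,1]$. Substituting gives $4G(x)=w^*\big((1-w^*)(w^*+4x)-4x^2\big)$, which should simplify to $\frac{4}{27}\big((1+2x)^3-27x^2\big)$; I will check this by expansion. This yields $27(x^2+y^2)\leqslant(1+2x)^3$.
\item For $-\frac12<x<0$, the other critical point $w=-2x$ also enters, since $F_x$ is a cubic with negative leading coefficient. I will check which critical point maximizes on $[0,1]$ and verify that the same formula $(1+2x)^3-27x^2\geqslant0$ results.
\item The constraint $x\geqslant-\frac18$ should come out as the condition $G(x)\geqslant0$. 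This needs care: $(1+2x)^3-27x^2$ factors as $(1-x)^2(8x+1)$, up to the correct sign convention. This factorization gives both endpoints, $x=-\frac18$ and $x=1$, at once, and I expect it to be the cleanest route.
\end{itemize}

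\textbf{Polar form.} With $z=re^{\mathrm{i}\varphi}$, the inequality reads $27r^2\leqslant(1+2r\cos\varphi)^3$. Setting $\cos\varphi=-\cos(\pi-|\varphi|)$ and using the triple-angle identity $\cos3\alpha=4\cos^3\alpha-3\cos\alpha$, the boundary curve should parametrize as $r=\frac18\sec^3\frac{\pi-|\varphi|}3$. I would verify this by substituting $r=\frac18\sec^3\alpha$ with $\alpha=\frac{\pi-|\varphi|}3$ and checking the identity. The curve has the shape of a cardioid-like cubic, so star-shapedness with respect to $0$ gives the region as $0\leqslant r\leqslant$ that bound.

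\textbf{Main obstacle.} The main obstacle is the case analysis of the maximizer of $F_x$ on $[0,1]$ for negative $x$, together with the boundary points $u\in\{0,1\}$. I would handle it by using the factored form above rather than comparing critical values directly.
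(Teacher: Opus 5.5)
\textbf{Verdict: the outline is right, but the endpoint handling would fail.}
Your outline is the paper's argument. You describe $\cE_u$ by $y^2\leqslant F_x(u)$, maximize in $u$ at $u^*=\frac{1+2x}3$ (your $w^*=\frac{2(1-x)}3$), and use $27F_x(u^*)=(1+2x)^3-27x^2=(1-x)^2(1+8x)$; this identity is exact, with no sign ambiguity. The problem is the endpoint bookkeeping, which you yourself flag as the main obstacle. The factored form does not resolve it.

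First, the equivalence $z\in\cE_u\iff y^2\leqslant F_x(u)$ holds only for $0<u<1$, because you multiply through by $1-u$. At $u=1$ one has $F_x(1)=0$ for every $x$, so $y^2\leqslant F_x(1)$ describes the whole real axis. But $\cE_1$ is the segment $[0,1]$, not the point $1$. Hence $G(x)=\max_{[0,1]}F_x\geqslant0$ for all $x$, and your set $\{y^2\leqslant G(x)\}$ is $\cB_3\cup\bbR$. So ``$x\geqslant-\frac18$ comes out as $G(x)\geqslant0$'' cannot work.

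Second, $(1-x)^2$ has a double root, so the factorization gives no constraint at $x=1$. The inequality $27|z|^2\leqslant(1+2\Re z)^3$ also holds on the outer lobe $\Re z>1$ of this nodal cubic. For example, $z=2$ gives $108\leqslant125$, yet $z=2\notin\cB_3$ since $|\Delta_3|\leqslant1$. Relatedly, your claim $w^*\in[0,1]$ for all $x\geqslant0$ is false for $x>1$.

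\textbf{How to repair it.} Work only with $u\in(0,1)$ and write $4F_x(u)=u(u-t)^2-4x^2$ with $t=1+2x$. The local maximum $u^*=t/3$ lies in $(0,1)$ if and only if $-\frac12<x<1$. Then split by $x$:
\begin{itemize}
\item If $x\leqslant-\frac12$ or $x\geqslant1$, then $4F_x$ is increasing on $[0,1]$ with $F_x(1)=0$, so $F_x<0$ on $(0,1)$.
\item If $-\frac12<x<-\frac18$, then $F_x(u^*)<0$, and $F_x$ climbs back only to the value $0$ at $u=1$, which is not attained in $(0,1)$. So again $F_x<0$ on $(0,1)$.
\item If $-\frac18\leqslant x<1$, then $\max_{(0,1)}F_x=F_x(u^*)\geqslant0$.
\end{itemize}
Hence $\bigcup_{0<u<1}\cE_u=\{-\frac18\leqslant\Re z<1,\ 27|z|^2\leqslant(1+2\Re z)^3\}$, and taking the closure adds only $z=1$.

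Alternatively, $\Re\cE_u=[\frac{u-\sqrt u}2,\frac{u+\sqrt u}2]\subseteq[-\frac18,1]$ gives both real-part bounds directly, as in the proof of Theorem~\ref{th:marginals}. The paper's one-line proof also leaves $u^*\in[0,1]$ implicit. However, it keeps $\Re z\leqslant1$ as a separate condition rather than deriving it from the factorization.

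\textbf{Polar form.} Star-shapedness is immediate, since each $\cE_u$ is convex and contains its focus $0$. You must then take the root of $27r^2=(1+2r\cos\varphi)^3$ with $r\cos\varphi\leqslant1$. That is the root your substitution $r=\frac18\sec^3\frac{\pi-|\varphi|}3$ produces.
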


\begin{proof}
For fixed $x$, $F_x$ is maximized at $u^*=(1+2x)/3$, with $F_x(u^*)=(1-x)^2(1+8x)/27$, and
$27(x^2+y^2)\leqslant(1+2x)^3$ is the same inequality as $y^2\leqslant(1-x)^2(1+8x)/27$. The support of $U$ is $[0,1]$ for every
$d$, and by Proposition~\ref{prop:conditional} the conditional support is all of $E_u$.
\end{proof}

The set $\cB_3$ is known from \cite{Fernandes2024,Zhang2025}; the curve $27|z|^2=(1+2\Re(z))^3$ has a cusp at $z=1$
and passes through $z=-\frac18$ (Figure~\ref{fig:bargmann}(a)). For qubits, the conditional density is
explicit, and averaging over $U\sim\mathrm{Uniform}[0,1]$ produces an elliptic integral. Put $t=1+2x$ and
$\varrho=|z|$, and consider the cubic
$$
C_z(u)=u(u-t)^2-4\varrho^2=4\pa{F_x(u)-y^2}.
$$
The equation $C_z(u)=0$, i.e., 
$u^3 - 2t u^2 + t^2 u-4\varrho^2=0$,  has discriminant $16\varrho^2(t^3-27\varrho^2)>0$ for $z\neq0$ in the interior of $\cB_3$. Let $u=v+\frac{2t}3$. We get that
$$
v^3 -\frac{t^2}3v - 2\Pa{2\varrho^2-\frac{t^3}{27}}=0.
$$
The depressed cubic has the form $v^3-3pv-2q=0$, where $p=\frac{t^2}9$ and $q=2\varrho^2-\frac{t^3}{27}$. For three real roots, we use the trignometric substitution $v=2\sqrt{p}\cos\theta=\frac{2t}3\cos\theta$. Substituting $v$ into the depressed cubic $v^3-3pv-2q=0$:
\begin{eqnarray*}
0=\Pa{\frac{2t}3\cos\theta}^3 - \frac{t^2}3\Pa{\frac{2t}3\cos\theta}-2\Pa{2\varrho^2 - \frac{t^3}{27}}= \frac{8t^3}{27}\cos^3\theta - \frac{2t^3}9\cos\theta - 4\varrho^2+\frac{2t^3}{27}.
\end{eqnarray*}
Multiply by $\frac{27}{2t^3}$: $4\cos^3\theta - 3\theta = \frac{54\varrho^2}{t^3}-1$. Using the triple angle identity $\cos(3\theta)=4\cos^3\theta-3\cos\theta$: $\cos(3\theta)=\frac{54\varrho^2}{t^3}-1$. Let  
$$
\theta_0 :=\frac13\arccos\Pa{\frac{54\varrho^2}{t^3}-1} \in[0,\tfrac\pi3].
$$
The $3\theta=3\theta_0+2\pi k$ for integer $k$. Denote $\theta_k=\theta_0+\frac{2\pi k}3$ for $k=0,1,2$. Therefore $C_z(u)=0$ has three real roots $0<\alpha<\frac t3<\beta<t<\gamma$, and
\begin{eqnarray}\label{eq:roots}
\set{\alpha,\beta,\gamma}=\Set{\tfrac{2t}3\pa{1+\cos(\theta_0+\tfrac{2\pi k}3)}:k=0,1,2}.
\end{eqnarray}
with $\gamma$ corresponding to $k=0$.

\begin{thrm}\label{th:b3density}
For $d=2$ and $z=x+\mathrm{i}y\neq0$ in the interior of $\cB_3$, $\Delta_3$ has the density
\begin{eqnarray}\label{eq:b3density}
f^{(d)}_3(z)=\frac1\pi\int_{\set{F_x(u)>y^2}}\frac{\dif u}{\sqrt u\sqrt{F_x(u)-y^2}}
=\frac{4}{\pi\sqrt{\beta(\gamma-\alpha)}}\sE_1(\kappa),\qquad
\kappa^2=\frac{\gamma(\beta-\alpha)}{\beta(\gamma-\alpha)},
\end{eqnarray}
where $\sE_1(\kappa)=\int_0^{\pi/2}(1-\kappa^2\sin^2\theta)^{-1/2}\dif\theta$.
\end{thrm}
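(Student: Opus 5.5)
We condition on $\psi_1,\psi_2$ and average the conditional density of Proposition~\ref{prop:conditional}(i) over $U=|\Inner{\psi_1}{\psi_2}|^2$. For $d=2$ we have $U\sim\Beta(1,1)$, i.e.\ uniform on $[0,1]$. For $k=2$ the matrix is $\bsM_2=\Inner{\psi_1}{\psi_2}\out{\psi_1}{\psi_2}$, with $\Delta_2=U$, $|C_2|=\sqrt U$, semi-axes $a=\frac{\sqrt u}{2}$ and $b=\frac12\sqrt{u(1-u)}$, centre $u/2$ and major axis along the real line. Then \eqref{eq:shadow2} gives
$$
\cF_2(z|\bsM_2)=\frac{1}{2\pi ab\sqrt{1-R^2}},\qquad
1-R^2=1-\frac{(x-u/2)^2}{u/4}-\frac{y^2}{u(1-u)/4}.
$$
The plan is to simplify $ab\sqrt{1-R^2}$. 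Multiplying through by $u(1-u)/4$ shows
$$
\frac{u(1-u)}{4}(1-R^2)=\frac{1-u}{4}\bigl(u(1-u+4x)-4x^2\bigr)-y^2=F_x(u)-y^2,
$$
which should be checked by expanding $\frac{u(1-u)}4-(1-u)(x-\frac u2)^2$. Hence
$$
ab\sqrt{1-R^2}=\frac{\sqrt u}{2}\sqrt{F_x(u)-y^2}.
$$
Integrating $\cF_2$ against $\dif u$ on $[0,1]$, restricted to $\set{F_x(u)>y^2}$ where the indicator is nonzero, gives the first expression $\frac1\pi\int \frac{\dif u}{\sqrt u\sqrt{F_x(u)-y^2}}$. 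The integrand's singularities are integrable square-root singularities, so Fubini is justified.

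Next we identify the integral with the elliptic one. Using $4(F_x(u)-y^2)=C_z(u)\cdot(\text{sign})$, we need to recheck the relation between $4(F_x-y^2)$ and the cubic $u(u-t)^2-4\varrho^2$ with $t=1+2x$. Expanding, $(1-u)(u(1-u+4x)-4x^2)-4y^2$ should equal $-[u(u-t)^2-4\varrho^2]$ up to a constant/sign adjustment; the expected identity is $4(F_x(u)-y^2)=4\varrho^2-u(u-t)^2=-C_z(u)$. (The paper writes $+$, so I would verify the sign; if there is a discrepancy it only changes which interval the cubic is positive on.)

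Since $C_z$ is monic with roots $0<\alpha<\beta<\gamma$, we have $-C_z(u)=(u-\alpha)(\beta-u)(\gamma-u)$. This is positive exactly on $(\alpha,\beta)\cup(\gamma,\infty)$. We need $(\alpha,\beta)\subset(0,1)$ and $\gamma>1$ (or $(\gamma,\infty)$ disjoint from $[0,1]$). This follows since $F_x(1)=0$, so $-C_z(1)=-4y^2<0$ for $y\ne0$ with $1\in(\beta,\gamma)$; the case $y=0$ is a measure-zero line handled by continuity. Thus the integral is
$$
\frac{2}{\pi}\int_\alpha^\beta\frac{\dif u}{\sqrt{u(u-\alpha)(\beta-u)(\gamma-u)}}=\frac2\pi I,
$$
and the substitution already carried out before the theorem gives $I=\frac{2}{\sqrt{\beta(\gamma-\alpha)}}\sE_1(\kappa)$, producing \eqref{eq:b3density}.

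The main obstacle is the bookkeeping in this last step: confirming the polynomial identity relating $F_x(u)-y^2$ to the cubic, and showing rigorously that the positivity set inside $[0,1]$ is exactly $(\alpha,\beta)$. For the latter I would evaluate the sign of the cubic at $u=0$ (value $4\varrho^2>0$ for $-C_z$, hence nonpositive for $F_x-y^2$ since $F_x(0)=-x^2$... consistent) and at $u=1$, and use $\beta<t$ together with $\gamma$ from \eqref{eq:roots}.
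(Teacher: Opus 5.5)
Your route is the paper's: condition on $\psi_1,\psi_2$, insert the elliptical shadow density \eqref{eq:shadow2} for $\bsM_2$, average over $U\sim\mathrm{Uniform}(0,1)$, rewrite the $u$-integrand through the cubic $C_z$, and apply the substitution given before the theorem. Your first step is correct, and your semi-axis $a=\frac{\sqrt u}{2}$ is the right one.

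The sign bookkeeping in the second step is wrong in two places, and the two errors cancel.

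\emph{First error.} Expanding gives $4F_x(u)=(1-u)\bigl(u-u^2+4xu-4x^2\bigr)=u^3-2tu^2+t^2u-4x^2=u(u-t)^2-4x^2$. Hence $4(F_x(u)-y^2)=C_z(u)$, exactly as the paper writes, and your ``expected identity'' $4(F_x-y^2)=-C_z$ is false. Your own check at $u=0$ shows this rather than confirming it: $F_x(0)-y^2=-\varrho^2<0$, whereas $-C_z(0)=4\varrho^2>0$. In the same way, the correct value is $C_z(1)=-4y^2$, not $-C_z(1)=-4y^2$.

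\emph{Second error.} Since $C_z$ is monic, $C_z(u)=(u-\alpha)(u-\beta)(u-\gamma)=(u-\alpha)(\beta-u)(\gamma-u)$. It is $C_z$, not $-C_z$, that equals this product.

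With both signs corrected, your argument goes through. The function $F_x(u)-y^2=\frac14(u-\alpha)(\beta-u)(\gamma-u)$ is positive exactly on $(\alpha,\beta)\cup(\gamma,\infty)$. The value $C_z(1)=-4y^2<0$ places $1$ in $(-\infty,\alpha)\cup(\beta,\gamma)$.

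To conclude $1\in(\beta,\gamma)$ you still need $\alpha<1$, which you do not justify. It follows from $\alpha<t/3=(1+2x)/3<1$ for interior $z$. Alternatively, the positivity set must meet $(0,1)$ while $C_z(0)<0$ and $C_z(1)<0$.

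Then $\sqrt u\,\sqrt{F_x(u)-y^2}=\frac12\sqrt{u(u-\alpha)(\beta-u)(\gamma-u)}$, so $f_3^{(2)}(z)=\frac{2}{\pi}I=\frac{4}{\pi\sqrt{\beta(\gamma-\alpha)}}\sE_1(\kappa)$. No idea is missing, but as written the proposal asserts two false identities and leaves the sign ``to be verified''. You should state the correct identity $4(F_x-y^2)=C_z$ and the correct factorization instead.
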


\begin{proof}
Write $\psi_j\equiv\proj{\psi_j}$ and let $\psi_1,\psi_2,\psi_3$ be independent Haar-random qubit pure states. By Proposition~\ref{prop:conditional}, conditional on $\psi_1,\psi_2$ with $U:=\abs{\Inner{\psi_1}{\psi_2}}^2\in(0,1)$ and $C_2=\Inner{\psi_1}{\psi_2}\neq0$, the rank-one matrix $\bsM_2=C_2\out{\psi_1}{\psi_2}$ has numerical range equal to the closed elliptical disk $\cE_u$ with foci $0$ and $U=\abs{C_2}^2$, major semi-axis $a=\sqrt{u}$ and minor semi-axis $b=\frac12\sqrt{u(1-u)}$. Since $d=2$, the projection factor $T\equiv1$ (Lemma~\ref{lem:archimedes}(ii) gives $T\sim \Beta(2,0)\equiv1$), so $\Delta_3=\Innerm{\psi_3}{\bsM_2}{\psi_3}$ is distributed exactly as the numerical shadow of $\bsM_2$ on $\cE_u$, namely
$$
\cF_2(z|\bsM_2) = \frac1{2\pi ab\sqrt{1-R(z)^2}}\chi_{W(\bsM_2)}(z) = \frac1{\pi\sqrt u\sqrt{F_x(u)-y^2}}\chi_{\set{F_x(u)>y^2}},
$$
where $R(z)$ is the affine elliptical radius of $\cE_u$, because $1-R^2(z)=\frac{4(F_x(u)-y^2)}{u(1-u)}$, and $(x,y)\in\cE_u$ if and only if $y^2\leqslant F_x(u)$. The random variable $U=\abs{\Inner{\psi_1}{\psi_2}}^2$ for independent Haar qubit states has the law 
$\Beta(1,1)=\op{Uniform}(0,1)$ when $d=2$. Therefore
$$
f^{(2)}_3(z) = \int^1_0\cF_2(z|\bsM_2)\dif u = \frac1\pi\int _{u\in(0,1):F_x(u)>y^2}\frac{\dif u}{\sqrt{u}\sqrt{F_x(u)-y^2}}.
$$
This is the first equality in \eqref{eq:b3density}. The support condition $F_x(u)>y^2$ for some $u\in(0,1)$ is precisely $z\in\cB_3$. Set $t=1+2x$ and $\varrho=\abs{z}=\sqrt{x^2+y^2}$. A direct computation gives 
$$
f^{(2)}_3(z) = \frac2\pi \int_{u:C_z(u)>0}\frac{\dif u}{\sqrt{C_z(u)}}.
$$
The cubic $C_z(u)=u(u-t)^2-4\varrho^2$ satisfies $C_z(0)=C_z(t)=-4\varrho^2<0$, and $C_z(\frac t3)=4\Pa{\frac{t^3}{27}-\varrho^2}$. Inside $\cB_3$, one has $27\varrho^2<(1+2x)^3=t^3$, so $C_z(\frac t3)>0$. Therefore $C_z$ has three real roots 
$$
0<\alpha<\frac t3<\beta<t<\gamma,
$$
and $C_z(u)>0$ exactly on $(\alpha,\beta)\cup(\gamma,\infty)$; since $C_z(1)=-4y^2\leqslant0$ (with $y\neq0$ in the interior), the relevant interval is $(\alpha,\beta)$ with $\beta\leqslant1$. Hence 
$$
f^{(2)}_3(z) = \frac2\pi \int^\beta_\alpha \frac{\dif u}{\sqrt{u(u-\alpha)(\beta-u)(\gamma-u)}}.
$$
After simplification, we get the desired result.
\end{proof}

\begin{prop}\label{prop:local}
For $d=2$, $f^{(d)}_3(z)=\frac6\pi\log\frac1{|z|}+o(1)$ as $z\to0$. As $z$ tends to a boundary point of $\cB_3$ other than
$-\frac18$ and $1$, $f^{(d)}_3(z)\to\frac{2\sqrt3}{1+2\Re(z)}$.
\end{prop}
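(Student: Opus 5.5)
The plan is to work entirely from the closed form \eqref{eq:b3density},
$$
f^{(2)}_3(z)=\frac{4}{\pi\sqrt{\beta(\gamma-\alpha)}}\,\sE_1(\kappa),\qquad \kappa^2=\frac{\gamma(\beta-\alpha)}{\beta(\gamma-\alpha)},
$$
together with the root description \eqref{eq:roots} of the cubic $C_z(u)=u(u-t)^2-4\varrho^2$, where $t=1+2\Re(z)$ and $\varrho=|z|$. Both limits are degenerations of this cubic. At the origin one root collides with $0$ and the other two pinch towards the double root $t$, so $\kappa\to1$ and $\sE_1$ blows up logarithmically. On the boundary curve $27\varrho^2=t^3$ the two roots $\alpha,\beta$ merge, so $\kappa\to0$ and $\sE_1\to\frac\pi2$. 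In both cases I only need leading-order root asymptotics and the classical behaviour of $\sE_1$ at the two ends of its range.

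\emph{Behaviour at $z=0$.} As $z\to0$ we have $t\to1$ and $\varrho\to0$, and I will solve $u(u-t)^2=4\varrho^2$ perturbatively. Near $u=0$ this gives $\alpha=4\varrho^2/t^2+O(\varrho^4)$. Near $u=t$ it gives $\beta,\gamma=t\mp 2\varrho/\sqrt t+O(\varrho^2)$.

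Hence $\beta(\gamma-\alpha)=t^2+O(\varrho)$. For the modulus I will use
$$
1-\kappa^2=\frac{\alpha(\gamma-\beta)}{\beta(\gamma-\alpha)},
$$
which follows directly from the definition of $\kappa^2$. Inserting the expansions gives $1-\kappa^2=16\varrho^3t^{-9/2}(1+O(\varrho))$.

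Next I invoke the standard expansion $\sE_1(\kappa)=\log\frac{4}{\sqrt{1-\kappa^2}}+O\big((1-\kappa^2)\log(1-\kappa^2)\big)$. Since $\log 4-\frac12\log16=0$, this yields $\sE_1(\kappa)=\frac32\log\frac1\varrho+\frac94\log t+o(1)$.

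Multiplying by $\frac{4}{\pi t}(1+O(\varrho))$ then gives $\frac6\pi\log\frac1{|z|}+o(1)$. For this I use $t-1=2\Re(z)=O(|z|)$, so that $(t^{-1}-1)\log\frac1{|z|}\to0$ and $\log t\to0$.

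The one point to watch is the derivation of \eqref{eq:b3density}, which assumed $y\neq0$ so that $\beta\leqslant1$. For $z$ near $0$ one has $\beta\approx1-2\varrho$, so the interval $(\alpha,\beta)$ lies inside $(0,1)$ whenever $\Re(z)<0$ or $\varrho$ is small relative to $\Re(z)$. I will check that the real axis is still covered, either by continuity of the first integral in \eqref{eq:b3density} or by noting that $\{y=0\}$ is null.

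\emph{Boundary behaviour.} Let $z$ tend to a point $z_0$ with $27|z_0|^2=(1+2\Re z_0)^3$. Then $\frac{54\varrho^2}{t^3}-1\to1$, so $\theta_0\to0$ in \eqref{eq:roots}. The roots therefore tend to $\alpha,\beta\to\frac t3$ and $\gamma\to\frac{4t}{3}$; as a check, the three roots sum to $2t$.

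Consequently $\kappa^2\to0$ and $\sE_1(\kappa)\to\frac\pi2$. The prefactor tends to
$$
\frac{4}{\pi\sqrt{\frac t3\cdot t}}=\frac{4\sqrt3}{\pi t},
$$
so the product tends to $\frac{2\sqrt3}{t}=\frac{2\sqrt3}{1+2\Re(z)}$.

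\emph{Main obstacle.} I expect the main difficulty to be the exceptional points, rather than the computation itself. At $z_0=1$ we have $t=3$ and the cusp: the merged root $\frac t3=1$ sits exactly at the endpoint $u=1$ of the range of $U$, so the constraint $u\leqslant1$ starts to cut the interval $(\alpha,\beta)$ and the formula $\int_\alpha^\beta$ no longer represents the density. At $z_0=-\frac18$ we have $t=\frac34$, and the boundary meets the line $\Re z=-\frac18$ where the curve degenerates, with $\varrho$ and $t$ related singularly. I would therefore verify that along any approach to any other boundary point, $t$ stays bounded away from $0$ and $3$ and the interval $(\alpha,\beta)$ stays strictly inside $(0,1)$. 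This makes the limits above uniform, and it explains why the two endpoints are excluded from the statement.
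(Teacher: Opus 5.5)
Your proposal is correct and follows the paper's proof essentially step for step. At the origin you use the same root asymptotics, the identity $1-\kappa^2=\alpha(\gamma-\beta)/(\beta(\gamma-\alpha))$, and the expansion $\sE_1(\kappa)=\log(4/\kappa')+o(1)$. On the boundary you use the same merging $\alpha,\beta\to t/3$, $\gamma\to 4t/3$; the paper gets this from $C_z(t/3)=C_z'(t/3)=0$ and Vieta rather than from \eqref{eq:roots}, which makes no difference.

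One correction to your side remarks. Because $C_z(1)=-4y^2\leqslant0$ and $\alpha<t/3<1$, you always have $\beta\leqslant1\leqslant\gamma$. So the closed form holds on the whole interior, including $y=0$, and the constraint $u\leqslant1$ never cuts $(\alpha,\beta)$. Your explanation of why $z_0=1$ and $z_0=-\frac18$ are excluded is therefore not right. At both points the roots still tend to $(t/3,t/3,4t/3)$, so the same limit $2\sqrt3/(1+2\Re z_0)$ persists there, and the exclusion in the statement is merely cautious.
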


\begin{proof}
Note that $z\to0$ if and only if $\varrho=\abs{z}\to0$. At the same time, $t=1+2x=1+2\Re(z)\to1$. More precisely $t=1+O(\varrho)\to1$.
The cubic is $C_z(u)=u(u-t)^2-4\varrho^2$. Its roots are the solutions of $u(u-t)^2=4\varrho^2$.
\begin{itemize}
\item Small root $\alpha$: near $u=0$, we have $u(u-t)^2\approx ut^2$, so $ut^2\approx 4\varrho^2$, giving $\alpha=\frac{4\varrho^2}{t^2}+O(\varrho^4)$.
\item Roots near $t$: write $u=t+w$. Then $u(u-t)^2=tw^2+w^3$. Setting this equal to $4\varrho^2$ gives $tw^2+w^3=4\varrho^2$. For small $w$, $w^2\approx \frac{4\varrho^2}{t}$, so $w\approx \pm \frac{2\varrho}{\sqrt{t}}$. A first-order correction: writing $w=\pm\frac{2\varrho}{\sqrt{t}}+\delta$, one finds 
$$
tw^2+w^3=4\varrho^2\pm \frac{12\varrho^2}{\sqrt{t}}\delta+O(\varrho^3)
$$
hence $\delta=\mp \frac{w^3}{6\varrho\sqrt{t}}\cdot \sqrt{t}\cdot \frac{\sqrt{t}}{2\varrho}$. Moreover,
$\beta,\gamma=t\mp \frac{2\varrho}{\sqrt{t}} + O(\varrho^2)$. The two roots straddle $t$; since $\alpha<\frac t3<\beta<t<\gamma$, we assign 
$$
\beta=t-\frac{2\varrho}{\sqrt{t}}+O(\varrho^2),\qquad \gamma=t+\frac{2\varrho}{\sqrt{t}}+O(\varrho^2).
$$
In particular, $\beta\to t,\gamma\to t,\gamma-\alpha\to t,\beta(\gamma-\alpha)\to t^2$, and $\gamma-\beta=\frac{4\varrho}{\sqrt{t}}+O(\varrho^2)$.
\end{itemize}
Using the identity $1-\kappa^2=\frac{\alpha(\gamma-\beta)}{\beta(\gamma-\alpha)}$ and the asymptotics above:
$$
1-\kappa^2= \frac{\Pa{\frac{4\varrho^2}{t^2}+O(\varrho^4)}\Pa{\frac{4\varrho}{\sqrt{t}}+O(\varrho^2)}}{t^2+O(\rho)} = \frac{16\varrho^3}{t^{\frac92}}(1+O(\varrho)).
$$
Hence 
$$
1-\kappa^2=\frac{16\varrho^3}{t^{\frac92}}(1+O(\varrho)),\quad \kappa\to 1^-.
$$
We need the standard asymptotic expansion of $\cE_1(\kappa)$ as $\kappa\to 1^-$. Write $\kappa'=\sqrt{1-\kappa^2}$. The complete elliptical integral of the first kind has the expansion
$$
\cE_1(\kappa) = \log\frac4{\kappa'} + O({\kappa'}^2\log\kappa').
$$
Therefore $\cE_1(\kappa)=\log\frac4{\sqrt{1-\kappa^2}}+o(1)$. Furthermore, $\sqrt{1-\kappa^2}=\frac{4\varrho^\frac32}{t^{\frac94}},(1+O(\varrho))$, so 
$$
\log\frac4{\sqrt{1-\kappa^2}} = \frac94\log t-\frac32\log\varrho+o(1). 
$$
Since $t\to1,\frac94\log t\to0$, and we obtain $\cE_1(\kappa)=-\frac32\log\varrho+o(1)$. Now the prefactor $\frac4{\pi\sqrt{\beta(\gamma-\alpha)}}\to \frac4{\pi t}\to\frac4\pi$. Therefore
$$
f^{(2)}_3(z)=\frac4{\pi\sqrt{\beta(\gamma-\alpha)}}\cE_1(\kappa) = \frac4{\pi t}\Pa{-\frac32\log\varrho}+o(1) = -\frac6{\pi t}\log\varrho+o(1).
$$
Since $t\to1$, $f^{(2)}_3(z)=\frac6\pi\log\frac1{\abs{z}}+o(1)$. Next, we consider the boundary behaviour. The boundary $\partial\cB_3$ consists of 
\begin{itemize}
\item the curved arc $\Gamma$: $27\abs{z}^2=(1+2\Re(z))^3$, on which $54\varrho^2=2t^3$, with $-\frac18<\Re(z)<1$;
\item the cusp points $z=-\frac18$ and $z=1$. 
\end{itemize}
We treat the curved arc first, then indicate why the two cusps are excluded. 

On $\Gamma$, we have $\varrho^2=\frac{t^3}{27}$. Then $C_z(u)=u(u-t)^2-4\frac{t^3}{27}$. We claim that $C_z$ has a double root at $u=\frac t3$. Indeed, evaluate $C_z(\frac t3)=0$, and the derivative $C'_z(u)=(u-t)(3u-t)$, so $C'_z(\frac t3)=0$. Thus $u=\frac t3$ is a double root. Since $C_z$ is cubic with leading coefficient $1$, we may write $C_z(u)=\Pa{u-\frac t3}^2(u-\gamma_b)$ for some third root $\gamma_b$. Comparing the coefficient of $u^2$ (which is $-2t$ in $u(u-t)^2=u^3-2t u^2+t^2u$), we get $-2\frac t3-\gamma_b=-2t$, implying that $\gamma_b=2t-\frac{2t}3=\frac{4t}3$. So on the boundary, $\alpha=\beta=\frac t3$, and $\gamma=\frac{4t}3$. This is the limiting case $\alpha\to\beta$: the interval $(\alpha,\beta)$ on which $C_z>0$ shrinks to a point as $z$ approaches the boundary from inside.

With $\alpha=\beta=\frac t3$ and $\gamma=\frac{4t}3$: $\kappa^2=0$. Hence $\cE_1(\kappa)\to\cE_1(0)=\frac\pi2$. We have $\beta=\frac t3$ and $\gamma-\alpha=t$, so $\sqrt{\beta(\gamma-\alpha)}=\frac{t}{\sqrt{3}}$. Therefore $\frac4{\pi\sqrt{\beta(\gamma-\alpha)}}\to\frac4\pi\cdot\frac{\sqrt{3}}t=\frac{4\sqrt{3}}{\pi t}$. Combining the above steps, we get that
$$
f^{(2)}_3(z)\to \frac{4\sqrt{3}}{\pi t}\cdot\frac\pi2 = \frac{2\sqrt{3}}t=\frac{2\sqrt{3}}{1+2\Re(z)}.
$$
This proves the desired result for boundary points on the curved arc $\Gamma$ with $-\frac18<\Re(z)<1$.
\end{proof}

The convergence is uniform for $z$ approaching a fixed boundary point $z_b\in\Gamma\backslash\set{-\frac18,1}$ from the interior, because the roots $\alpha,\beta,\gamma$ depend continuously on $z$ and $\alpha-\beta\to0$ continuously; the elliptic integral $\cE_1(\kappa)$ is continuous at $\kappa=0$.

The density is therefore bounded away from zero on the boundary, unlike each conditional density, which
blows up at the boundary of its ellipse; averaging over $u$ removes the inverse-square-root singularity. The
only singularity is the integrable logarithm at the origin, which carries no atom. For $d\geqslant3$,
Proposition~\ref{prop:conditional} and $U\sim\Beta(1,d-1)$ give the two-dimensional representation
$$
f_3^{(d)}(z)=(d-1)^2(d-2)\int_0^1\dif u(1-u)^{d-2}\int_0^1\tfrac{(1-t)^{d-3}}{t} \cF_2\Pa{\tfrac zt \big|u}\dif t,
$$
with $\cF_2(\cdot|u)$ the conditional qubit density in the proof of Theorem~\ref{th:b3density}.

\subsection{Elementary marginals}

The marginals of $\Re(\Delta_3)$ and $\Im(\Delta_3)$ do not require the elliptic integral: it is simpler to take
marginals of the conditional laws first.

\begin{thrm}\label{th:marginals}
For $d=2$, $(X,Y)=(\Re(\Delta_3),\Im(\Delta_3))$ have the marginal densities
$$
f_X(x)=\begin{cases}2\sqrt{1+8x}, & -\frac18\leqslant x\leqslant0,\\ 3-\sqrt{1+8x}, & 0\leqslant x\leqslant1,\end{cases}
\qquad\qquad
f_Y(y)=2\arccos(4|y|),\quad |y|\leqslant\tfrac14.
$$
In particular $\bbE[X]=\frac14$, $\bbE[X^2]=\frac18$, $\bbE[Y^2]=\frac1{72}$, $\bbE[|Y|]=\frac\pi{32}$ and
$\Pr\Set{X<0}=\frac16$.
\end{thrm}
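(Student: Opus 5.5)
The plan is to marginalize the conditional laws of Proposition~\ref{prop:conditional} before averaging over $U$, rather than integrating the elliptic-integral density of Theorem~\ref{th:b3density}. For $d=2$, condition on $U=u=|\Inner{\psi_1}{\psi_2}|^2$, which is $\op{Uniform}(0,1)$. By Proposition~\ref{prop:conditional}, $\Delta_3$ is then the numerical shadow of $\bsM_2$ on the ellipse $\cE_u$ of \eqref{eq:Eu}. This ellipse has centre $\frac u2$, its axes are parallel to the real and imaginary axes (since $\Delta_2=u$ is real), and its semi-axes are $a=\frac{\sqrt u}2$ and $b=\frac12\sqrt{u(1-u)}$.

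\textbf{Key observation.} The $2\times2$ shadow is an affine image of the projection of the uniform Bloch sphere onto a plane. A further orthogonal projection onto a line is therefore the projection of the uniform sphere onto a diameter, which is uniform by Archimedes' theorem. Consequently, conditionally on $U=u$, $X$ is uniform on $[\frac{u-\sqrt u}2,\frac{u+\sqrt u}2]$ with density $u^{-1/2}$, and $Y$ is uniform on $[-b,b]$ with density $\frac1{2b}$. The same conclusion can also be read off directly from the parametrization $\zeta=\frac t2+\frac t2\cos\theta+\frac s2\sin\theta\,e^{\mathrm i\varphi}$ used in the proof of Proposition~\ref{prop:conditional}.

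\textbf{Marginal of $X$.} We have $f_X(x)=\int_0^1u^{-1/2}\chi_{\set{|2x-u|\leqslant\sqrt u}}\dif u$. Substituting $u=s^2$ turns this into $2\,|\set{s\in[0,1]:s^2-s\leqslant2x\leqslant s^2+s}|$.
\begin{itemize}
\item For $x\geqslant0$, only the constraint $s\geqslant\frac{-1+\sqrt{1+8x}}2$ binds, because the other endpoint $\frac{1+\sqrt{1+8x}}2$ is at least $1$. This gives $f_X(x)=2\bigl(1-\frac{\sqrt{1+8x}-1}2\bigr)=3-\sqrt{1+8x}$.
\item For $-\frac18\leqslant x<0$, the allowed set is $s\in[\frac{1-\sqrt{1+8x}}2,\frac{1+\sqrt{1+8x}}2]$. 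This gives $f_X(x)=2\sqrt{1+8x}$.
\end{itemize}

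\textbf{Marginal of $Y$.} We have $f_Y(y)=\int_{\set{u(1-u)\geqslant4y^2}}\frac{\dif u}{\sqrt{u(1-u)}}$. Substitute $u=\sin^2\phi$, so that $\frac{\dif u}{\sqrt{u(1-u)}}=2\dif\phi$ and the constraint becomes $\sin2\phi\geqslant4|y|$. The admissible $\phi$-interval has length $\frac\pi2-\arcsin(4|y|)$, so $f_Y(y)=2\arccos(4|y|)$ for $|y|\leqslant\frac14$.

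\textbf{Moments and the negative-part probability.} These come from the conditional uniform laws followed by averaging over $u$:
\begin{itemize}
\item $\bbE[X\mid u]=\frac u2$ gives $\bbE[X]=\frac14$.
\item $\bbE[X^2\mid u]=\frac{u^2}4+\frac u{12}$ gives $\bbE[X^2]=\frac1{12}+\frac1{24}=\frac18$.
\item $\bbE[Y^2\mid u]=\frac{b^2}3=\frac{u(1-u)}{12}$ gives $\bbE[Y^2]=\frac1{72}$.
\item $\bbE[|Y|\mid u]=\frac b2$ together with $\int_0^1\sqrt{u(1-u)}\dif u=\frac\pi8$ gives $\bbE[|Y|]=\frac\pi{32}$.
\item $\Pr\set{X<0}=\int_{-1/8}^0 2\sqrt{1+8x}\dif x=\frac16$.
\end{itemize}

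The main obstacle is justifying that the one-dimensional marginals of the elliptical shadow are exactly uniform, with the stated intervals. I would settle this by writing the conditional shadow as $\Delta_3=\frac u2+\frac{\sqrt u}2\xi_1+\mathrm i\,b\,\xi_2$, where $(\xi_1,\xi_2)$ is the planar projection of a uniform point of $S^2$ as in Lemma~\ref{lem:archimedes}(i) with $K=1$. Each coordinate $\xi_j$ is then uniform on $[-1,1]$. The remaining work is only the careful bookkeeping of the endpoint constraints against $u\in[0,1]$.
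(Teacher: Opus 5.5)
Your proposal is correct and follows the paper's proof. You condition on $U=u$, use Archimedes' theorem to make $X$ and $Y$ conditionally uniform on intervals of half-lengths $\frac{\sqrt u}{2}$ and $\frac12\sqrt{u(1-u)}$, and then average over $u$ with the substitutions $u=s^2$ and $u=\sin^2\phi$. The paper justifies the conditional uniformity through the eigenvalues $\frac12(u\pm\sqrt u)$ of the Hermitian part of $\bsM_2$ rather than your affine projected-sphere representation, but this is the same fact stated differently.
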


\begin{proof}
Conditionally on $U=u$, $X$ is the expectation value of the Hermitian part $\frac12(\bsM_2+\bsM_2^\dagger)$ in a Haar
qubit. Its eigenvalues are $\lambda_\pm(u)=\frac12(u\pm\sqrt u)$, so by Archimedes' theorem $X$ is uniform on
$[\lambda_-(u),\lambda_+(u)]$, an interval of length $\sqrt u$. Likewise $Y$ is uniform on
$[-\frac12\sqrt{u(1-u)},\frac12\sqrt{u(1-u)}]$. With $U$ uniform and $s=\sqrt u$,
$f_X(x)=2\,\abs{\set{s\in[0,1]:s^2-s\leqslant2x\leqslant s^2+s}}$, which evaluates to the stated piecewise formula.
With $u=\sin^2\theta$, $f_Y(y)=2\,\abs{\set{\theta\in[0,\frac\pi2]:\sin2\theta>4|y|}}=2\arccos(4|y|)$. The
moments are elementary integrals.
\end{proof}

Both marginals peak at $0$, but $X$ has mean $\frac14$ and variance $\frac1{16}$ while $Y$ has mean $0$ and variance
$\frac1{72}$: the law of $\Delta_3$ is concentrated near the real axis and skewed to the right
(Figure~\ref{fig:bargmann}).

\begin{figure}[t]
\centering
\includegraphics{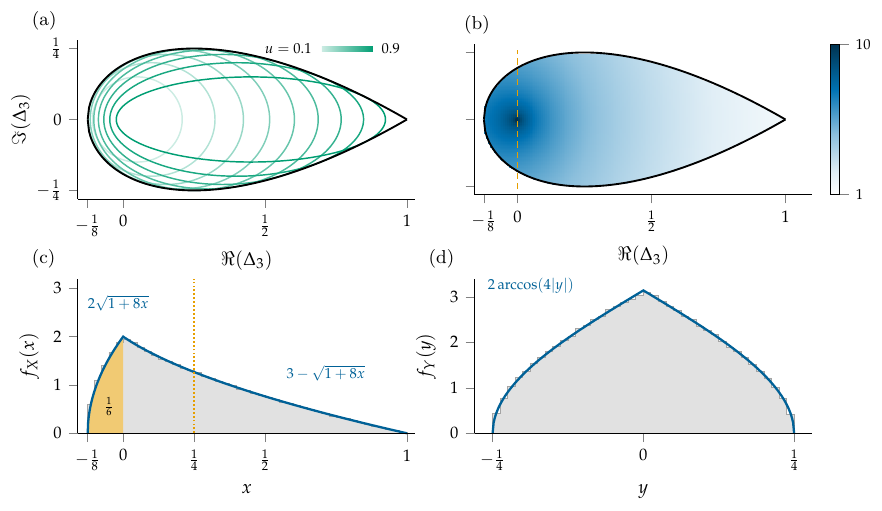}
\caption{The Bargmann invariant $\Delta_3$ of three Haar-random qubit states. (a) Conditional supports $E_u$ for
$u=|\Inner{\psi_1}{\psi_2}|^2=0.1,0.2,\ldots,0.9$ (light to dark green); their union is the support $\cB_3$ (black),
Proposition~\ref{prop:support}. (b) The exact density of Theorem~\ref{th:b3density} on a logarithmic colour
scale, with the same axes as (a). The density is largest near the origin, where it diverges like
$\frac6\pi\log\frac1{|z|}$, and stays positive up to the boundary. The dashed line separates
$\Re(\Delta_3)<0$. (c), (d) Marginals of Theorem~\ref{th:marginals} against $4\times10^5$ samples; the orange area in
(c) is $\Pr\Set{\Re(\Delta_3)<0}=\frac16$ and the dotted line is the mean $\frac14$.}\label{fig:bargmann}
\end{figure}

\subsection{Negative real part in every dimension}

If $\rho_1,\rho_2,\rho_3$ are diagonal in a common basis, $\Tr{\rho_1\rho_2\rho_3}$ is the probability that
independent measurements of the three states in that basis agree, hence a number in $[0,1]$. A Bargmann invariant
with negative real part therefore certifies basis-independent coherence of the triple
\cite{Galvao2020,Oszmaniec2024,Wagner2024}, and it does so from the real part alone. For Haar-random pure states
the probability of this event has a closed form in every dimension.

\begin{thrm}\label{th:negative}
For every $d\geqslant2$,
\begin{eqnarray}\label{eq:negative}
{\Pr}_{d} \Set{\Re(\Delta_3)<0}=\frac12-\frac{\sqrt\pi\,\Gamma(d)}{4\,\Gamma(d+\frac12)}
=\frac12-\frac{\sqrt\pi}{4\sqrt d}+O(d^{-\frac32}),
\end{eqnarray}
whose limit is $\frac12$ when $d$ approches infinity. For $d=2,3,4$ the values are $\frac16,\frac7{30},\frac{19}{70}$.
\end{thrm}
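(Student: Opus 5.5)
The plan is to condition on $\psi_1,\psi_2$, as in the proof of Theorem~\ref{th:marginals}, and reduce the event to a statement about a uniform variable. By \eqref{eq:shadowrep}, $\Delta_3=\Innerm{\psi_3}{\bsM_2}{\psi_3}$ with $\bsM_2=C_2\out{\psi_1}{\psi_2}$. Hence $\Re(\Delta_3)=\Innerm{\psi_3}{\bsH}{\psi_3}$, where $\bsH=\frac12(\bsM_2+\bsM_2^\dagger)$ is Hermitian, of rank at most two, and supported on $\spn\set{\psi_1,\psi_2}$. Its nonzero eigenvalues are $\lambda_\pm(u)=\frac12(u\pm\sqrt u)$ with $u=|\Inner{\psi_1}{\psi_2}|^2$, exactly as computed in the proof of Theorem~\ref{th:marginals}. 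This step is independent of $d$.

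Next, work in an eigenbasis of $\bsH$, completed to a basis of $\bbC^d$. By Lemma~\ref{lem:dirichlet} with $K=1$, the weights $(w_+,w_-,w_{\mathrm{rest}})$ of $\psi_3$ on the two eigenvectors and on the orthogonal complement are $\op{Dirichlet}(1,1,d-2)$. Therefore $\Re(\Delta_3)=\lambda_+w_++\lambda_-w_-$. Write $w_++w_-=T$ and $w_+=TV$. By the Gamma representation in the proof of Lemma~\ref{lem:dirichlet}, equivalently Lemma~\ref{lem:archimedes}(ii), $V$ is $\op{Uniform}(0,1)$ and independent of $T$. Because $T>0$ almost surely, the sign of $\Re(\Delta_3)$ is the sign of $\lambda_-+(\lambda_+-\lambda_-)V$. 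So $\Re(\Delta_3)<0$ exactly when $V<-\lambda_-/(\lambda_+-\lambda_-)=\frac{1-\sqrt u}{2}$. This gives
$$
\Pr\Set{\Re(\Delta_3)<0\,\big|\,U=u}=\frac{1-\sqrt u}{2},
$$
for every $d$. The point is that the conditional probability does not depend on $d$; the dimension enters only through the law of $U$. For $d=2$ this is consistent with the uniform conditional law in Theorem~\ref{th:marginals}.

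It remains to average over $U=|\Inner{\psi_1}{\psi_2}|^2\sim\Beta(1,d-1)$, which is Lemma~\ref{lem:dirichlet} with $K=1$ and aggregation. The needed moment is
$$
\bbE[\sqrt U]=\frac{\rB(\frac32,d-1)}{\rB(1,d-1)}=\frac{\Gamma(\frac32)\Gamma(d)}{\Gamma(d+\frac12)}=\frac{\sqrt\pi\,\Gamma(d)}{2\Gamma(d+\frac12)},
$$
and substituting it into $\frac12(1-\bbE[\sqrt U])$ gives \eqref{eq:negative}. The listed values follow by evaluating at $d=2,3,4$; for example, $d=2$ gives $\frac12-\frac13=\frac16$, matching Theorem~\ref{th:marginals}. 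The expansion $\frac{\Gamma(d)}{\Gamma(d+\frac12)}=d^{-1/2}(1+O(d^{-1}))$, from the standard Gamma-ratio asymptotics, gives the $O(d^{-3/2})$ remainder and the limit $\frac12$.

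The step I expect to need the most care is the independence and uniformity of $V$ given $T$, together with handling the null sets $u\in\set{0,1}$ and $T=0$. I would settle this directly from the Gamma representation: $(w_+,w_-)$ is proportional to independent $\op{Gamma}(1,1)$ variables, so their ratio is independent of their sum and uniform on $(0,1)$.
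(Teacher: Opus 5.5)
Your proposal is correct and follows essentially the same route as the paper. You condition on $\psi_1,\psi_2$, split off the positive weight $T$ on $\spn\set{\psi_1,\psi_2}$, use that the real part of the normalized qubit shadow is uniform on $[\lambda_-(u),\lambda_+(u)]$ to get the conditional probability $\frac12(1-\sqrt u)$, and average over $U\sim\Beta(1,d-1)$. Your Dirichlet-weight derivation of the uniform variable $V$ is just a coordinate version of the paper's appeal to Archimedes' theorem through Proposition~\ref{prop:conditional}; for $d=2$ there are no remaining coordinates, so take $T\equiv1$.
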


\begin{proof}
Condition on $\psi_1,\psi_2$ with $U=u\in(0,1)$, where $U=\abs{\Inner{\psi_1}{\psi_2}}^2$. By Proposition~\ref{prop:conditional}, $\Delta_3=T\Innerm{\phi}{\bsM_2}{\phi}$
with $T>0$ almost surely ($T\equiv1$ for $d=2$), so $\Re(\Delta_3)<0$ if and only if $\Re(\Innerm{\phi}{\bsM_2}{\phi})<0$. As in the
proof of Theorem~\ref{th:marginals}, $\Re(\Innerm{\phi}{\bsM_2}{\phi})$ is uniform on $[\lambda_-(u),\lambda_+(u)]=\Br{\frac{u-\sqrt{u}}2,\frac{u+\sqrt{u}}2}$, so the conditional
probability is 
\begin{eqnarray*}
{\Pr}_d\Set{\Re(\zeta)<0|U=u} = \frac{(0-\lambda_-(u))_+}{\sqrt{u}} =  \frac12(1-\sqrt u). 
\end{eqnarray*}
Finally
\begin{eqnarray*}
{\Pr}_d\Set{\Re(\zeta)<0} =\bbE[\sqrt U]=\int^1_0 \frac12(1-\sqrt u) (d-1)(1-u)^{d-2}\dif u=\frac12-\frac{\sqrt\pi\,\Gamma(d)}{4\,\Gamma(d+\frac12)}.
\end{eqnarray*}
for $U\sim\Beta(1,d-1)$. This is the exact formula \eqref{eq:negative}. The final asymptotic can be obtained by the standard asymptotic: $\frac{\Gamma(d)}{\Gamma(d+\frac12)}=d^{-\frac12}(1+O(d^{-1}))$.
\end{proof}

The probability increases with $d$ towards $\frac12$ (Figure~\ref{fig:cycles}(a)). Since $\Im(\Delta_3)\neq0$ almost
surely, the triple is coherent with probability one; \eqref{eq:negative} quantifies how often a measurement of the
real part alone detects this.

\subsection{Moments}

\begin{thrm}\label{th:moments}
For integers $p,q\geqslant0$ with $s=p+q$, and $(a)_j=a(a+1)\cdots(a+j-1)$,
\begin{eqnarray}\label{eq:mixed}
\bbE\Br{\Delta_3^{\,p}\,\overline{\Delta_3}^{\,q}}=\frac1{(d)_s^2}\sum_{j=0}^{\min(p,q)}\binom pj\binom qj\,((s-j)!)^2\,j!\,(d-1)_j.
\end{eqnarray}
\end{thrm}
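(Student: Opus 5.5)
We compute the mixed moment by writing $\Delta_3^p\overline{\Delta_3}^q$ as a single trace over a tensor power and averaging one state at a time. Since $\Delta_3=\Inner{\psi_1}{\psi_2}\Inner{\psi_2}{\psi_3}\Inner{\psi_3}{\psi_1}$, we have $\overline{\Delta_3}=\Inner{\psi_2}{\psi_1}\Inner{\psi_3}{\psi_2}\Inner{\psi_1}{\psi_3}=\Tr{\psi_1\psi_3\psi_2}$. Hence
\[
\Delta_3^p\overline{\Delta_3}^q=\Tr{\Pa{\psi_1^{\ot s}\,\psi_2^{\ot s}\,\psi_3^{\ot s}}\,\bsP},
\]
where $\bsP$ is a permutation operator on $(\bbC^d)^{\ot s}$ that reorders the factors so that the first $p$ tensor slots carry the cycle $(1\to2\to3)$ and the last $q$ slots carry the reversed cycle. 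For a Haar vector, $\bbE[\psi^{\ot s}]=\Pi_{\mathrm{sym}}/\binom{d+s-1}{s}=\frac{s!}{(d)_s}\Pi_{\mathrm{sym}}$, where $\Pi_{\mathrm{sym}}=\frac1{s!}\sum_{\sigma\in S_s}\bsP_\sigma$.

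The cleanest route is to average over $\psi_3$ first and keep $\psi_1,\psi_2$ fixed. Conditionally on $\psi_1,\psi_2$,
\[
\Delta_3^p\overline{\Delta_3}^q=\Innerm{\psi_3}{\bsM}{\psi_3}^p\,\overline{\Innerm{\psi_3}{\bsM}{\psi_3}}^{\,q},\qquad \bsM=\bsM_2=C_2\out{\psi_1}{\psi_2}.
\]
Using $\bbE[\psi_3^{\ot s}]=\frac{1}{(d)_s}\sum_\sigma \bsP_\sigma$, the conditional moment is $\frac1{(d)_s}\sum_{\sigma\in S_s}\Tr{(\bsM^{\ot p}\ot\bsM^{\dagger\ot q})\bsP_\sigma}$. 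Each term factorizes over the cycles of $\sigma$ into traces of words in $\bsM$ and $\bsM^\dagger$. Because $\bsM$ has rank one, every such word collapses to a product of the scalars $\Tr{\bsM}=\Delta_2=C_2\Inner{\psi_2}{\psi_1}=U$, $\Tr{\bsM^\dagger}=U$, $\Tr{\bsM\bsM^\dagger}=|C_2|^2=U$, and $\Tr{\bsM^2}=U^2$. In particular, each cycle word is $U^{\#\text{letters}}$ times corrections equal to $U^{-1}$ each time an $\bsM$ is followed by an $\bsM^\dagger$ (or the reverse), because $\Inner{\psi_2}{\psi_2}=1$ replaces $|\Inner{\psi_1}{\psi_2}|^2$ there.

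The conditional moment is therefore a polynomial in $U$, and the final step averages it against $U\sim\Beta(1,d-1)$ using $\bbE[U^k]=\frac{k!}{(d)_k}$.

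The main obstacle is the combinatorics of organizing $\sum_\sigma$. We would classify permutations by $j$, the number of "mixed" matchings in which an $\bsM$-slot is paired with an $\bsM^\dagger$-slot. This yields the factors $\binom pj\binom qj$, the $j!$ from pairing the chosen slots, and $(d-1)_j$ from the resulting powers. We expect the target form to make $U$ appear with exponent $s-j$ times a factor involving $(d+s-j\ldots)$, so that after the $\Beta$ average $\frac{(s-j)!}{(d)_{s-j}}$ combines with $\frac1{(d)_s}$ and a residual $(d)_{s-j}(s-j)!$ from the remaining unmixed permutations of the $s-j$ slots.

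If this bookkeeping proves messy, the fallback is a symmetric route: average $\psi_1,\psi_2,\psi_3$ all via Weingarten-free symmetric projectors. The moment then equals $\frac{(s!)^3}{(d)_s^3}\Tr{\Pi_{\mathrm{sym}}\Pi_{\mathrm{sym}}\Pi_{\mathrm{sym}}\bsP}$-type expressions, which reduce to $\frac{1}{(d)_s^2}$ times a count on the symmetric subspace. Decomposing $\mathrm{Sym}^p\ot\mathrm{Sym}^q$ under $\rU(d)$ into irreducibles indexed by $j$ produces exactly $\binom pj\binom qj j!(d-1)_j((s-j)!)^2$. We would sanity-check the result against $p=1,q=0$ (giving $d^{-2}$) and against $\bbE|\Delta_3|^2$ for $d=2$ from Theorem~\ref{th:marginals}, $\frac18+\frac1{72}=\frac5{36}$, which the formula reproduces: $\frac1{36}(4+1)$.
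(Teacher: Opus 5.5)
Your reduction is sound up to the step you call the main obstacle. Given $\psi_1,\psi_2$, the conditional moment is $\frac1{(d)_s}\sum_{\sigma\in S_s}\Tr{(\bsM^{\ot p}\ot\bsM^{\dagger\ot q})\bsP_\sigma}$, and because $\bsM$ has rank one every cycle word is a power of $U$. But that step is the whole content of the theorem, and your prediction of how it comes out is wrong.

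\textbf{What the permutation sum actually gives.} A cycle of length $\ell$ containing $t$ switches $\bsM\to\bsM^\dagger$ (hence also $t$ switches back) contributes $U^{\ell-t}$. That is one factor $U^{-1}$ per switch $\bsM\to\bsM^\dagger$, not one for every switch in either direction. Consequently $\Tr{(\bsM^{\ot p}\ot\bsM^{\dagger\ot q})\bsP_\sigma}=U^{s-j(\sigma)}$, where $j(\sigma)$ is the number of $\bsM$-slots sent to $\bsM^\dagger$-slots, whatever the cycle type. Exactly $\binom pj\binom qj\,p!\,q!$ permutations have $j(\sigma)=j$. No power of $d$ arises anywhere in this sum, so it cannot supply $(d-1)_j$, and there is no residual $(d)_{s-j}(s-j)!$. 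After $\bbE[U^{s-j}]=\frac{(s-j)!}{(d)_{s-j}}$ your route yields
\[
\bbE\Br{\Delta_3^{\,p}\,\overline{\Delta_3}^{\,q}}=\frac{p!\,q!}{(d)_s}\sum_{j=0}^{\min(p,q)}\binom pj\binom qj\frac{(s-j)!}{(d)_{s-j}}.
\]
This equals the stated sum only in total, not term by term. For $p=q=1$, in units of $1/(d)_2^2$, your terms are $2$ and $d+1$, while the stated ones are $4$ and $d-1$. Matching the two needs an identity your proposal does not contain.

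\textbf{How to close the gap.} The missing identity is short.
Write $(d)_s/(d)_{s-j}=(d+s-j)_j$ and expand by Chu--Vandermonde for rising factorials:
\[
(s-j)!\,(d+s-j)_j=\sum_{i=0}^{j}\binom ji(d-1)_i\,(s-i)!.
\]
Interchange the sums and use $\sum_{j\geqslant i}\binom pj\binom qj\binom ji=\binom pi\binom{s-i}{q-i}$. Then $p!\,q!\binom pi\binom{s-i}{q-i}=\binom pi\binom qi\,i!\,(s-i)!$, which is exactly the stated summand. With this added, your argument is a valid alternative proof.

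\textbf{Comparison with the paper.} The paper never needs the resummation. It takes $\psi_1=\bse_1$ and $\psi_2=\sqrt U\bse_1+\sqrt{1-U}\bse_2$, so that $\Delta_3=UP+\sqrt{U(1-U)PQ}\,e^{\mathrm{i}\Theta}$. Here $U\sim\Beta(1,d-1)$, $(P,Q,1-P-Q)\sim\Dir(1,1,d-2)$, and the phase is uniform and independent. Binomial expansion and phase averaging give the sum at once: $j$ counts cross terms, $(d-1)_j$ comes from $\bbE[U^{s-j}(1-U)^j]$, and $j!$ from $\bbE[P^{s-j}Q^j]$. Your $j$ is a different statistic, which is why you need the resummation.

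\textbf{The opening display and the fallback.} Your opening display is false as written. On $(\bbC^d)^{\ot s}$ one has $\psi_1^{\ot s}\psi_2^{\ot s}\psi_3^{\ot s}=(\psi_1\psi_2\psi_3)^{\ot s}$. Pairing this with permutations only produces $\Tr{(\psi_1\psi_2\psi_3)^\ell}=\Delta_3^\ell$, never $\overline{\Delta_3}$. So the fallback built on it cannot rescue the argument. Its claim that decomposing $\mathrm{Sym}^p\ot\mathrm{Sym}^q$ produces exactly the summand is also unproved. Your sanity checks are correct but do not replace the missing computation.
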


\begin{proof}
Fix $\psi_1=\bse_1$ and $\psi_2=\sqrt U \bse_1+\sqrt{1-U}\bse_2$ (phases are irrelevant), and write $\psi_3=(z_1,\ldots,z_d)$.
Then $\Delta_3=U|z_1|^2+\sqrt{U(1-U)}\,\bar z_1z_2$, that is,
$$
\Delta_3=UP+\sqrt{U(1-U)}\sqrt{PQ}\,e^{\mathrm{i}\Theta},
$$
where $U\sim\Beta(1,d-1)$, $(P,Q,1-P-Q)\sim\Dir(1,1,d-2)$ (for $d=2$: $P$ uniform and $Q=1-P$), and $\Theta$ uniform
on $[0,2\pi)$, all independent. Expanding $\Delta_3^p\overline{\Delta_3}^q$ binomially, only terms with the same number $j$
of phase factors from each side survive averaging over $\Theta$, and
$$
\bbE\Br{U^{s-j}(1-U)^j}=\frac{(s-j)!\,(d-1)_j}{(d)_s},\qquad\bbE\Br{P^{s-j}Q^j}=\frac{(s-j)!\,j!}{(d)_s}.\qedhere
$$
\end{proof}

Since $\Delta_3$ is bounded, these moments determine its law. In particular
\begin{gather*}
\bbE[\Delta_3^2]=\frac4{d^2(d+1)^2},\qquad \bbE[|\Delta_3|^2]=\frac{d+3}{d^2(d+1)^2},\\
\bbE[(\Re(\Delta_3))^2]=\frac{d+7}{2d^2(d+1)^2},\qquad\bbE[(\Im(\Delta_3))^2]=\frac{d-1}{2d^2(d+1)^2}.
\end{gather*}
For cycles of any length two families of moments have closed forms.

\begin{prop}\label{prop:anym}
For $m\geqslant2$ and $k\geqslant1$, with $D_k\equiv\binom{d+k-1}k$,
$$
\bbE[\Delta_m^{\,k}]=D_k^{\,1-m},\qquad\bbE[|\Delta_m|^2]=\frac1{d^m}+\frac{d^2-1}{d^m(d+1)^m}.
$$
\end{prop}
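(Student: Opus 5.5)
For the first identity, write $\Delta_m^k$ as a single trace on the tensor power. Since $\Delta_m=\Tr{\psi_1\cdots\psi_m}$, we have
$$
\Delta_m^{\,k}=\Tr{\psi_1^{\ot k}\psi_2^{\ot k}\cdots\psi_m^{\ot k}}.
$$
By independence, the expectation factors through $\bbE[\psi_j^{\ot k}]$. For a Haar-random unit vector this expectation is the normalized projection onto the symmetric subspace, $\bbE[\psi^{\ot k}]=\frac{P_{\mathrm{sym}}^{(k)}}{D_k}$ with $D_k=\binom{d+k-1}{k}=\Tr{P_{\mathrm{sym}}^{(k)}}$. This is the standard consequence of Schur's lemma: $\psi^{\ot k}$ is supported on $\vee^k\bbC^d$, which is irreducible under $\bsU^{\ot k}$. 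Hence
$$
\bbE[\Delta_m^k]=D_k^{-m}\Tr{(P_{\mathrm{sym}}^{(k)})^m}=D_k^{-m}D_k=D_k^{1-m}.
$$
The case $m=2$ is consistent with this, since $\bbE|\inner{\psi_1}{\psi_2}|^{2k}=D_k^{-1}$.

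For the second moment, write $|\Delta_m|^2=\Delta_m\overline{\Delta_m}$ and note that $\overline{\Delta_m}=\Tr{\psi_m\cdots\psi_1}$. This gives
$$
|\Delta_m|^2=\Tr{(\psi_1\ot\psi_1)(\psi_2\ot\psi_2)\cdots(\psi_m\ot\psi_m)\,\Pi},
$$
where the second tensor factor is traced in reversed order. Rather than untangle that permutation, I would use an equivalent formulation. Put $\bsM=\psi_1\cdots\psi_m$; then $\Delta_m=\Tr{\bsM}$. Now average one projector at a time. In the form
$$
|\Delta_m|^2=\Tr{\Pa{(\psi_1\cdots\psi_m)\ot(\psi_1\cdots\psi_m)^{\t\text{-rev}}}}
$$
everything is expressed in $\psi_j\ot\overline{\psi_j}$, whose average is
$$
\bbE[\psi\ot\bar\psi]=\frac{\I\ot\I+d\,\Phi}{d(d+1)},\qquad \Phi=\proj{\Omega},\ \ket{\Omega}=\frac{1}{\sqrt d}\sum_i\ket{ii}.
$$
This follows from $\bbE[\psi\ot\psi]=\frac{\I+\bsF}{d(d+1)}$ after a partial transpose, which turns $\bsF$ into $d\Phi$. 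Then $|\Delta_m|^2=\Tr{\prod_j(\psi_j\ot\bar\psi_j)}$, because $\Tr{\bsX\ot\bar{\bsY}}=\Tr{\bsX}\overline{\Tr{\bsY}}$ and $\overline{\Tr{\psi_1\cdots\psi_m}}=\Tr{\bar\psi_1\cdots\bar\psi_m}$. So
$$
\bbE|\Delta_m|^2=\frac{\Tr{(\I\ot\I+d\Phi)^m}}{d^m(d+1)^m}.
$$
Since $\Phi$ is a rank-one projection, $(\I+d\Phi)^m$ has eigenvalue $(1+d)^m$ once and eigenvalue $1$ with multiplicity $d^2-1$. Its trace is therefore $(d+1)^m+d^2-1$, and dividing gives $\frac{1}{d^m}+\frac{d^2-1}{d^m(d+1)^m}$.

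The main obstacle is the bookkeeping that justifies $|\Delta_m|^2=\Tr{\prod_j(\psi_j\ot\bar\psi_j)}$ together with the partial-transpose form of $\bbE[\psi\ot\bar\psi]$. I would verify the latter directly from $\bbE[\psi_{ab}\psi_{cd}]$ entries. As sanity checks, $m=3$ should reproduce $\frac{d+3}{d^2(d+1)^2}$ from Theorem~\ref{th:moments}, and $m=2$ should give $\bbE|\inner{\psi_1}{\psi_2}|^4=\frac{2}{d(d+1)}$.
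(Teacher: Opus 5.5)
Your proof is correct. The first identity is argued exactly as in the paper: write $\Delta_m^k=\Tr{\psi_1^{\ot k}\cdots\psi_m^{\ot k}}$, note that each factor averages to $\Pi_{\mathrm{sym}}/D_k$, and take the trace of a power of a projection.

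For the second moment you take a genuinely different route from the paper.

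\textbf{The paper's route.} It conditions on $\psi_1,\ldots,\psi_{m-1}$ and applies the one-vector identity $\bbE_\psi\abs{\Innerm{\psi}{\bsM}{\psi}}^2=\frac{\Tr{\bsM\bsM^\dagger}+\abs{\Tr{\bsM}}^2}{d(d+1)}$ with $\bsM=\bsM_{m-1}$. This gives the linear recursion $s_m=\frac{d^{2-m}+s_{m-1}}{d(d+1)}$, $s_1=1$. The recursion needs the extra input $\bbE\abs{C_{m-1}}^2=d^{2-m}$, which comes from independence of consecutive overlaps. It also needs a final check that the closed form solves the recursion. In return, it stays inside the conditional numerical-shadow and Markov-chain picture of that section and uses nothing beyond $\bbE[\psi\ot\psi]$.

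\textbf{Your route.} You reuse the transfer-operator idea of the first part. You write $\abs{\Delta_m}^2=\Tr{\prod_j(\psi_j\ot\bar\psi_j)}$ and use the one-step average $\bbE[\psi\ot\bar\psi]=\frac{\I\ot\I+d\Phi}{d(d+1)}$, which you correctly identify as the partial transpose of $\frac{\I\ot\I+\bsF}{d(d+1)}$. The answer then follows from the spectrum of $\I\ot\I+d\Phi$. This gives the closed form in one step with no recursion, and it explains the formula's shape. The eigenvalue $(d+1)^m$ on $\ket{\Omega}$ produces the term $d^{-m}$, and the $d^2-1$ unit eigenvalues produce the second term. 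It also generalizes at once: $\bbE[\Delta_m^{\,p}\overline{\Delta_m}^{\,q}]=\Tr{\Pa{\bbE[\psi^{\ot p}\ot\bar\psi^{\ot q}]}^m}$, so every mixed moment, for every $m$, reduces to the spectrum of one fixed operator.

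\textbf{One cleanup.} Delete the intermediate displays with an unspecified permutation $\Pi$ and a ``transpose-reversed'' product; they are not well defined. The identity $\abs{\Delta_m}^2=\Tr{\prod_j(\psi_j\ot\bar\psi_j)}$ is already fully justified by two facts you state: $\Tr{\bsX\ot\bar{\bsY}}=\Tr{\bsX}\overline{\Tr{\bsY}}$ and $\overline{\psi_1\cdots\psi_m}=\bar\psi_1\cdots\bar\psi_m$. So the ``main obstacle'' you flag is in fact already resolved.
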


\begin{proof}
Note that
$$
\Delta_m^k=\Tr{\psi_1^{\ot k}\cdots \psi_m^{\ot k}}\quad \text{and}\quad \bbE[\psi^{\ot k}]=\tfrac{\Pi_{\mathrm{sym}}}{D_k},
$$
where $\Pi_{\mathrm{sym}}$ projects onto
the symmetric subspace. For the second moment, condition on the first $m-1$ states and use
$$
\bbE_\psi[\abs{\Innerm{\psi}{\bsM}{\psi}}^2]=\frac{\Tr{\bsM\bsM^\dagger}+|\Tr{\bsM}|^2}{d(d+1)}
$$ 
with $\bsM=\bsM_{m-1}$: since
$\Tr{\bsM_{m-1}\bsM_{m-1}^\dagger}=|C_{m-1}|^2$ has mean $d^{2-m}$, $s_m=\bbE[|\Delta_m|^2]$ satisfies
$s_m=\frac{d^{2-m}+s_{m-1}}{d(d+1)}$ with $s_1=1$.
\end{proof}

\begin{table}[t]
\centering\small
\renewcommand{\arraystretch}{1.3}
\begin{tabular}{c|cc|cc|cc|cc}
\hline
 & \multicolumn{2}{c|}{$\bbE[\Re(\Delta_3)]$} & \multicolumn{2}{c|}{$\bbE[|\Delta_3|^2]$} & \multicolumn{2}{c|}{$\bbE[\Im(\Delta_3)^2]$}
 & \multicolumn{2}{c}{$\Pr\Set{\Re(\Delta_3)<0}$}\\
$d$ & exact & sampled & exact & sampled & exact & sampled & exact & sampled\\
\hline
2 & $\frac{1}{4}$ & \begin{tabular}{@{}c@{}}0.24995(56)\\ 0.25036(56)\end{tabular} & $\frac{5}{36}$ & \begin{tabular}{@{}c@{}}0.13890(40)\\ 0.13917(40)\end{tabular} & $\frac{1}{72}$ & \begin{tabular}{@{}c@{}}0.013892(33)\\ 0.013848(33)\end{tabular} & $\frac16$ & \begin{tabular}{@{}c@{}}0.16708(83)\\ 0.16675(83)\end{tabular} \\
3 & $\frac{1}{9}$ & \begin{tabular}{@{}c@{}}0.11110(33)\\ 0.11153(33)\end{tabular} & $\frac{1}{24}$ & \begin{tabular}{@{}c@{}}0.04169(16)\\ 0.04179(16)\end{tabular} & $\frac{1}{144}$ & \begin{tabular}{@{}c@{}}0.006931(21)\\ 0.006959(21)\end{tabular} & $\frac7{30}$ & \begin{tabular}{@{}c@{}}0.23380(95)\\ 0.23246(94)\end{tabular} \\
5 & $\frac{1}{25}$ & \begin{tabular}{@{}c@{}}0.04004(16)\\ 0.04004(16)\end{tabular} & $\frac{2}{225}$ & \begin{tabular}{@{}c@{}}0.008930(44)\\ 0.008917(44)\end{tabular} & $\frac{1}{450}$ & \begin{tabular}{@{}c@{}}0.0022288(90)\\ 0.0022147(90)\end{tabular} & $\frac{187}{630}$ & \begin{tabular}{@{}c@{}}0.2977(10)\\ 0.2969(10)\end{tabular} \\
\hline
\end{tabular}
\caption{Exact values from Theorems~\ref{th:negative} and~\ref{th:moments} and estimates from $2\times10^5$ samples
(standard errors in parentheses). Upper entry of each sampled column: direct sampling of three Haar vectors;
lower entry: the Markov chain of Theorem~\ref{th:markov}.}\label{tab:moments}
\end{table}

\subsection{A three-parameter Markov chain}

The conditional representation \eqref{eq:shadowrep} involves the matrix $\bsM_k$, but its law given the past depends
on only three real numbers.

\begin{thrm}\label{th:markov}
Let $S_k=|C_k|^2$ and $Z_k=\Delta_k$, with $S_1=Z_1=1$. Then $|Z_k|^2\leqslant S_k$, and $(S_k,Z_k)_{k\geqslant1}$ is a Markov
chain with transitions
\begin{eqnarray}\label{eq:markov}
S_{k+1}=S_kU_k,\qquad Z_{k+1}=Z_kU_k+\sqrt{S_k-|Z_k|^2}\,\sqrt{U_kV_k}\,e^{\mathrm{i}\Theta_k},
\end{eqnarray}
where $(U_k,V_k,1-U_k-V_k)\sim\op{Dirichlet}(1,1,d-2)$ for $d\geqslant 3$, while for $d=2$, $U_k$ uniform on $(0,1)$ and $V_k=1-U_k$. The innovation pairs $(U_k,V_k)$ are independent across $k$ and independent of the past; $\Theta_k$ is independent of the pair and of the past.
\end{thrm}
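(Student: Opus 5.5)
The plan is to build, at each step, an orthonormal frame adapted to the current pair of vectors $\psi_1,\psi_k$, and to write $\Delta_{k+1}$ and $C_{k+1}$ in the coordinates of $\psi_{k+1}$ in that frame. By \eqref{eq:shadowrep}, $\bsM_k=C_k\out{\psi_1}{\psi_k}$, so $C_{k+1}=C_k\Inner{\psi_k}{\psi_{k+1}}$ and $\Delta_{k+1}=C_k\Inner{\psi_{k+1}}{\psi_1}\Inner{\psi_k}{\psi_{k+1}}$. Choose $\bse_1=\psi_k$ (up to phase) and $\bse_2$ a unit vector in $\spn\set{\psi_1,\psi_k}$ orthogonal to $\psi_k$. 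Then $\psi_1=\Inner{\psi_k}{\psi_1}\bse_1+\omega\bse_2$ with $|\omega|^2=1-|\Inner{\psi_k}{\psi_1}|^2$. The bound $|Z_k|^2\leqslant S_k$ is immediate from $|\Delta_k|=|C_k||\Inner{\psi_k}{\psi_1}|\leqslant|C_k|$.

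Next I write $\psi_{k+1}=\sum_j z_j\bse_j$. This gives $S_{k+1}=S_k|z_1|^2$ and
$$\Delta_{k+1}=C_k\,z_1\Pa{\overline{z_1}\Inner{\psi_k}{\psi_1}+\overline{z_2}\,\omega}=Z_k|z_1|^2+C_k\omega\,z_1\overline{z_2}.$$
Moreover $|C_k\omega|^2=S_k(1-|\Inner{\psi_k}{\psi_1}|^2)=S_k-|Z_k|^2$. Set $U_k=|z_1|^2$ and $V_k=|z_2|^2$. Since $\psi_{k+1}$ is Haar and independent of $\psi_1,\dots,\psi_k$, the vector of squared moduli $(|z_1|^2,\ldots,|z_d|^2)$ is $\Dir(1,\ldots,1)$. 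By the aggregation property in Lemma~\ref{lem:dirichlet} with $K=1$, this gives $(U_k,V_k,1-U_k-V_k)\sim\Dir(1,1,d-2)$ for $d\geqslant3$, and $U_k$ uniform with $V_k=1-U_k$ for $d=2$. The phases of the $z_j$ are i.i.d. uniform and independent of the moduli.

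The step I expect to be the main obstacle is the phase. I have to show that $\arg(C_k\omega)+\arg z_1-\arg z_2$ is uniform and independent of $(U_k,V_k)$ and of the past. This holds because $\arg z_1-\arg z_2$ is uniform and independent of everything measurable with respect to $\psi_1,\ldots,\psi_k$ and of the moduli, so adding the past-measurable angle $\arg(C_k\omega)$ keeps it uniform and independent. I will take $\Theta_k$ to be this angle. The frame choices, such as the phase of $\bse_2$, are measurable functions of the past, so the representation does not depend on them. When $\omega=0$ (degenerate, probability zero) the second term vanishes anyway.

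It remains to get the Markov property. The new innovation $(U_k,V_k,\Theta_k)$ is independent of $\sigma(\psi_1,\ldots,\psi_k)$, and $(S_{k+1},Z_{k+1})$ is a fixed function of $(S_k,Z_k)$ and the innovation, as in \eqref{eq:markov}. Therefore the conditional law of $(S_{k+1},Z_{k+1})$ given the entire past depends only on $(S_k,Z_k)$. By induction, the innovations are independent across $k$. Finally I check the base case: $S_1=Z_1=1$ gives $S_2=U_1$ and $Z_2=U_1$, matching $|\Inner{\psi_1}{\psi_2}|^2$.
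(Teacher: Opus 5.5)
Your proof is correct and takes essentially the same route as the paper: the adapted frame $\bse_1=\psi_k$, $\bse_2\in\spn\set{\psi_1,\psi_k}$, the coordinate expansion of $\psi_{k+1}$, and Dirichlet aggregation for the squared moduli. The only difference is cosmetic. The paper fixes the phase of $\bse_2$ so that $C_k\psi_1=Z_k\bse_1+\sqrt{S_k-|Z_k|^2}\,\bse_2$, whereas you absorb the past-measurable angle $\arg(C_k\omega)$ into $\Theta_k$; you also spell out the Markov and innovation-independence bookkeeping that the paper leaves implicit.
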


\begin{proof}
Choose an orthonormal basis with $\bse_1=\psi_k$ and $C_k\psi_1=Z_k\bse_1+\sqrt{S_k-|Z_k|^2}\bse_2$; this is possible because
$\Inner{\psi_k}{C_k\psi_1}=Z_k$ and $\norm{C_k\psi_1}^2=S_k$. Writing $\psi_{k+1}=\sum_iw_ie_i$, we have
$C_{k+1}=C_k\Inner{\psi_k}{\psi_{k+1}}=C_kw_1$ and
$$
Z_{k+1}=C_kw_1\Inner{\psi_{k+1}}{\psi_1}=w_1\Inner{\psi_{k+1}}{C_k\psi_1}=|w_1|^2Z_k+\sqrt{S_k-|Z_k|^2}\,w_1\bar w_2 .
$$
The vector $\psi_{k+1}$ is Haar and independent of $\psi_1,\ldots,\psi_k$, so $(|w_1|^2,|w_2|^2)$ and
$\arg w_1-\arg w_2$ have the stated law independently of the past.
\end{proof}
The chain gives an exact sampler for $\Delta_m$ with $O(m)$ scalar operations, independent of $d$; Table~\ref{tab:moments}
compares it with direct sampling. It also closes the moment hierarchy. Let
$T_k(a,p,q)=\bbE[S_k^aZ_k^p\overline{Z_k}^{\,q}]$. Expanding \eqref{eq:markov}, averaging the phase and expanding
$(S_k-|Z_k|^2)^j$ gives
\begin{multline}\label{eq:M2}
T_{k+1}(a,p,q)=\frac1{(d)_{a+p+q}}\sum_{j=0}^{\min(p,q)}\sum_{l=0}^j\binom pj\binom qj\binom jl(-1)^{j-l}\\
\times(a+p+q-j)!\,j!\;T_k(a+l,p-l,q-l),
\end{multline}
with $T_1\equiv1$. The recursion preserves the weight $2a+p+q$, so every moment of fixed order is computed from finitely
many rational numbers. It reproduces Theorem~\ref{th:moments} at $k=3$ and Proposition~\ref{prop:anym} for all $k$.

\subsection{Long cycles}

\begin{thrm}\label{th:longcycle}
Let $H_{n}=\sum_{k=1}^{n}\frac1k$ and $H^{(2)}_{n}=\sum_{k=1}^{n}\frac1{k^2}$. For $\Delta_m$ built from a single infinite
sequence of independent Haar-random vectors in $\bbC^d$,
$$
\frac1m\log|\Delta_m|\longrightarrow-\frac12H_{d-1}\quad\text{almost surely},\qquad
\frac{\log|\Delta_m|+\frac m2H_{d-1}}{\sqrt m}\Longrightarrow\mathcal N\Pa{0,\tfrac14H^{(2)}_{d-1}}.
$$
\end{thrm}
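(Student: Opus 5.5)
The plan is to split $\log|\Delta_m|$ into an open-chain part, which is a sum of i.i.d.\ terms, and a single closing factor, which is negligible on both the $m$ and $\sqrt m$ scales. By definition,
$$
\Delta_m=C_m\Inner{\psi_m}{\psi_1},\qquad |C_m|^2=S_m,
$$
in the notation of Theorem~\ref{th:markov}, so
$$
\log|\Delta_m|=\tfrac12\log S_m+\tfrac12\log\abs{\Inner{\psi_m}{\psi_1}}^2 .
$$
By Theorem~\ref{th:markov}, $S_m=\prod_{k=1}^{m-1}U_k$. The $U_k$ are i.i.d., since the innovations are independent across $k$ and of the past. Each $U_k$ is distributed as the first coordinate of $\op{Dirichlet}(1,1,d-2)$, i.e.\ $\Beta(1,d-1)$; for $d=2$ it is uniform, which is again $\Beta(1,1)$. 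Concretely, $U_k=|\Inner{\psi_k}{\psi_{k+1}}|^2$. Its conditional law given $\psi_1,\ldots,\psi_k$ does not depend on the past, by unitary invariance of the Haar measure. Hence
$$
\tfrac12\log S_m=\sum_{k=1}^{m-1}\tfrac12\log U_k
$$
is a random walk with i.i.d.\ increments.

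Next I would compute the two moments of $\log U$ for $U\sim\Beta(1,d-1)$ from the standard log-Beta formulas,
$$
\bbE[\log U]=\psi(1)-\psi(d)=-H_{d-1},\qquad \Var[\log U]=\psi'(1)-\psi'(d)=H^{(2)}_{d-1},
$$
where $\psi$ is the digamma function. Both moments are finite, since $\Pr\{\log U<-x\}\leqslant(d-1)e^{-x}$. The strong law of large numbers then gives
$$
\tfrac1m\cdot\tfrac12\log S_m\to-\tfrac12H_{d-1}\quad\text{almost surely}.
$$
The classical central limit theorem gives
$$
\frac{\tfrac12\log S_m+\tfrac{m-1}2H_{d-1}}{\sqrt{m-1}}\Longrightarrow\mathcal N\Pa{0,\tfrac14H^{(2)}_{d-1}}.
$$
Replacing $m-1$ by $m$ changes the centering by the constant $\tfrac12H_{d-1}$ and the scaling by a factor tending to $1$. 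Both corrections are harmless after division by $\sqrt m$.

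For the closing factor $W_m:=|\Inner{\psi_m}{\psi_1}|^2$, I will not need its joint law with the chain, only its marginal. Since $\psi_m$ is Haar and independent of $\psi_1$, we have $W_m\sim\Beta(1,d-1)$ for every $m\geqslant2$. For the law of large numbers, note that $W_m\leqslant1$, so the only issue is very negative values. For each $\varepsilon>0$,
$$
\Pr\{\log W_m<-\varepsilon m\}=1-(1-e^{-\varepsilon m})^{d-1}\leqslant(d-1)e^{-\varepsilon m},
$$
which is summable in $m$. Borel--Cantelli then gives $\tfrac1m\log W_m\to0$ almost surely. This holds even though the $W_m$ are dependent, because Borel--Cantelli needs only the marginal tail bounds. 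For the central limit theorem, $\log W_m$ has a fixed law, so $\log W_m/\sqrt m\to0$ in probability, and Slutsky's lemma transfers the Gaussian limit. Combining with $\Delta_m\neq0$ almost surely gives both statements.

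The main point requiring care is the exact independence and identical distribution of the consecutive overlaps $U_k$, together with the fact that the closing overlap, although dependent on $U_{m-1}$, enters only through its marginal. I would rely on Theorem~\ref{th:markov} for the former, and on the marginal-only Borel--Cantelli and Slutsky arguments above for the latter.
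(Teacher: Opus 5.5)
Your proposal is correct and follows essentially the same route as the paper. You split $\log|\Delta_m|$ into $\frac12\sum_{k<m}\log U_k$, whose terms are i.i.d.\ $\Beta(1,d-1)$ overlaps with $\bbE[\log U]=-H_{d-1}$ and $\Var[\log U]=H^{(2)}_{d-1}$, plus a closing term $\frac12\log W_m$. You then apply the strong law and the CLT to the sum, and dispose of $W_m$ using its marginal tail bound, Borel--Cantelli and Slutsky, exactly as the paper does (your write-up is slightly more careful in noting that Borel--Cantelli needs only the marginal tails and in handling the $m-1$ versus $m$ normalization).
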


\begin{proof}
Let $U_j=|\Inner{\psi_j}{\psi_{j+1}}|^2$ and $W_m=|\Inner{\psi_m}{\psi_1}|^2$, so that
$$
\log|\Delta_m|=\frac12\sum_{j<m}\log U_j+\frac12\log W_m.
$$
Conditionally on $\psi_1,\ldots,\psi_j$, $U_j\sim\Beta(1,d-1)$;
hence $U_1,U_2,\ldots$ are independent and identically distributed. From
$$
\bbE[U^s]=\frac{\Gamma(d)\Gamma(1+s)}{\Gamma(d+s)},
$$
we get $\bbE[\log U]=-H_{d-1}$ and $\Var[\log U]=H^{(2)}_{d-1}$. Each $W_m$ is
$\Beta(1,d-1)$, so 
$$
\Pr\Set{W_m<e^{-t}}\leqslant(d-1)e^{-t}
$$
by the Borel--Cantelli lemma, $\frac1m\log W_m\to0$ almost surely,
and $\frac{\log W_m}{\sqrt m}\to0$ in probability. The strong law of large numbers, the central limit theorem and Slutsky's
lemma complete the proof.
\end{proof}

Theorem~\ref{th:longcycle} and Proposition~\ref{prop:anym} give three different decay rates. The typical modulus
is $|\Delta_m|\approx e^{-\frac12mH_{d-1}}$, the mean $\bbE[\Delta_m]=d^{1-m}$ decays at rate $\log d$, and the root mean square decays
at rate $\frac12\log d$ by Proposition~\ref{prop:anym}. Since $\frac12\log d<\frac12H_{d-1}<\log d$ for every $d\geqslant2$, the root
mean square is dominated by atypically large values, while the mean is reduced further by cancellation of phases. As
$d\to\infty$, 
$$
\frac{H_{d-1}-\log d}2\to\frac{\gamma_{\rE}}2,
$$
where $\gamma_{\rE}$ is Euler's constant.

\begin{figure}[t]
\centering
\includegraphics[width=\textwidth]{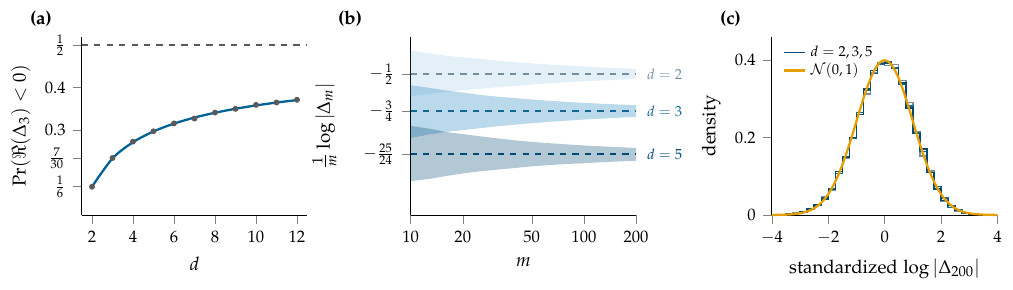}
\caption{Dependence on dimension and cycle length. (a) $\Pr\Set{\Re(\Delta_3)<0}$: exact formula \eqref{eq:negative}
(line) and Markov-chain estimates with two standard errors (dots); the dashed line is the limit $\frac12$.
(b) $\frac1m\log|\Delta_m|$ for $d=2,3,5$: mean $\pm$ one standard deviation over $4000$ sequences (bands), and
the almost-sure limits $-\frac12H_{d-1}$ (dashed). The bands shrink like $m^{-\frac12}$. (c) The standardized
$\log|\Delta_{200}|$ of Theorem~\ref{th:longcycle}, $2\times10^4$ samples for each $d$, against
$\mathcal N(0,1)$.}\label{fig:cycles}
\end{figure}

\subsection{The conditional density is elementary in every dimension}

Fix 
$$
S_k=\abs{C_k}^2=s,\quad Z_k=\Delta_k=z,\quad D=S_k-\abs{Z_k}^2=s-\abs{z}^2>0.
$$
Then 
$W:=Z_{k+1}=uz+\sqrt{(s-\abs{z}^2)uv}e^{\mathrm{i}\theta}$. Write its conditional density as
\begin{eqnarray}
\Pr\Set{W\in[\dif w]| (S_k,Z_k)=(s,z)]}=\cH_d(w;s,z)[\dif w]
\end{eqnarray}

Define $B=s-\abs{z}^2+2\Re(w\bar z)$ and $Q(u)=(s-\abs{z}^2)u(1-u)-\abs{w-uz}^2$. Expanding gives 
$$
Q(u)=-su^2+Bu-\abs{w}^2.
$$
Its discriminant is $\cD=B^2-4s\abs{w}^2$, and, when $\cD>0$, its roots are $u_\pm=\frac{B\pm\sqrt{\cD}}{2s}$. For almost every $w$ in the interior of the conditional ellipse. $0<u_-<u+<1$, and $Q(u)>0$ if and only if $u_-<u<u_+$. The support is precisely
$$
E(s,z):=\Set{w\in\bbC: \abs{w}+\abs{w-z}\leqslant \sqrt{s}}.
$$
\begin{itemize}
\item \textbf{Qubits: $d=2$.} Conditional on $U=u$, $W$ is uniform on the circle $\abs{w-uz}^2=(s-\abs{z}^2)u(1-u)$. Its planar distribution can therefore be written $\frac1\pi \delta\Pa{\abs{w-uz}^2-(s-\abs{z}^2)u(1-u)}$. Averaging over $u$ gives 
$$
\cH_2(w;s,z) = \frac1\pi \int^1_0 \delta(Q(u))\dif u.
$$
At the two roots, $\abs{Q'(u_\pm)}=\sqrt{\cD}$. Consequently, 
\begin{eqnarray}\label{eq:H2wsz}
\cH_2(w;s,z)=\frac2{\pi\sqrt{\cD}}\chi_{E^\circ(s,z)}(w).
\end{eqnarray}
Equivalently, $\cD=s(s-\abs{z}^2)(1-R^2(w))$, where $R(w)$ is the affine elliptical radius of $E(s,z)$, so this is exactly the corrected elliptical numerical-shadow density.
\item \textbf{Higher dimensions: $d\geqslant3$.} The joint density of $(U,V)$ is $(d-1)(d-2)(1-u-v)^{d-3}$, where $(u,v)\in\bbR^2_{>0}$ and $u+v<1$. For fixed $u$, use $w=uz+\sqrt{(s-\abs{z}^2)uv}e^{\mathrm{i}\theta}$. The area Jacobian is $\dif w=\frac{(s-\abs{z}^2)u}2\dif v\dif\theta$. Thus the joint density of $(U,W)$ is 
$$
\frac{(d-1)(d-2)}{\pi (s-\abs{z}^2)u}\Pa{1-u-\frac{\abs{w-uz}^2}{(s-\abs{z}^2)u}}^{d-3},
$$
on the region where the parenthesis is positive. Integrating over $u$, we obtain
\begin{eqnarray}
\cH_d(w;s,z) = \frac{(d-1)(d-2)}{\pi (s-\abs{z}^2)^{d-2}} \int^{u_+}_{u_-} \frac{Q(u)^{d-3}}{u^{d-2}}\dif u,\quad d\geqslant3,
\end{eqnarray}
inside $E_{s,z}$, and zero outside. For $d\in\bbN$, this integral is elementary: its integrand is a finite Laurent polynomial in $u$. For example, for $d=3$,
\begin{eqnarray}\label{eq:H3wsz}
\cH_3(w;s,z)=\frac2{\pi (s-\abs{z}^2)}\log \frac{u_+}{u_-} = \frac2{\pi (s-\abs{z}^2)}\log\frac{B+\sqrt{\cD}}{B-\sqrt{\cD}}.
\end{eqnarray}
For $d=4$, 
\begin{eqnarray}\label{eq:H4wsz}
\cH_4(w;s,z)=\frac6{\pi (s-\abs{z}^2)^2}\Br{-s(u_+-u_-) +B\log\frac{u_+}{u_-} + \abs{w}^2\Pa{\frac1{u_+}-\frac1{u_-}}}.
\end{eqnarray}
All these formulas hold almost everywhere; values at isolated singular points can be assigned by limits. 

Put $m=d-3$ and define 
$$
L_\ell (a,b)=\begin{cases}
\frac{b^\ell-a^\ell}{\ell},&\ell \neq0,\\
\log\frac ba,&\ell=0.
\end{cases}
$$
Expanding $Q(u)^m$ gives
\begin{eqnarray}\label{eq:Hdwsz}
\cH_d(w;s,z) = \frac{(d-1)(d-2)}{\pi (s-\abs{z}^2)^{m+1}}\sum_{a,b,c\geqslant0: a+b+c=m} \binom{m}{a,b,c} (-s)^aB^b(-\abs{w}^2)^c L_{2a+b-m}(u_-,u_+).
\end{eqnarray}
This eliminates the additional numerical-shadow integration in the paper's higher-dimensional representation.
\end{itemize}

\subsection{Exact density recursion for arbitrary $n$}

The scalar density $f^{(d)}_k$ alone does not close under iteration: the next conditional law depends on both $(S_k,Z_k)$. We therefore retain the joint law $\mu_k(\dif s,\dif z)=\cL(S_k,Z_k)$. The initial nontrivial law is particularly simple: $S_2=Z_2=U_1$ for $U_1\sim \Beta(1,d-1)$. Hence, for every bounded test function $g$, 
\begin{eqnarray}\label{eq:mun1}
\int g(s,z)\mu_2(\dif s,\dif z)=(d-1)\int^1_0 (1-u)^{d-2}g(u,u)\dif u.
\end{eqnarray}
Notice that $\mu_2$ is singular: it is supported on $s=z=u$.

Let the next state be $(s',w)$, and set $u=\frac{s'}{s}$. From the preceding Jacobian calculation,
\begin{eqnarray}\label{eq:Kds'wsz}
K_d(s',w|s,z)=\frac{(d-1)(d-2)}{\pi s (s-\abs{z}^2)u}\Pa{1-u-\frac{\abs{w-uz}^2}{(s-\abs{z}^2)u}}^{d-3},
\end{eqnarray}
provided $u\in(0,1),\abs{w-uz}^2<(s-\abs{z}^2)u(1-u)$, and $K_d=0$ otherwise. Thus
\begin{eqnarray}\label{eq:muk+1}
\mu_{k+1}(\dif s',\dif w) =\Pa{\int K_d(s',w|s,z)\mu_k(\dif s,\dif z)} \dif s'\dif w.
\end{eqnarray}

For $d=2$, the corresponding distributional kernel is
\begin{eqnarray}\label{eq:K2s'wsz}
K_2(s',w|s,z) = \frac1{\pi s}\delta\Pa{\abs{w-uz}^2 - (s-\abs{z}^2)u(1-u)},\quad u=\tfrac{s'}s\in(0,1).
\end{eqnarray}
It is a normalized transition measure, although not an ordinary density for fixed $(s,z)$.

We can now state the general result.
\begin{thrm}[Exact density representation]\label{th:recursion}
For $d\geqslant2$ and $n\geqslant3$, let $\mu_{n-1}$ be generated from \eqref{eq:mun1} by the transition kernel \eqref{eq:Kds'wsz} or \eqref{eq:K2s'wsz}. Then
\begin{eqnarray}\label{eq:fndw}
f^{(d)}_n(w) = \int \cH_d(w;s,z)\mu_{n-1}(\dif s,\dif z).
\end{eqnarray}
Here $H_d$ is the explicit conditional density \eqref{eq:H2wsz}--\eqref{eq:Hdwsz}.
\end{thrm}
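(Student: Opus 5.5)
The proof is a conditioning argument on the Markov chain of Theorem~\ref{th:markov}. By that theorem, $(S_k,Z_k)_{k\geqslant1}$ is Markov. Its one-step update uses fresh innovations $(U_k,V_k,\Theta_k)$ that are independent of the past, and their law depends only on $d$. Hence $\Delta_n=Z_n$ is obtained from $(S_{n-1},Z_{n-1})$ by one transition, and it suffices to prove three facts:
\begin{enumerate}[(a)]
\item $\mu_2$ is given by \eqref{eq:mun1};
\item the one-step law of $(S_{k+1},Z_{k+1})$ given $(S_k,Z_k)=(s,z)$ has density $K_d(s',w|s,z)$ for $d\geqslant3$, or the distributional kernel \eqref{eq:K2s'wsz} for $d=2$;
\item the $w$-marginal of that one-step law is $\cH_d(w;s,z)$.
\end{enumerate}
Then \eqref{eq:fndw} follows from the tower property $\bbE[g(Z_n)]=\int\bbE[g(Z_n)|S_{n-1}=s,Z_{n-1}=z]\,\mu_{n-1}(\dif s,\dif z)$, applied to bounded test functions $g$. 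The recursion \eqref{eq:muk+1} is the Chapman--Kolmogorov equation, established by induction on $k$ in the same way.

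\emph{Step (a).} From $S_1=Z_1=1$, the transition gives $S_2=U_1$ and $Z_2=U_1+0$, since $S_1-|Z_1|^2=0$. The marginal of $U_1$ under $\op{Dirichlet}(1,1,d-2)$ is $\Beta(1,d-1)$, which gives \eqref{eq:mun1}. For $d=2$ it is uniform, consistent with $(d-1)(1-u)^{d-2}=1$.

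\emph{Step (b).} Fix $(s,z)$ with $D=s-|z|^2>0$. Write $S_{k+1}=su$ and $w=uz+\sqrt{Duv}\,e^{\mathrm{i}\theta}$.
\begin{itemize}
\item For $d\geqslant3$, the map $(u,v,\theta)\mapsto(s',w)$ has Jacobian $\dif s'=s\,\dif u$ and $\dif w=\frac{Du}{2}\dif v\,\dif\theta$.
\item The density of $(u,v,\theta)$ is $(d-1)(d-2)(1-u-v)^{d-3}/(2\pi)$.
\item Substituting $v=|w-uz|^2/(Du)$ and collecting factors gives \eqref{eq:Kds'wsz}.
\item The constraint $v<1-u$ is exactly $|w-uz|^2<Du(1-u)$.
\end{itemize}
For $d=2$, $v=1-u$ is forced, so $w$ is uniform on the circle of radius $\sqrt{Du(1-u)}$ about $uz$. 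The planar identity $\int g(w)\frac1\pi\delta(|w-c|^2-r^2)\,\dif w=\frac1{2\pi}\int_0^{2\pi} g(c+re^{\mathrm{i}\theta})\,\dif\theta$ gives \eqref{eq:K2s'wsz}. The case $D=0$, i.e.\ $\psi_1\parallel\psi_k$, has probability zero for $k\geqslant2$ and can be ignored.

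\emph{Step (c).} Integrate $K_d$ over $s'$, i.e.\ over $u$ with weight $s\,\dif u$. For $d\geqslant3$ this yields $\frac{(d-1)(d-2)}{\pi D^{d-2}}\int u^{-(d-2)}Q(u)^{d-3}\,\dif u$, because $Du(1-u)-|w-uz|^2=Q(u)$. The integral runs over $\{Q>0\}=(u_-,u_+)$, and $Q=-su^2+Bu-|w|^2$ by expansion.

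For $d=2$, the $u$-integral of $\frac1\pi\delta(Q(u))$ uses $|Q'(u_\pm)|=\sqrt{\cD}$ and produces $\frac{2}{\pi\sqrt\cD}$.

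The explicit forms \eqref{eq:H3wsz}--\eqref{eq:Hdwsz} follow by the multinomial expansion of $Q^m$ and termwise integration of $u^{2a+b-m-1}$.

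\emph{Main obstacle.} The hardest part is the support bookkeeping: showing that $0<u_-<u_+<1$ for almost every $w\in E(s,z)^\circ$, and that $\{\cD>0\}$ together with the root conditions is exactly $E(s,z)$. I would first rewrite $\cD=sD(1-R(w)^2)$ by completing the square, and then check that $Q(0)=-|w|^2<0$ and $Q(1)=-|w-z|^2<0$ while the vertex lies in $(0,1)$. A secondary technical point is rigor for the singular kernel when $d=2$ and for the singular initial law $\mu_2$. Both are handled by stating every identity against bounded continuous test functions and applying Fubini, which is legitimate since all kernels are nonnegative and normalized.
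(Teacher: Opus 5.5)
Your proposal is correct and follows the paper's own argument. You condition on the first $n-1$ states and use the Markov structure of $(S_k,Z_k)$ from Theorem~\ref{th:markov}, so that $Z_n$ has conditional density $\cH_d(\cdot;S_{n-1},Z_{n-1})$, then average against $\mu_{n-1}$ and obtain the recursion for $\mu_k$ from the same transition. Your steps (a)--(c) re-derive \eqref{eq:mun1}, the kernels \eqref{eq:Kds'wsz}--\eqref{eq:K2s'wsz} and $\cH_d$, which the paper computes just before the theorem, and your computations agree with it.
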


\begin{proof}
Conditional on the first $n-1$ states, $\Delta_n=\Innerm{\psi_n}{(C_{n-1}\out{\psi_1}{\psi_{n-1}})}{\psi_n}$. Its conditional law depends only on $S_{n-1}=\abs{C_{n-1}}^2,Z_{n-1}=\Delta_{n-1}$, and has density $H_d$. Averaging gives \eqref{eq:fndw}. The Markov transition gives the recursion for $\mu_{n-1}$.
\end{proof}
This is a deterministic integral construction of the full probability density for every finite $n$.

\subsubsection{Three states: recovery and a higher-dimensional simplification}

Since $\mu_2$ is explicitly known,
\begin{eqnarray}\label{eq:f3dw}
f_3^{(d)}(w)=(d-1)\int_0^1(1-u)^{d-2}\cH_d(w;u,u)\dif u.
\end{eqnarray}
Set $w=x+\mathrm{i}y, r=|w|$, and $t=1+2x$. For $s=z=u$,
\begin{eqnarray*}
D=s-\abs{z}^2=u(1-u), \quad B=u(t-u),\quad\cD=u[u(u-t)^2-4r^2].
\end{eqnarray*}
Define $\cI_w=\Set{u\in(0,1):w\in E^\circ(u,u)}$, where $E^\circ(u,u)$ means the interior of $E(u,u)$. For qubits ($d=2$),
\begin{eqnarray}\label{eq:f32w}
f_3^{(2)}(w)=\frac2\pi\int_{\cI_w}\frac{\dif u}{\sqrt{u(u(u-t)^2-4r^2)}}.
\end{eqnarray}
This is the quartic integral evaluated by the complete elliptic integral in Theorem~\ref{th:b3density}. For qutrits ($d=3$), inserting \eqref{eq:H3wsz} into \eqref{eq:f3dw} yields the \emph{single-integral logarithmic formula}
\begin{eqnarray}
f_3^{(3)}(w)=\frac4\pi\int_{\cI_w}\frac{\dif u}{u}\log\frac{u(t-u)+\sqrt{u(u(u-t)^2-4r^2)}}{u(t-u)-\sqrt{u(u(u-t)^2-4r^2)}}.
\end{eqnarray}
More generally, \eqref{eq:Hdwsz} makes \eqref{eq:f3dw} a one-dimensional integral of elementary functions for every integer $d\geqslant3$, rather than the two-dimensional representation given in the document.

\subsubsection{Four states: the next explicit density}

Let $A\sim\Beta(1,d-1)$. Starting from $(S_2,Z_2)=(A,A)$, the next step gives $S_3=AU,
Z_3=AU+\sqrt{A(1-A)UV}e^{\mathrm{i}\Theta}$. Therefore
$$
f_4^{(d)}(w)=\bbE\Br{\cH_d\Pa{w;AU,AU+\sqrt{A(1-A)UV}e^{\mathrm{i}\Theta}}}.
$$
For qubits, $V=1-U$, and this becomes
\begin{eqnarray}
f_4^{(2)}(w)=\int_0^1 \dif a\int_0^1\dif u\int_0^{2\pi}\frac{d\theta}{2\pi}\cH_2\Pa{w;au,au+\sqrt{a(1-a)u(1-u)}\,e^{\mathrm{i}\theta}}.
\end{eqnarray}
For $d\geqslant3$,
\begin{eqnarray}
f_4^{(d)}(w)
&=&(d-1)^2(d-2)
\int_0^1(1-a)^{d-2}\dif a\int_0^1du\notag\\
&&\times\int_0^{1-u}(1-u-v)^{d-3}\dif v\int_0^{2\pi}\frac{d\theta}{2\pi}\cH_d(w;au,au+\sqrt{a(1-a)uv}e^{i\theta}).
\end{eqnarray}
The arbitrary-$n$ construction continues in exactly the same way through \eqref{eq:muk+1}-\eqref{eq:fndw}. The support obeys the exact recursive description
\begin{eqnarray}
\cB_n = \overline{\bigcup_{(s,z)\in\op{supp}\mu_{n-1}}E(s,z)}.
\end{eqnarray}

%================================================================%
\section{Concluding remarks}\label{sect:conclusion}
%================================================================%

The laws derived here rest on two facts: the diagonal of an induced random state is Dirichlet distributed,
and a Haar-random vector reads a $2\times2$ or rank-one operator as a projected sphere. The first turns
expectation values into Curry--Schoenberg splines with repeated knots and moment vectors into Dirichlet laws on
a simplex; it also exposes the asymmetry that separates the mode of the qutrit overlap from its mean. The second
gives the qubit uncertainty laws, including the exponents $K-\frac12$ and $K$ that count the dimension of the set of
optimal states, and it reduces Bargmann invariants of any length to a three-parameter Markov chain whose long-cycle
behaviour is governed by sums of independent logarithms of Beta variables.

We close with three open problems.
\begin{itemize}
\item \textbf{Uniqueness of the mode.} Numerically $p_3$ is unimodal with mode $0.32328\ldots$. A proof requires
the sign of $p_3'$ on all of $(0,1)$, for which the representation \eqref{eq:p3rep} on the three ranges
$(0,\frac13)$, $(\frac13,\frac12)$, $(\frac12,1)$ is a starting point. The same question for $d\geqslant4$ and for other
values of $K$ is open.
\item \textbf{Bargmann densities in higher dimension.} For $d\geqslant3$, the density of $\Delta_3$ is given by the one-dimensional integral \eqref{eq:f3dw} while for general $n$ it is given by the two-dimensional representation \eqref{eq:fndw}. Its boundary behaviour, which by Proposition~\ref{prop:local} is a positive constant for $d=2$, and its
singularity at the origin are not known for $d\geqslant3$.
\item \textbf{Noncommuting qutrit observables.} For a pair of spin-one components the moment body
$\cC_{\bsA,\bsB}$ has lower dimension than four, and the law of $(\Delta_\rho(\bsA),\Delta_\rho(\bsB))$ must be computed with respect to
the natural measure on its affine hull. An exact law for such a pair would extend Section~\ref{sect:uncertainty}
beyond qubits.
\end{itemize}

%\subsection*{Acknowledgments}

%This research is supported by Zhejiang Provincial Natural Science Foundation of China under Grant No.
%LZ23A010005 and by NSFC under Grant No.11971140.

\subsection*{Data availability statement}

No experimental data were created or analyzed in this study. The code that evaluates the closed forms, checks the
theorems numerically and generates the figures is available from the author upon request.

\subsection*{Declaration on the use of generative AI}

The author designed the study, carried out the mathematical analysis, and prepared the initial manuscript.
Generative-AI tools were used for language refinement and assistance with computational checks. All mathematical
statements and computations were independently reviewed and verified by the author, who takes full responsibility
for the content of the manuscript.

%------------------------------------------------------------%
\newpage
\appendix
\appendixpage
\addappheadtotoc
%------------------------------------------------------------%

\section{Divided difference and its properties}\label{app:divdiff}

The divided difference is a functional, a core object in approximation theory, interpolation, and numerical analysis. Let me lay out the standard definition and its key properties.
\begin{definition}
For a function $f$ and distinct real points $t_1,\ldots,t_N$, the \emph{divided difference} is defined recursively: $[t_1]f=f(t_1)$, and
\begin{eqnarray}
[t_1,\ldots, t_n]f = \frac{[t_2,\ldots,t_n]f - [t_1,\ldots,t_{n-1}]f}{t_n-t_1},\quad n\geqslant2.
\end{eqnarray}
Equivalently, it is the leading coefficient of the interpolating polynomial of degree $\leqslant N-1$ through the points $(t_i,f(t_i))$.
\end{definition}
When arguments coincide, the definition is extended by continuity using derivatives. The rule is:
$$
[\underbrace{t,\ldots,t}_{m+1}]f = \frac{f^{(m)}(t)}{m!}
$$
More generally, if a point $t$ is repeated $m+1$ times, the divided difference incorporates the $m$-th derivative at $t$. The recursion still holds in the limit, e.g.,
$$
[t,t]f=\lim_{s\to t}\frac{f(t) - f(s)}{t-s} =f'(t),\qquad [t,t,t]f=\frac{f''(t)}{2!},
$$
and so on. Mixed cases work the same way: any block of repeated arguments is handled by the corresponding derivative, and the recursion propagates.

If all arguments lie in an interval and $f$ is sufficiently smooth, then for distinct points:
\begin{eqnarray}
[t_1,\ldots,t_N]f = \int_{\bbP_{N-1}} f^{(N-1)}\Pa{t_1 + \sum^N_{i=2}(t_i-t_1)u_{i-1}}\prod^N_{k=2}\dif u_k,
\end{eqnarray}
where the integral (Hermite--Genocchi integral representation) is over the standard $(N-1)$-simplex $\bbP_{N-1}$. This ties directly back to the simplex picture from before. When arguments repeat, the formula extends by continuity (and the domain of integration degenerates accordingly). The Hermite--Genocchi integral representation shows that divided differences are essentially averages of derivatives over a simplex. 

The key properties of the divided difference are listed below:
\begin{itemize}
\item \textbf{Symmetry:} $[t_1,\ldots,t_N]f$ is invariant under permutation of the $t_i$.
\item \textbf{Linearity:} It is linear in $f$.
\item \textbf{Annihilation of polynomials:} If $f$ is a polynomial of degree $<N-1$, then $[t_1,\ldots,t_N]f=0$. For degree exactly $N-1$, it equals the leading coefficient.
\item \textbf{Mean value theorem:} For distinct $t_i\in[a,b]$ and $f\in C^{N-1}$, there exists $\xi\in(a,b)$ with 
$$
[t_1,\ldots,t_N]f = \frac{f^{(N-1)}(\xi)}{(N-1)!}.
$$
\item \textbf{Newton form:} The interpolating polynomial can be written
$$
P(\bst) = \sum^N_{n=1}[t_1,\ldots,t_n]f \prod^{n-1}_{i=1}(t-t_i).
$$
\end{itemize}

\section{Barycentric coordinates on the moment simplex}\label{app:bary}

Let $a_1<\cdots<a_d$ and define
$$
r_k=\sum^d_{j=1}a_j^kx_j, \qquad k=1,\ldots,d-1,
$$
where $r_0=\sum^d_{j=1}x_j=1$. In matrix form,
\begin{eqnarray*}
\begin{pmatrix}
r_0\\
r_1\\
\vdots\\
r_{d-1}
\end{pmatrix}
=
\begin{pmatrix}
1&1&\cdots&1\\
a_1&a_2&\cdots&a_d\\
\vdots&\vdots&&\vdots\\
a_1^{d-1}&a_2^{d-1}&\cdots&a_d^{d-1}
\end{pmatrix}
\begin{pmatrix}
x_1\\
x_2\\
\vdots\\
x_d
\end{pmatrix}.
\end{eqnarray*}
The determinant of the Vandermonde matrix is
$V_d(\bsa)=\prod_{i<j}(a_j-a_i)$. For
$\hat k=\set{1,\ldots,d}\setminus\set{k}$, define
\begin{eqnarray*}
\phi_k(\bsr) =\phi_k(r_1,\ldots,r_{d-1})= \sum_{j=0}^{d-1} (-1)^{d-1-j} e_{d-1-j}(a_{\hat k})r_j.
\end{eqnarray*}
The polynomial $p_k(t) = \prod_{\ell\neq k}(t-a_\ell)$ has the
expansion
\begin{eqnarray*}
p_k(t) = \sum_{j=0}^{d-1} (-1)^{d-1-j} e_{d-1-j}(a_{\hat k})t^j.
\end{eqnarray*}
Therefore,
\begin{eqnarray*}
\phi_k(\bsr) = \sum_{j=0}^{d-1} (-1)^{d-1-j} e_{d-1-j}(a_{\hat k})
\sum_{\ell=1}^d a^j_\ell x_\ell= \sum^d_{\ell=1} x_\ell p_k(a_\ell).
\end{eqnarray*}
Since $p_k(a_\ell)=0$ for $\ell\neq k$, and $p_k(a_k) =
\prod_{\ell\neq k}(a_k-a_{\ell})$, we obtain
\begin{eqnarray*}
x_k = \frac{\phi_k(\bsr)}{\prod_{\ell\neq k}(a_k-a_\ell)}.
\end{eqnarray*}
which is \eqref{eq:bary}. The product of the absolute values of the denominators is
\begin{eqnarray*}
\prod^d_{k=1} \Abs{\prod_{\ell\ne k}(a_k-a_\ell)} =
\abs{V_d(\bsa)}^2.
\end{eqnarray*}
For a symmetric Dirichlet distribution with parameter $K$, this
implies
\begin{eqnarray*}
\prod^d_{k=1}x^{K-1}_k =
\frac{\prod_{k=1}^d\abs{\phi_k(\bsr)}^{K-1}}{\abs{V_d(\bsa)}^{2(K-1)}}
\end{eqnarray*}
on the moment simplex, with the signs already encoded by the
nonnegative barycentric coordinates. The affine Jacobian from
$(x_1,\ldots,x_{d-1})$ to $(r_1,\ldots,r_{d-1})$ has magnitude
$\abs{V_d(\bsa)}$. Hence the fully expanded denominator in the joint
moment density is $\abs{V_d(\bsa)}^{2(K-1)}$.

\section{The modified Bessel function of the second kind}\label{app:2ndBessel}

The \emph{modified Bessel function of the second kind} is a standard special function, usually denoted $K_\nu(x)$, where $\nu$ is its order and $x>0$ is the argument. It is one of the two standard solutions to the \emph{modified Bessel differential equation}:
$$
x^2\frac{\dif^2 y}{\dif x^2} + x\frac{\dif y}{\dif x} - (x^2+\nu^2)y=0.
$$
The other independent solution is the modified Bessel function of the first kind, $I_\nu(x)$. Because $I_\nu$ and $I_{-\nu}$ are not always independent, $K_\nu$ is introduced as a second solution.

\begin{definition}
A common definition is
$$
K_\nu(x)=\frac{\pi}{2}\frac{I_{-\nu}(x)-I_\nu(x)}{\sin(\nu \pi)},
$$
for non-integer $\nu$. For integer order $n$, this is interpreted as a limit:
$$
K_n(x)=\lim_{\nu\to n} K_\nu(x).
$$
\end{definition}
There are also integral representations, for example
\begin{eqnarray}\label{appeq:Knu}
K_\nu(x)=\int_0^\infty e^{-x\cosh t}\cosh(\nu t)\dif t,
\qquad x>0.
\end{eqnarray}
This form makes it clear that $K_\nu(x)$ decays exponentially for large $x$.

\paragraph{Key properties}
\begin{itemize}
\item Parity/order symmetry: $ K_{-\nu}(x)=K_\nu(x)$.
\item Asymptotic behavior for large $x$:  $K_\nu(x)\sim \sqrt{\frac{\pi}{2x}}e^{-x}  (1+\frac{4\nu^2-1}{8x}+\cdots)$.
\item Behavior near $x=0$:   For $\nu>0$,
$K_\nu(x)\sim \frac{1}{2}\Gamma(\nu)\left(\frac{x}{2}\right)^{-\nu}$. For $\nu=0$,
$K_0(x)\sim -\ln(\tfrac{x}{2})-\gamma$, where $\gamma$ is the Euler--Mascheroni constant.
\item Recurrence relations:
\begin{eqnarray*}
K_{\nu-1}(x)-K_{\nu+1}(x)&=&-\frac{2\nu}{x}K_\nu(x),\\
K_{\nu-1}(x)+K_{\nu+1}(x)&=&-2K_\nu'(x).
\end{eqnarray*}
\item Derivative: $\frac{\dif}{\dif x}K_\nu(x)=-\frac{1}{2}(K_{\nu-1}(x)+K_{\nu+1}(x))$.
\end{itemize}

%------------------------------------------------------------%

\end{document}

%% file: qudit_preamble.tex
\usepackage{young}
\usepackage[latin1]{inputenc}
\usepackage{appendix}
\usepackage{epstopdf}
\usepackage{xcolor}
\usepackage{subfigure}
\usepackage[T1]{fontenc}
\usepackage[sc]{mathpazo}
\usepackage{amsmath}
\usepackage{amssymb}
\usepackage{graphicx,color}
\usepackage{framed}
\usepackage{multirow}
\usepackage{enumerate}
\usepackage{amsthm}% theorems and proofs
\usepackage{amsfonts,mathrsfs}
\usepackage{geometry} % allows easy specifying of the page layout
\usepackage{eepic}
\usepackage{ifthen}
\usepackage[vcentermath]{youngtab}
\usepackage[unicode=true,pdfusetitle, bookmarks=true,bookmarksnumbered=false,bookmarksopen=false, breaklinks=false,pdfborder={0 0 0},backref=false,colorlinks=false] {hyperref}
\hypersetup{
colorlinks,linkcolor=myurlcolor,citecolor=myurlcolor,urlcolor=myurlcolor}
\definecolor{myurlcolor}{rgb}{0,0,0.7}% Color highlighted for numbers in citations and theorems

\usepackage{color}

\newcommand{\blue}{\textcolor{blue}}

\newcommand{\proj}[1]{| #1\rangle\!\langle #1 |}

\usepackage{hyperref}
\hypersetup{pdfpagemode=UseNone}

\newcommand{\tinyspace}{\mspace{1mu}}

\newcommand{\op}[1]{\operatorname{#1}}

\newcommand{\abs}[1]{\left\lvert\tinyspace #1 \tinyspace\right\rvert}

\newcommand{\norm}[1]{\left\lVert\tinyspace #1 \tinyspace\right\rVert}

\renewcommand{\det}{\operatorname{det}}
\renewcommand{\t}{{\scriptscriptstyle\mathsf{T}}}

\newcommand{\spn}{\op{span}}
\newcommand{\sign}{\op{sign}}

\def\haar{\mathrm{Haar}}

\def \dif {\mathrm{d}}
\def \diag {\mathrm{diag}}

\def \conv {\mathrm{conv}}

\def\complex{\mathbb{C}}

\def\I{\mathbb{1}}

\def\bdlambda{\boldsymbol{\lambda}}

\def\bdsigma{\boldsymbol{\sigma}}

\newenvironment{mylist}[1]{\begin{list}{}{
    \setlength{\leftmargin}{#1}
    \setlength{\rightmargin}{0mm}
    \setlength{\labelsep}{2mm}
    \setlength{\labelwidth}{8mm}
    \setlength{\itemsep}{0mm}}}
    {\end{list}}

\def\ot{\otimes}

\newcommand{\inner}[2]{\langle #1 , #2\rangle}

\newcommand{\out}[2]{| #1\rangle\langle #2 |}
\newcommand{\Inner}[2]{\left\langle #1 , #2\right\rangle}
\newcommand{\Innerm}[3]{\left\langle #1 \left| #2 \right| #3 \right\rangle}

\newcommand{\Herm}{\mathrm{Herm}}

\newcommand{\pa}[1]{(#1)}
\newcommand{\Pa}[1]{\left(#1\right)}

\newcommand{\Br}[1]{\left[#1\right]}
\newcommand{\set}[1]{\{#1\}}
\newcommand{\Set}[1]{\left\{#1\right\}}

\newcommand{\ket}[1]{|#1\rangle}

\DeclareMathOperator{\trace}{Tr}
\newcommand{\ptr}[2]{\trace_{#1}\pa{#2}}
\newcommand{\Ptr}[2]{\trace_{#1}\Pa{#2}}

\newcommand{\Tr}[1]{\Ptr{}{#1}}

\newcommand{\Abs}[1]{\left|\tinyspace#1\tinyspace\right|}

\def\cB{\mathcal{B}}\def\cC{\mathcal{C}}\def\cD{\mathcal{D}}\def\cE{\mathcal{E}}
\def\cF{\mathcal{F}}\def\cH{\mathcal{H}}\def\cI{\mathcal{I}}
\def\cL{\mathcal{L}}\def\cN{\mathcal{N}}

\def\cU{\mathcal{U}}

\def\bsA{\boldsymbol{A}}\def\bsB{\boldsymbol{B}}\def\bsE{\boldsymbol{E}}
\def\bsF{\boldsymbol{F}}\def\bsG{\boldsymbol{G}}\def\bsH{\boldsymbol{H}}\def\bsJ{\boldsymbol{J}}
\def\bsM{\boldsymbol{M}}\def\bsO{\boldsymbol{O}}
\def\bsP{\boldsymbol{P}}
\def\bsU{\boldsymbol{U}}\def\bsV{\boldsymbol{V}}\def\bsW{\boldsymbol{W}}\def\bsX{\boldsymbol{X}}\def\bsY{\boldsymbol{Y}}
\def\bsZ{\boldsymbol{Z}}
\def\bsa{\boldsymbol{a}}\def\bsb{\boldsymbol{b}}\def\bse{\boldsymbol{e}}
\def\bsg{\boldsymbol{g}}

\def\bsr{\boldsymbol{r}}\def\bst{\boldsymbol{t}}
\def\bsu{\boldsymbol{u}}\def\bsv{\boldsymbol{v}}\def\bsw{\boldsymbol{w}}\def\bsx{\boldsymbol{x}}
\def\bsz{\boldsymbol{z}}

\def\rB{\mathrm{B}}\def\rD{\mathrm{D}}\def\rE{\mathrm{E}}

\def\rU{\mathrm{U}}

\def\sE{\mathscr{E}}

\def\sL{\mathscr{L}}

\def\L{\textsf{L}}

\def\bbC{\mathbb{C}}\def\bbE{\mathbb{E}}
                  
\def\bbN{\mathbb{N}}
\def\bbP{\mathbb{P}}\def\bbR{\mathbb{R}}

\newtheorem{thrm}{Theorem}[section]
\newtheorem{lem}[thrm]{Lemma}
\newtheorem{prop}[thrm]{Proposition}
\newtheorem{cor}[thrm]{Corollary}
\theoremstyle{definition}
\newtheorem{definition}[thrm]{Definition}

\newtheorem{exam}[thrm]{Example}

\numberwithin{equation}{section}

\newcounter{questionnumber}